\documentclass{article}
\usepackage[utf8]{inputenc}
\usepackage[T1]{fontenc}
\usepackage{amsmath}
\usepackage{dsfont}
\usepackage{fancyhdr}
\usepackage[table,xcdraw]{xcolor}
\usepackage{graphicx}
\usepackage{multirow}
\usepackage{endnotes}
\usepackage{float}
\usepackage[nottoc]{tocbibind}
\usepackage{vmargin}
\usepackage{amssymb}
\usepackage{subfigure}
\usepackage{import}
\usepackage{wrapfig}
\usepackage{amsthm}
\usepackage{rotating}
\usepackage[title]{appendix}
\usepackage{listings}

\usepackage{soul}

\usepackage{stmaryrd}
\usepackage{tikz}
\usetikzlibrary{arrows.meta, positioning}

\usepackage{mleftright}
\mleftright
\usetikzlibrary{matrix}
\usepackage{tikz}
\usetikzlibrary{shapes,arrows,chains, decorations.pathmorphing, decorations.pathreplacing}
\tikzset{quantum/.style={decorate, decoration=snake}}

\newcommand{\ket}[1]{|#1\rangle}

\newcommand{\ketbra}[2]{|#1\rangle\langle#2|}

\newcommand{\abs}[1]{\lvert #1\rvert}

\newcommand{\norm}[1]{\| #1\|}

\newcommand{\QPVBB}{$\mathrm{QPV}_{\mathrm{BB84}}$}

\newcommand{\QPVBBf}{$\mathrm{QPV}_{\mathrm{BB84}}^{f}$}
\newcommand{\cQPVBBf}{$\textsf{c}$-$\mathrm{QPV}_{\mathrm{BB84}}^{f}$}

\newcommand{\QPVBBfparallel}{$\mathrm{QPV}_{\mathrm{BB84}}^{f: n\rightarrow m}$}

\newcommand{\QPVBBfparallelotimes}{$\mathrm{QPV}_{\mathrm{BB84}}^{g^{\otimes m}}$}
\newcommand{\QPVBBfparallelotimesk}{$\mathrm{QPV}_{\mathrm{BB84}}^{g^{\otimes k}}$}

\newcommand{\QPVBBfparallelotimesc}{$\mathrm{QPV}_{\mathrm{BB84}}^{g^{\otimes \abs{c}}}$}

\newcommand{\cQPVBBfparallelotimes}{$\mathsf{c}$-$\mathrm{QPV}_{\mathrm{BB84}}^{g^{\otimes m}}$}

\newcommand{\cQPVBBfparallel}{$\mathsf{c}$-$\mathrm{QPV}_{\mathrm{BB84}}^{f: n\rightarrow m}$}

\newcommand{\fCQPVBBfparallel}{$\mathrm{QPV}_{\mathrm{BB84}}^{f_C}$}

\newcommand{\tr}[1]{\mathrm{Tr}\left[#1\right]}

\usetikzlibrary{decorations.pathreplacing,calligraphy}
\usetikzlibrary{matrix}
\usepackage{tikz}
\usetikzlibrary{shapes,arrows,chains, decorations.pathmorphing, decorations.pathreplacing}
\tikzset{quantum/.style={decorate, decoration=snake}}

\newcommand{\diagdots}[3][-25]{%
  \rotatebox{#1}{\makebox[0pt]{\makebox[#2]{\xleaders\hbox{$\cdot$\hskip#3}\hfill\kern0pt}}}%
}

\usepackage{slashed}
\usepackage{enumitem}

\usepackage{authblk}

\usepackage{tcolorbox} 
\usepackage{array}     
\usepackage{colortbl}  
\usepackage{lipsum}

\usepackage{MnSymbol}

\definecolor{secondaryColor}{HTML}{5869bc}
\usepackage{url}
\usepackage[normalem]{ulem}

\usepackage{hyperref}
\hypersetup{
  colorlinks=true,
  allcolors=secondaryColor
}

\usepackage{amsthm}

\newtheorem{theorem}{Theorem}[section]
\newtheorem{lemma}{Lemma}[section]

\newtheorem{definition}{Definition}[section]

\usepackage{cleveref}

\crefname{figure}{Fig.}{Figs.}
\crefname{table}{Table}{Tables}
\crefname{section}{Section}{Sections}

\crefname{theorem}{Theorem}{Theorems}
\crefname{lemma}{Lemma}{Lemmas}
\crefname{prop}{Proposition}{Propositions}
\crefname{corollary}{Corollary}{Corollaries}
\crefname{observation}{Observation}{Observations}
\crefname{remark}{Remark}{Remarks}
\crefname{definition}{Definition}{Definitions}
\crefname{ex}{Example}{Examples}
\crefname{result}{Result}{Results}
\crefname{attack}{Attack}{Attacks}
\usepackage[table,xcdraw]{xcolor} 
\usepackage{amssymb}
\usepackage{pifont}

\usetikzlibrary{calc}
\newcommand{\Verifzero}{\textsf{V}$_0$}
\newcommand{\Verifone}{\textsf{V}$_1$}
\newcommand{\prover}{\textsf{P}}

\usepackage{tabularx}

\begin{document}
\title{Arbitrarily Loss-Tolerant Quantum Position Verification in a Single Execution}

\author[1]{Lloren\c{c} Escol\`a-Farr\`as}
\author[1]{{Boris \v{S}kori\'{c}}}
\author[2,3]{Florian Speelman}
\newcommand{\lle}[1]{{\color{blue}#1}}
\newcommand{\fs}[1]{{\textcolor{red}{[Florian: #1]}}}
\newcommand{\bs}[1]{{\textcolor{purple}{[Boris: #1]}}}

\affil[1]{Technische Universiteit Eindhoven, Eindhoven, The Netherlands }
\affil[2]{QuSoft, CWI Amsterdam, The Netherlands}
\affil[3]{University of Amsterdam, Amsterdam, The Netherlands}

\renewcommand\Affilfont{\itshape\small}
\maketitle
\vspace{-1em}
\begin{abstract}
Quantum position verification (QPV) seeks to certify the spatial location of an untrusted prover, but is challenged fundamentally by entanglement-based attacks and experimentally by photon loss. Both issues were addressed separately in different works and were simultaneously resolved for sequentially repeated protocols in \textit{Phys.\ Rev.\ Lett.}\ \textbf{135},~260801 via a commitment-based modification that renders security independent of transmission losses. However, single-execution protocols are preferable in practice, and the original techniques do not extend to the parallel setting due to their reliance on sequential structure. We overcome this by utilizing different techniques based on no-signalling correlations, lifting the commitment modification to the parallel regime while preserving the security guarantees of the underlying QPV protocol. Applying this to a BB84-based QPV protocol suitable for near-term implementation and secure against bounded-entanglement adversaries, we prove that fixing a threshold~$k$ on the number of successfully committed qubits yields an adversarial acceptance probability that decays exponentially in~$k$. The resulting protocol maintains robustness to noise levels of up to~$3.7\%$ and remains secure under arbitrarily slow quantum communication, as does the original protocol. This yields the first fully loss-tolerant single-shot QPV protocol secure against entangled attackers, making QPV feasible over arbitrary distances. Finally, we refine the sequential analysis and obtain improved quantitative parameters for experimental implementations.

\end{abstract}

\section{Introduction}

Quantum position verification (QPV) combines relativistic signalling constraints with quantum information to certify the spatial location of an untrusted prover \prover{} at a claimed position $pos$. In a one-dimensional setting, two verifiers \Verifzero{} and \Verifone{} send information toward the claimed location and accept only if the prover returns correct answers within the time window consistent with being located at $pos$. Quantum information is essential for this task: any protocol relying solely on classical information can be broken by adversaries who copy and redistribute all transmitted information~\cite{OriginalPositionBasedCryptChandran2009}, whereas the no-cloning theorem prevents such a strategy for arbitrary quantum states.

Despite this no-cloning advantage, unconditional security of QPV is impossible. Adversaries sharing arbitrarily large amounts of entanglement can simulate an honest prover and break any QPV protocol~\cite{Buhrman_2014,Beigi_2011}. Consequently, much of the literature has focused on restricted—but still powerful—adversarial models, most commonly by bounding the amount of pre-shared entanglement~\cite{PatentKentANdOthers,OriginalQPV_Kent2011,Lau_2011,https://doi.org/10.48550/arxiv.1504.07171,Chakraborty_2015,speelman2016instantaneous,dolev2019constraining,dolev2022non,gonzales2019bounds,cree2022code,bluhm2022single,gao2016quantum,escolàfarràs2024quantumcloninggameapplications,Unruh_2014_QPV_random_oracle,liu2021beating,amer2024certified,girish2026privateproofs,bluhm2026complexitytheorynonlocalquantum}. When the number of pre-shared entangled qubits is bounded (bounded entanglement) by $Q$, this setting is referred to as the BE$(Q)$ model. Moreover, QPV has recently progressed beyond a purely theoretical primitive, with the first experimental demonstrations reported in laboratory settings~\cite{loeffler2025towards,kavuri2026quantumpositionverificationremote,fanyuan2026relativisticpositionverificationcoherent}.

Among the most studied protocols is the BB84-based scheme \QPVBB{}~\cite{PatentKentANdOthers,OriginalQPV_Kent2011}. 
In this protocol, one verifier sends a BB84 state while the other sends the corresponding basis information, timed to arrive simultaneously at the claimed location. 
The prover measures the received qubit in the specified basis and returns the outcome to the verifiers, who sequentially repeat the procedure to make a statistically reliable decision. While \QPVBB{} is secure against adversaries without pre-shared entanglement~\cite{Buhrman_2014,TomamichelMonogamyGame2013}, it admits a perfect attack if adversaries pre-share a single EPR pair. However, a modified version introducing classical information of size $n$, \QPVBBf{}, was shown to remain secure even against attackers sharing a linear amount of entanglement in $n$. The
\QPVBBf{} protocol
replaces the single basis bit with classical strings \(x,y\in\{0,1\}^n\) and derives the measurement basis as \(z=f(x,y)\)~\cite{Buhrman_2013,bluhm2022single}. 
More recently, the protocol \QPVBBfparallel{}, where all the BB84 states are sent in parallel, was shown to be the first single-execution QPV scheme with negligible accepting probability against adversaries sharing a linear amount (in the size of the classical information) of bounded entanglement~\cite{escolàfarràs2025quantumpositionverificationshot}. Furthermore, similarly as for \QPVBBf, its security does not require quantum information to propagate at the speed of light, making it attractive for implementations in quantum networks. From an implementation perspective, single-execution protocols are particularly attractive. They also offer practical advantages when QPV is employed as a subroutine in larger cryptographic constructions, such as quantum key distribution~\cite{kon2025quantumsecurekeyexchange} and message authentication~\cite{Buhrman_2014}.

However, from the implementation point of view, photon loss remains one of the main obstacles to the practical deployment of QPV. In optical fibers, transmission loss increases exponentially with distance, and over sufficiently long distances only a small fraction of the transmitted qubits are successfully received. Most QPV protocols, including \QPVBB{}, \QPVBBf{}, and \QPVBBfparallel{}, have been analyzed under the idealized assumption that all quantum systems are successfully transmitted to the prover. This assumption is unrealistic in practice, since the following trivial lossy attack applies to all of these protocols: an adversary intercepts and measures each qubit in a randomly chosen basis and only provides an answer when the chosen basis is correct, declaring photon loss otherwise. Consequently, these protocols become fundamentally insecure whenever the transmission rate falls below 50\%.

A variety of approaches have been developed to address this challenge. Early protocols achieved tolerance to a constant fraction of loss, but required quantum communication at the speed of light and became insecure in the presence of even minimal amounts of pre-shared entanglement~\cite{OtherExtentionBB84Qi_Siopsis2015,FlorianThesis}. Later works showed that security against bounded-entanglement adversaries can be maintained without requiring fast quantum communication for constant fractions of loss~\cite{Escol_Farr_s_2023}, making the protocol suitable for relatively short distances but limiting at longer ranges. Subsequently,~\cite{allerstorfer2023makingexistingquantumposition} introduced a different approach: before learning the classical information that determines the measurement basis, the prover must commit to whether the quantum system has arrived. This modification, applied to \QPVBBf{}, yields the protocol \cQPVBBf{}, which retains the security guarantees of the original protocol against bounded-entanglement adversaries while additionally tolerating arbitrarily large amounts of loss.

The security proof of~\cite{allerstorfer2023makingexistingquantumposition}, however, relies on the sequential structure of the protocol and does not extend to the parallel setting and cannot be directly applied to \QPVBBf. In the sequential case, adversaries must repeatedly produce (non-locally) bit commitments and are detected if their commitments disagree in any round, effectively enforcing a high probability of per-round consistency. In a single-execution protocol, by contrast, the verifiers observe only one commitment string, and the analysis must account for events that occur with small but non-negligible probability. Consequently, techniques based on sequential repetition do not establish security for the parallel commitment version of \QPVBBfparallel. On the other hand, single-execution protocols have been developed that achieve tolerance to arbitrary loss~\cite{SWAP_protocol_Rene_et_all}, but are insecure in the presence of minimal pre-shared entanglement and require quantum communication at the speed of light, which limits their applicability. 

In this work, we bridge this gap. We show that the commitment paradigm introduced in~\cite{allerstorfer2023makingexistingquantumposition} can be adapted to \QPVBBfparallel{}, while retaining the security guarantees of the original protocol. We denote the resulting commitment-based protocol by \cQPVBBfparallel{}. This yields the first QPV protocol that simultaneously achieves (i)~negligible soundness in a single execution, 
(ii)~security against adversaries sharing an amount of entanglement linear in the size of the classical information, (iii)~no requirement that quantum information propagates at the speed of light, and (iv)~tolerance to arbitrarily large transmission losses, while remaining robust against noise levels of up to $3.7\%$ per qubit. 
We prove that the adversarial acceptance probability decays exponentially in a threshold parameter $k$, corresponding to the minimum number of qubits required for a conclusive execution. 
While our analysis is carried out for \cQPVBBfparallel{} reducing the security to the security of its original protocol, the techniques are more general. In particular, the proof relies on no-signalling constraints and does not depend on any specific structure of the underlying protocol. As a consequence, it can be applied to a broader class of QPV protocols that use multiple qubits beyond the particular case of \QPVBBfparallel{}.

Finally, we show that the techniques used in this work also improve the parameters of the sequential commitment protocol introduced in~\cite{allerstorfer2023makingexistingquantumposition}. In original construction, a security parameter must be fixed and the protocol is required to accumulate a polynomial number (in this parameter) of conclusive rounds to obtain the security guarantees. Since conclusive events occur probabilistically, the number of rounds required to verify the location becomes a random variable, leading to executions whose duration is not fixed in advance. In contrast, our approach fixes, before the protocol begins, both the total number of rounds \(m\) and the minimum number \(k\) of conclusive rounds required for acceptance. This eliminates the need to wait for a polynomial number of successful events and yields a direct security bound in which the adversarial acceptance probability decays exponentially in \(k\). We recover the same security guarantees obtained in~\cite{allerstorfer2023makingexistingquantumposition} for i.i.d. adversaries while extending them to adaptive adversaries. As a result, our analysis leads to improved quantitative parameters for the experimental implementation of sequential repetition.

\subsection{Results}

Let \(h(\cdot)\) denote the binary entropy function (base 2). For an error parameter \(\gamma\in[0,1/2)\), define
\begin{equation}\label{eq:lambda_gamma}
    \lambda_\gamma
:=
2^{h(\gamma)}
\left(\frac12+\frac{1}{2\sqrt2}\right).
\end{equation}
Note that \(\lambda_\gamma<1\) whenever \(\gamma<0.037\). The security of the parallel protocol \QPVBBfparallel{} is governed by the parameter \(\lambda_\gamma\). In particular, it was shown in~\cite{escolàfarràs2025quantumpositionverificationshot} that there exist functions
\(
{f:\{0,1\}^{n}\times\{0,1\}^{n}\to\{0,1\}^m}
\)
such that, against adversaries in the BE(\(O(n)\)) model,
\[
\Pr[\textnormal{accept  \QPVBBfparallel}]
\lesssim
\lambda_\gamma^m.
\]
Thus, the adversarial acceptance probability decreases exponentially in the number \(m\) of qubits used by the protocol.

Our main result shows that a comparable security guarantee can be obtained in the presence of arbitrary photon loss. Rather than depending on the total number of transmitted qubits, security is determined by a threshold parameter \(k\), corresponding to the minimum number of qubits that must be successfully received for the execution to be considered conclusive.

\begin{theorem}[Informal]
There exist functions \(f\) such that the commitment protocol \cQPVBBfparallel{} is secure against adversaries in the BE(\(O(n)\)) model. More precisely, if the protocol requires at least \(k\) successfully detected qubits, then
\[
\Pr[\mathrm{accept}\,\textnormal{\cQPVBBfparallel}]
\lesssim
2(\lambda_\gamma)^{k/2}.
\]
\end{theorem}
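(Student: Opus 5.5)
The plan is to reduce an attack on \cQPVBBfparallel{} to an attack on a lossless parallel protocol \QPVBBfparallelotimesk{} (or \QPVBBfparallel{} restricted to $k$ positions), whose acceptance probability against BE$(O(n))$ adversaries is bounded by $\lambda_\gamma^{k}$ via~\cite{escolàfarràs2025quantumpositionverificationshot}. The key point is that the commitment string $c\in\{0,1\}^m$, which marks the qubits declared as received, must be produced by the attackers \emph{before} either of them learns the other's classical input $x$ or $y$. Hence the commitment is fixed in the first (pre-communication) round of the attack. I would write an arbitrary attack as a first local operation by each attacker, followed by one round of simultaneous communication and a final local operation. Since both attackers must announce the same $c$ (otherwise the verifiers reject), I would model the distribution of $c$ through the no-signalling correlations produced by the first round: each attacker's marginal distribution of $c$ cannot depend on the other party's input.

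The first step is to make this precise. Let $c_A(x)$ and $c_B(y)$ be the commitments announced by Alice and Bob. Acceptance requires $c_A=c_B$. By no-signalling, $\Pr[c_A=c]$ is independent of $y$ and $\Pr[c_B=c]$ is independent of $x$. I would split the probability of accepting into contributions $\sum_c \Pr[c_A=c_B=c,\ \text{answers correct on } c]$ with $|c|\ge k$.

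The second step handles each fixed $c$. I would build an attack on the lossless protocol on $|c|$ qubits by embedding: the verifiers' $|c|$ BB84 qubits are placed at positions $c$, and the attackers locally simulate the remaining $m-|c|$ positions with dummy states. This yields an attack in which the success probability conditioned on committing to $c$ is at most the acceptance probability of the lossless protocol. The bound $\lambda_\gamma^{|c|}\le\lambda_\gamma^k$ then follows. The function $f$ must remain hard when restricted to subsets of coordinates; I would pick $f$ of the form $g^{\otimes m}$, so that the restriction to $c$ is exactly $g^{\otimes|c|}$, and the earlier result applies with the same entanglement budget.

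The main obstacle is the third step. Conditioning on $c$ is not a local operation, and the attacker's choice of $c$ may be correlated with the state. The probability of $c$ can be spread over exponentially many strings, so naive summation $\sum_c \Pr[c]\lambda_\gamma^{k}$ is fine only if the conditioning preserves the reduction. To handle this, I would use the following averaging/Cauchy–Schwarz-type argument. Either the commitments agree with probability at least $\lambda_\gamma^{k/2}$ on some structured event, which allows postselecting a deterministic-looking commitment and costs a factor $\lambda_\gamma^{-k/2}$ in the reduction, leaving $\lambda_\gamma^{k}\cdot\lambda_\gamma^{-k/2}=\lambda_\gamma^{k/2}$. Or the commitments disagree except with probability at most $\lambda_\gamma^{k/2}$. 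Adding the two branches gives the stated $2\lambda_\gamma^{k/2}$. The first thing I would try is to show that, by no-signalling, the joint distribution of $(c_A,c_B)$ can be simulated by a shared-randomness strategy up to the disagreement probability. This would justify fixing $c$ from shared randomness, under which the lossless attack is well defined.
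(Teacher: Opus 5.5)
There is a genuine gap, and it sits in your Step~3, which carries the whole difficulty but is not actually supplied. The dichotomy cannot work as stated. In your first branch the agreement probability may be as small as $\lambda_\gamma^{k/2}$, so simulating $(c_A,c_B)$ by shared randomness ``up to the disagreement probability'' incurs an error close to $1$. This is precisely why the either--or, gentle-measurement arguments of the sequential setting do not transfer to a single execution. Even an exact classical simulation of the commitment distribution says nothing about the postselected state $\tilde{\mathcal I}^{xy}_c(\rho)$, which depends jointly on $(x,y)$.

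In fact, your Step~2 claim that the success probability \emph{conditioned} on committing to $c$ is at most the lossless acceptance probability is false in general. Take the following attack:
\begin{itemize}
\item Alice forwards all qubits to Bob before the inputs arrive.
\item Bob measures them in the bases $f(r,y)$, for shared randomness $r\in\{0,1\}^n$, and keeps only the $m$ outcome bits.
\item Alice commits $c$ iff $x=r$, Bob always commits $c$, and both answer with Bob's outcomes.
\end{itemize}
Conditioned on $c_A=c_B=c$ the answers are always correct, although that event has probability $2^{-n}$. Bob retains only $m$ classical bits, so this is within BE$(q)$ as formulated once $q\ge m$.

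What your fixed-$c$ embedding does give is a bound of order $\lambda_\gamma^{|c|}$ on the \emph{unnormalized} contribution $2^{-2n}\sum_{x,y}\Pr[c,c\mid x,y]\Pr[\mathrm{accept}\ w_C\mid x,y,c]$; to see this, complete the instruments to channels with a failure flag. Summing this over the up to $\sum_{i\ge k}\binom mi$ admissible strings is what blows up when $m\gg k$. The missing idea is a way to avoid this union over commitments. If you had a single lossless attack in which ``$c_A=c_B$ and correct on $C$'' were one event of probability at most $\lambda_\gamma^k$, no dichotomy would be needed.

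The paper proceeds differently. It drops Bob's commitment via $\Pr[c,c\mid x,y]\le\Pr_A[c\mid x,y]=\Pr_A[c\mid x]$ (no-signalling), so that $\Pr[\mathrm{accept}]\le 2^{-n}\sum_x\sum_{c}\Pr_A[c\mid x]\,\omega^c_x$. Here $\omega^c_x$ is the $y$-average of the conditional acceptance probability. \cref{lem:fix_c} gives $\mathbb E_x\,\omega^c_x\le\delta_k$; it comes from redoing the rounding argument for $f_C$ with $f$ uniformly random in $\mathcal F^\varepsilon$, union-bounded over the at most $2^m$ strings $c$. Markov's inequality at threshold $\sqrt{\delta_k}$ then splits the $x$'s into good ones and a $\sqrt{\delta_k}$-fraction of bad ones, each part contributing $\sqrt{\delta_k}$. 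That split, not an agree/disagree split, is the source of the factor $2$ and the exponent $k/2$.

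Be aware, though, that this write-up leans on the same two points you leave open. First, \cref{lem:fix_c} bounds the postselected conditional probability, to which the example above applies. Second, the Markov set $T$ is obtained for one fixed $c$ but then used for every $c$ in the sum. So uniformity over exponentially many commitments remains the crux in either route.

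Finally, choosing $f=g^{\otimes m}$ does not give the stated BE$(O(n))$ result. The bound $\approx\lambda_\gamma^{k}$ is only known for uniformly random $f\in\mathcal F^\varepsilon$. For tensor powers the paper proves only the conditional reduction of \cref{thm:parallel_g}, and it leaves the security of \QPVBBfparallelotimesk{} against BE$(O(n))$ adversaries open.
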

Therefore, even under arbitrarily large transmission loss, the adversarial acceptance probability decays exponentially in the number of successfully detected qubits required for a conclusive execution. More generally, we show that if $f=g^{\otimes m}$, for a Boolean function $g$, that is, implementing the same function in parallel, then, 
\begin{equation} \Pr[\textnormal{accept }\textrm{\cQPVBBfparallelotimes}]\leq 2 \sqrt{\Pr[\textnormal{accept }\textrm{\QPVBBfparallelotimesk}]}. \end{equation}

Finally, we improve the analysis of the sequential commitment protocol of~\cite{allerstorfer2023makingexistingquantumposition}. The original result provides a security bound in terms of the number of conclusive rounds \(r\), which is stochastic, with a correction scaling as \(O(r^{-1/4})\). In contrast, we obtain a bound in terms of a fixed parameter \(k\), with a correction scaling as \(O(k^{-1/3})\). This yields improved scaling and more favorable parameters for experimental implementations.

\section{Preliminaries}
Let $\mathcal{H}$, $\mathcal{H'}$ be finite-dimensional Hilbert spaces. We denote by $\mathcal{B}(\mathcal{H},\mathcal{H'})$ the set of bounded operators from $\mathcal{H}$ to $\mathcal{H'}$ and $\mathcal{B}(\mathcal{H})=\mathcal{B}(\mathcal{H},\mathcal{H})$.  Denote by $\mathcal{S}(\mathcal{H})$ the set of quantum states on $\mathcal{H}$,~i.e.\ $\mathcal{S}(\mathcal{H})=\{\rho\in\mathcal{B}(\mathcal{H})\mid \rho\geq0, \tr{\rho}=1)\}$. A linear map $\mathcal{E}:\mathcal{B}(\mathcal H)\rightarrow \mathcal{B}(\mathcal H')$ is a \emph{quantum channel} if it is completely positive and trace preserving (CPTP). 

\begin{lemma} \label{lemma thm Kraus decomposition} \emph{(Kraus representation \cite{Kraus1971311})}. A linear map $\Phi$ is completely positive and trace non-increasing if and only if there exist bounded operators $\{K_i\}_{i=1}^r$ such that for all density operators~$\rho$,
\begin{equation}
    \Phi(\rho)=\sum_{i=1}^rK_i\rho K_i^{\dagger},
\end{equation}
with $\sum_{i=1}^rK_i^{\dagger} K_i\leq \mathbb{I}$, where $r$ is the Kraus rank. Moreover, $\Phi$ is trace-preserving,~i.e.\ a quantum channel, if and only if $\sum_{i=1}^rK_i^{\dagger} K_i = \mathbb{I}$.
\end{lemma}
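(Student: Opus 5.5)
The plan is to prove both directions of the lemma, starting from the Choi--Jamio\l{}kowski isomorphism. For the easy direction, suppose $\Phi(\rho)=\sum_i K_i\rho K_i^\dagger$.
\begin{itemize}
\item Complete positivity follows because $(\Phi\otimes\mathrm{id})(X)=\sum_i (K_i\otimes \mathbb{I})X(K_i\otimes\mathbb{I})^\dagger$, and each term is positive whenever $X\geq 0$.
\item For the trace, cyclicity gives $\tr{\Phi(\rho)}=\tr{\sum_iK_i^\dagger K_i\,\rho}$.
\item If $\sum_i K_i^\dagger K_i\leq\mathbb{I}$, this is at most $\tr{\rho}$ for every density operator $\rho$, so $\Phi$ is trace non-increasing.
\item If $\sum_i K_i^\dagger K_i=\mathbb{I}$, it equals $\tr{\rho}$, so $\Phi$ is trace preserving.
\end{itemize}

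For the converse, let $\Phi$ be completely positive and let $\ket{\Omega}=\sum_j \ket{j}\ket{j}$ be the unnormalised maximally entangled vector on $\mathcal{H}\otimes\mathcal{H}$. Complete positivity makes the Choi operator $J=(\Phi\otimes\mathrm{id})(\ketbra{\Omega}{\Omega})$ positive semidefinite. We then proceed as follows.
\begin{itemize}
\item Spectrally decompose $J=\sum_{i=1}^r\ketbra{v_i}{v_i}$, absorbing the eigenvalues into the vectors; here $r=\mathrm{rank}\,J$.
\item Define $K_i\in\mathcal{B}(\mathcal H,\mathcal H')$ through the vectorisation correspondence $\ket{v_i}=(K_i\otimes\mathbb{I})\ket{\Omega}$. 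Explicitly, $K_i\ket{j}=(\mathbb{I}\otimes\langle j|)\ket{v_i}$.
\item Recover the action of $\Phi$ from $J$ by the identity $\Phi(\ketbra{j}{l})=(\mathbb{I}\otimes\langle j|)J(\mathbb{I}\otimes\ket{l})$.
\item Substituting the decomposition of $J$ gives $\Phi(\ketbra{j}{l})=\sum_i K_i\ketbra{j}{l}K_i^\dagger$ on matrix units.
\item Linearity then extends the representation to all $\rho$.
\end{itemize}

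It remains to turn the trace condition into a condition on the Kraus operators. Set $M=\sum_iK_i^\dagger K_i\geq0$. By cyclicity, trace non-increasing means $\tr{M\rho}\leq1$ for all density operators. Taking $\rho=\ketbra{\psi}{\psi}$ over all unit vectors gives $\langle\psi|M|\psi\rangle\leq1$, which is exactly $M\leq\mathbb{I}$. Likewise, trace preserving means $\langle\psi|(M-\mathbb{I})|\psi\rangle=0$ for all $\psi$. Since $M-\mathbb{I}$ is Hermitian, polarisation then forces $M=\mathbb{I}$.

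The only step needing care is the converse, specifically the identity recovering $\Phi$ from its Choi operator. I would verify it directly by linearity: $(\mathbb{I}\otimes\langle j|)(\Phi\otimes\mathrm{id})(\ketbra{a}{b}\otimes\ketbra{a}{b})(\mathbb{I}\otimes\ket{l})$ equals $\Phi(\ketbra{a}{b})\,\delta_{aj}\delta_{bl}$. Summing over $a,b$ then gives the claimed identity. Everything else is bookkeeping in finite dimension.
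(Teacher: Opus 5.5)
Your proof is correct. Note that the paper gives no argument of its own for this lemma: it imports it by citation from Kraus (1971), where the operator-sum form is derived from Stinespring's dilation theorem. There, the theorem supplies an operator $W:\mathcal H\to\mathcal H'\otimes\mathcal K$ with $\Phi(\rho)=\mathrm{Tr}_{\mathcal K}[W\rho W^\dagger]$, and the Kraus operators are read off as $K_i=(\mathbb{I}\otimes\langle e_i|)W$ for a basis $\{\ket{e_i}\}$ of $\mathcal K$.

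Your Choi--Jamio\l{}kowski route replaces the dilation theorem by nothing more than the spectral theorem for the positive operator $J$. It is elementary and constructive, identifies the Kraus operators with rescaled eigenvectors of $J$, and yields $r=\mathrm{rank}\,J\le\dim\mathcal H\,\dim\mathcal H'$ directly. The dilation route instead extends to infinite-dimensional (normal) maps, a generality the paper never needs since it works only with finite-dimensional spaces.

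Two small points are worth tightening. First, in the easy direction you use the Kraus form inside $(\Phi\otimes\mathrm{id})(X)$ for arbitrary $X$. This requires $\Phi(Y)=\sum_iK_iYK_i^\dagger$ for all $Y\in\mathcal B(\mathcal H)$, not only for density operators; it follows immediately because density operators span $\mathcal B(\mathcal H)$ and both sides are linear. Second, if ``Kraus rank'' means the minimal number of Kraus operators, add that any representation with $r'$ operators $K_i'$ gives $J=\sum_{i=1}^{r'}(K_i'\otimes\mathbb{I})\ketbra{\Omega}{\Omega}(K_i'\otimes\mathbb{I})^\dagger$. Hence $\mathrm{rank}\,J\le r'$, and your $r$ is indeed minimal.
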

Let $\Omega$ be a finite outcome set. A \emph{quantum instrument} $\mathcal{I}$ is a set of completely positive linear maps $\{\mathcal{I}_i\}_{i\in\Omega}$ such that $\sum_{i\in\Omega}\mathcal{I}_i$ is trace preserving. Given the quantum state $\mathcal{\rho}\in\mathcal{S}(\mathcal{H})$, the probability of obtaining outcome $i$ is given by $\tr{\mathcal{I}_i(\rho)}$ and the sub-normalized output state upon outcome $i$ is $\mathcal{I}_i(\rho)$.

\begin{lemma}\label{lemma: instrument and channel} \emph{(Thm 7.2 in \cite{hayashi2016quantum})}
Let $\mathcal{I}=\{\mathcal{I}_i\}_{i\in\Omega}$ be an instrument, and $\{M_i\}_i$ its corresponding POVM, i.e.\ $\mathcal{I}_i^\dagger(\mathbb{I}) = M_i$. 
Then, for every $i\in\Omega$, there exists a quantum channel (CPTP map) $\mathcal{E}_i$ such that  
    \begin{equation}
        \mathcal{I}_i(\rho)=\mathcal{E}_i \left(\sqrt{M_i}\rho\sqrt{M_i}\right).
    \end{equation}
\end{lemma}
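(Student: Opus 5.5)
The plan is to build the channel $\mathcal{E}_i$ from the polar decomposition of the Kraus operators of $\mathcal{I}_i$.

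Fix $i\in\Omega$. By \cref{lemma thm Kraus decomposition}, since $\mathcal{I}_i$ is completely positive and trace non-increasing, it has Kraus operators $\{K_{ij}\}_{j=1}^{r}$ with $\mathcal{I}_i(\rho)=\sum_j K_{ij}\rho K_{ij}^\dagger$. Taking the adjoint gives
\[
M_i=\mathcal{I}_i^\dagger(\mathbb{I})=\sum_j K_{ij}^\dagger K_{ij}.
\]
The goal is to factor every Kraus operator through $\sqrt{M_i}$, that is, to find operators $L_{j}$ with $K_{ij}=L_j\sqrt{M_i}$ and $\sum_j L_j^\dagger L_j$ equal to the identity on a suitable space. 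We then complete these to a channel.

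The factorization comes from a polar-decomposition type argument. We have $K_{ij}^\dagger K_{ij}\le M_i$, so $\ker\sqrt{M_i}\subseteq\ker K_{ij}$. Let $P$ be the projector onto the support of $M_i$, and let $M_i^{-1/2}$ denote the pseudo-inverse on that support. Define $L_j:=K_{ij}M_i^{-1/2}$. Because $K_{ij}$ vanishes on $\ker M_i$, we get $L_j\sqrt{M_i}=K_{ij}P=K_{ij}$. Moreover,
\[
\sum_j L_j^\dagger L_j=M_i^{-1/2}\Big(\sum_j K_{ij}^\dagger K_{ij}\Big)M_i^{-1/2}=M_i^{-1/2}M_iM_i^{-1/2}=P.
\]

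To make this trace preserving, pick any fixed state $\sigma$ on the output space. Set
\[
\mathcal{E}_i(X)=\sum_j L_jXL_j^\dagger+\tr{(\mathbb{I}-P)X}\,\sigma .
\]
The extra term is completely positive: it is a measure-and-prepare map, whose Kraus operators are of the form $\sqrt{\sigma}$-eigenvector times $\langle e|$ for $\ket{e}$ ranging over a basis of $\ker M_i$. The full adjoint sum of Kraus products is then $P+(\mathbb{I}-P)=\mathbb{I}$. By \cref{lemma thm Kraus decomposition}, $\mathcal{E}_i$ is therefore CPTP.

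Finally, we check the identity. Since $\sqrt{M_i}$ kills $\ker M_i$, we have $(\mathbb{I}-P)\sqrt{M_i}=0$, so the extra term vanishes on inputs of the form $\sqrt{M_i}\rho\sqrt{M_i}$. Hence
\[
\mathcal{E}_i\big(\sqrt{M_i}\rho\sqrt{M_i}\big)=\sum_jL_j\sqrt{M_i}\rho\sqrt{M_i}L_j^\dagger=\sum_jK_{ij}\rho K_{ij}^\dagger=\mathcal{I}_i(\rho).
\]

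The only delicate step is handling the case where $M_i$ is not invertible. This is resolved by the kernel inclusion $\ker\sqrt{M_i}\subseteq\ker K_{ij}$ and the padding term on $\ker M_i$. The kernel inclusion holds because $\|K_{ij}v\|^2\le\langle v,M_iv\rangle$ for every vector $v$.
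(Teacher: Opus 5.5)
Your proof is correct and complete. The paper gives no argument for this lemma; it is cited from Hayashi's textbook, so there is no in-paper proof to compare against. What you give is the standard self-contained derivation, using only the Kraus lemma already stated in the preliminaries:
\begin{itemize}
\item factor each Kraus operator as $K_{ij}=L_j\sqrt{M_i}$ with $L_j=K_{ij}M_i^{-1/2}$, which is justified by $\ker M_i\subseteq\ker K_{ij}$;
\item observe that $\sum_j L_j^\dagger L_j=P$;
\item restore trace preservation with a measure-and-prepare term on $\ker M_i$. This term vanishes on inputs $\sqrt{M_i}\rho\sqrt{M_i}$ because $(\mathbb{I}-P)\sqrt{M_i}=0$.
\end{itemize}

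Your explicit construction also gives something the bare citation does not. The identity $\mathcal{E}_i(\sqrt{M_i}X\sqrt{M_i})=\mathcal{I}_i(X)$ you derive holds for every operator $X$, not only states. So you can build the channels separately for Alice's and Bob's instruments and tensor them. The result, $\mathcal{E}_A^x\otimes\mathcal{E}_B^y$, satisfies the lemma for the joint instrument with POVM element $M_A^x\otimes M_B^y$. This is exactly the product form the paper asserts without comment in Appendix~\ref{sec:appendix_seq_rep}. Applying your construction directly to the joint instrument would not give this by itself, because the padding term $\tr{(\mathbb{I}-P_A\otimes P_B)X}\,\sigma$ need not factor.
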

Throughout the paper, all logarithms are taken to base $2$.

No-signalling correlations describe joint probability distributions between spatially separated parties that are consistent with relativistic causality. Concretely, consider two parties, Alice and Bob, with inputs $x$ and $y$ and outputs $a$ and $b$, respectively, taking values in finite alphabets, described by a conditional distribution $\Pr[a,b|x,y]$. The distribution is said to be \emph{no-signalling} if the marginal distribution of one party’s output does not depend on the other party’s input, i.e.,
\[
\sum_b \Pr[a,b\mid x,y] = \sum_b \Pr[a,b\mid x,y'] \quad \forall a,x,y,y',
\]
and symmetrically for the other party. Equivalently, denoting Alice’s and Bob’s marginals by $\Pr_A[a\mid x,y]$ and $\Pr_B[b\mid x,y]$, the no-signalling condition can be written as
\begin{equation}
    \Pr\!_A[a\mid x,y]=\Pr\!_A[a\mid x] \,\,\,\text{ and }\,\,\, \Pr\!_B[b\mid x,y]=\Pr\!_B[b\mid y].
\end{equation}
This captures the operational constraint that no information can be transmitted instantaneously between the parties. In particular, any quantum correlation induces a no-signalling distribution via local measurements on a shared quantum state, but the set of no-signalling correlations is strictly larger than the set of quantum-realizable ones.

\section{Commitment QPV Structure for Multiple Qubits}

In this section, we present the BB84-based protocol introduced in~\cite{escolàfarràs2025quantumpositionverificationshot}, denoted by \QPVBBfparallel{}. We then introduce the corresponding commitment modification, originally proposed in~\cite{allerstorfer2023makingexistingquantumposition} for the case $m=1$.

In the \QPVBBfparallel{} protocol, one verifier prepares $m$ qubits, each independently chosen in one of the four BB84 states---either in the computational basis ($\ket{0}, \ket{1}$) or in the Hadamard basis ($\ket{+}, \ket{-}$)---and sends them to the prover together with an $n$-bit string $x$. The other verifier sends an $n$-bit string $y$. Upon reception of the classical inputs, the prover computes $z = f(x,y)$ and measures the $i$-th qubit in the basis specified by $z_i$, obtaining an outcome string $w \in \{0,1\}^m$, which is broadcast to both verifiers. The full specification is given in \cref{fig:qpv-bb-parallel} and schematically represented in \cref{fig:parallel_BB84}.

In~\cite{escolàfarràs2025quantumpositionverificationshot}, it was shown that for a suitable class of functions $f$ (see \eqref{eq:F*}), the protocol is secure against adversaries in the BE$(O(n))$ model, i.e., adversaries pre-sharing a number of entangled qubits linear in the size of the classical information.

\begin{figure}[h]
\hrule
\vspace{0.5em}
\textbf{The \QPVBBfparallel{} protocol~\cite{escolàfarràs2025quantumpositionverificationshot}.} 
\vspace{0.5em}
\hrule
\vspace{0.5em}
Let $n>m$, where $n$ denotes the size of the classical inputs and $m$ the number of qubits used in the protocol. Let $f:\{0,1\}^n \times \{0,1\}^n \to \{0,1\}^m$ be a publicly known function and $\gamma \in \left[0,\frac{1}{2}\right)$ be an error parameter.

\begin{enumerate}
    \item \textbf{Preparation.}  The verifiers \Verifzero{} and \Verifone{} secretly sample uniform random strings
    \(
       {x,y \in \{0,1\}^n}, v \in \{0,1\}^m     \) and compute $z := f(x,y) \in \{0,1\}^m$. \Verifzero{} prepares   
        $H^{z}\ket{v} = \otimes_{i=1}^m H^{z_i}\ket{v_i}$.

    \item \textbf{Distribution.} \Verifzero{} sends $H^{z}\ket{v}$ and $x$ to \prover, while \Verifone{} sends $y$. The classical strings $x$ and $y$ are constrained to propagate at the speed of light and are timed to arrive simultaneously at ${pos}$, whereas the quantum states may be transmitted arbitrarily slowly but must be available at ${pos}$ upon arrival of the classical messages.

    \item \textbf{Measurement.} The prover computes $z=f(x,y)$ and measures each qubit in the basis defined by $z_i$, obtaining $w \in \{0,1\}^m$, which is broadcast to both verifiers.

    \item \textbf{Verification.} The verifiers accept if both of them receive the same $w$ within the correct timing window and 
    \begin{equation}\label{eq:accept w}
          d_H(v,w) \leq \gamma m \,\,\,\,\,\,\,\,\,\textrm{ (accept $w$)}.
    \end{equation}
\end{enumerate}

\vspace{0.5em}
\hrule
\caption{Description of the \QPVBBfparallel{} protocol.}
\label{fig:qpv-bb-parallel}
\end{figure}

We now describe some structural properties of the function $f$ that are relevant for the security analysis. The function $f$ induces a probability distribution over measurement bases $z \in \{0,1\}^m$, given by $z=f(x,y)$ for uniformly random inputs $x,y \in \{0,1\}^n$. We denote this distribution by $q_f$, where
\begin{equation}
    q_f(z)=\Pr_{xy}[f(x,y)=z]=\frac{|\{(x,y)\in\{0,1\}^{2n} \mid f(x,y)=z\}|}{2^{2n}}.
\end{equation}
Intuitively, a distribution closer to uniform increases the uncertainty about the chosen measurement bases. 
To formalize this,~\cite{escolàfarràs2025quantumpositionverificationshot} introduced, for $\varepsilon$ satisfying
\(
\sqrt{3\ln(2/\varepsilon)}\,2^{-n+m/2}<1/4,
\)
the class of functions \(\mathcal F^\varepsilon\), intuitively consisting of those functions for which no basis occurs with significantly higher probability than expected under uniform sampling:
\begin{equation}\label{eq:F*}
      \mathcal F^{\varepsilon}:=\left\{f:\{0,1\}^{2n}\rightarrow\{0,1\}^m\mid  q_f(z)\in\left[\frac{1}{2^m}\pm\frac{\sqrt{3\ln{(2/\varepsilon)}}}{2^{n+m/2}}\right]\text{ }
   \forall z\in\{0,1\}^m\right\}.
\end{equation}

Given that the answers arrive within the correct timing constraints, the verification step of the \QPVBBfparallel{} protocol reduces to checking~\eqref{eq:accept w}, i.e., accepting $w$, which we will from now on simply refer to as \emph{accept}.

\begin{theorem}\label{thm:soundess_BB84fparallel}\textnormal{(\cite{escolàfarràs2025quantumpositionverificationshot})}
Let $\varepsilon\leq2^{-m-1}$ and \(
2q\leq n-2\log n-m\log\frac{1}{\lambda_\gamma}-\log\frac{8\log({n}/{\lambda_\gamma)}}{1-\lambda_\gamma-1/n}
\). Then, in the \emph{BE}$(q)$ model, there exist functions $f$ such that
\begin{equation}
     \Pr[\mathrm{accept} \,\,\,\mathrm{QPV}_{\mathrm{BB84}}^{f: n\rightarrow m} ]
    \leq
    \left(\lambda_\gamma\left(1+\frac1n\right)\right)^{(1-\frac1n)m}
\left(1+6\sqrt{\ln(2/\varepsilon)}2^{-n+m/2}\right)
+21\left(\frac{\lambda_\gamma}{n}\right)^m .
\end{equation}
Moreover, the above bound holds with probability at least ${1-2^{-m2^{2n}\lambda_\gamma^m/(2n^2)}}$ for a uniformly random choice of $f \in \mathcal F^\varepsilon$.
\end{theorem}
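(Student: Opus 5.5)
This theorem is quoted from~\cite{escolàfarràs2025quantumpositionverificationshot}, so here is how I would reconstruct its proof. The strategy has three parts: reduce a general attack to a fixed-size classical-message game, bound each such game by a monogamy-of-entanglement argument, and then union bound over a net of attacks with a probabilistic choice of $f$.

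\textbf{Step 1: reduce the attack.} A general attack in the BE$(q)$ model proceeds as follows. The first adversary applies a channel to the $m$ BB84 qubits, its half of the $q$ pre-shared qubits, and $x$. It keeps one quantum register and sends the other part to its partner. After one round of simultaneous communication, each adversary must output $w$. By the standard Buhrman et al.-type reduction, the message to the partner can be taken as a quantum state on $O(q)$ qubits. Discretizing the attack shows that each adversary's strategy is determined by a classical label of $O(q)$ bits, together with an $\varepsilon'$-net over states. The key point is that the attackers' action on the qubits before learning $y$ is a function of $x$ only.

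\textbf{Step 2: bound a fixed strategy.} For every fixed first-round strategy, consider the average over $(x,y)$ with $z=f(x,y)$. Success then reduces to a guessing game in which the answer to $z$ must be produced from a message depending on $x$ and on the compressed quantum register. For fixed registers, I would write the winning probability as an average over $z\sim q_f$ of $\operatorname{Tr}[\Pi_z \rho]$. Here $\Pi_z$ projects onto strings $w$ within Hamming distance $\gamma m$ of $v$, measured in basis $z$.

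The monogamy-of-entanglement bound of Tomamichel et al., together with counting the Hamming ball of size $2^{h(\gamma)m}$, gives a per-qubit factor $\tfrac12+\tfrac1{2\sqrt2}$. This yields $\lambda_\gamma^m$ when $z$ is uniform. Because $f\in\mathcal F^\varepsilon$, the distribution $q_f$ is close to uniform; this accounts for the $(1+6\sqrt{\ln(2/\varepsilon)}2^{-n+m/2})$ factor.

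The slack factors $(1+1/n)$ and $(1-1/n)m$ arise because I would split off the pairs $(x,y)$ for which the attackers' labels are "good", namely those on which the strategy correctly answers a $(1-1/n)$ fraction of coordinates.

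\textbf{Step 3: random $f$ and union bound.} For a uniformly random $f$, fix a pair of classical labels $(\ell_A(x),\ell_B(y))$. The number of pairs $(x,y)$ on which the attack succeeds with probability above $\lambda_\gamma^{(1-1/n)m}$ is a sum of $2^{2n}$ independent indicators. A Chernoff bound makes the probability that this count exceeds its mean by a factor $(1+1/n)$ at most $\exp(-m2^{2n}\lambda_\gamma^m/(2n^2))$.

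A union bound over all $2^{O(q)\cdot 2^n}$ labelings and over the net must then be absorbed by this exponent. This is exactly where the condition $2q\le n-2\log n-m\log\tfrac1{\lambda_\gamma}-\log(\cdots)$ is used. The additive term $21(\lambda_\gamma/n)^m$ covers the net error and the bad-event mass.

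I expect the main obstacle to be Step 3. The counting has to be tight enough that $2^{2q}$ strategies times the net size stay below $2^{2n}\lambda_\gamma^m/n^2$; this forces the precise logarithmic terms in the hypothesis. It also has to be carried out simultaneously with restricting to $f\in\mathcal F^\varepsilon$, for which I would condition on that event, which has probability at least $1-\varepsilon 2^m\ge 1/2$ when $\varepsilon\le 2^{-m-1}$.
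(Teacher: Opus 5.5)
Your outline has the same architecture as the paper's argument (given in the appendix for general $c$): a monogamy-of-entanglement bound for a fixed state with $q_f$ nearly uniform, a discretization of the attackers' first-round state into a classical rounding, and a counting argument over a uniformly random $f\in\mathcal F^\varepsilon$. However, the discretization is wrong exactly where the hypothesis on $q$ comes from. A net over $q$-qubit first-round states at accuracy $\Delta^m$ cannot be indexed by $O(q)$ classical bits. The paper's rounding uses labels of size $k_1,k_2\le m\log(1/\Delta)\,2^{2q+1}$ and $k_3\le m\log(1/\Delta)\,2^{2q+m+1}$. It is this exponential-in-$q$ label size that drives the hypothesis. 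Union-bounding over $2^{(k_1+k_2)2^n+k_3}$ labelings and comparing with the gain $\beta_0\, m\log(1+\frac1n)\,2^{2n}$ gives $2^{2q}\lesssim 2^n\beta_0/(n\log(n/\lambda_\gamma))$. With $\beta_0\ge\lambda_\gamma^m(1-\lambda_\gamma-\frac1n)/n$, this yields the stated condition $2q\le n-2\log n-m\log\frac1{\lambda_\gamma}-\cdots$. With $O(q)$-bit labels, your own union bound over $2^{O(q)2^n}$ labelings would instead give a constraint of order $q\lesssim m2^n\lambda_\gamma^m/n^2$. That is not the theorem, and it also clashes with your later ``$2^{2q}$ strategies''.

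Steps 2 and 3 are also not connected. The monogamy bound controls the $z$-average only for a \emph{fixed} state. The attackers' state $(U^x\otimes V^y)\ket{\psi}$ depends on $(x,y)$, and that dependence is the whole difficulty. Two links are missing:
(a) the lemma that a fixed state is $\Delta$-good for at most a $(\lambda_\gamma/(\lambda_\gamma+\Delta))^m$ fraction of bases, which makes the rounding set-valued with outputs of size at most $2^{sm}$;
(b) the averaging step: acceptance probability at least $(\lambda_\gamma+\Delta)^{\alpha m}(1+\cdots)+21\Delta^m$ forces at least $\beta_0 2^{2n}$ pairs with per-pair success above $(\lambda_\gamma+\Delta)^{m}(1+\cdots)+21\Delta^m$, where $\beta_0=(\lambda_\gamma+\Delta)^{\alpha m}(1-(\lambda_\gamma+\Delta)^{1-\alpha})$.

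The slack factors $(1+\frac1n)$ and $(1-\frac1n)m$ come from the choices $\Delta=\lambda_\gamma/n$ and $\alpha=1-\frac1n$. They do not come from a strategy answering a $(1-\frac1n)$ fraction of coordinates. Likewise, the exponent $m2^{2n}\lambda_\gamma^m/(2n^2)$ is $\frac{\beta_0}{2}m\log(1+\frac1n)2^{2n}$ from the counting bound. It is not a Chernoff bound on ``the count exceeds its mean by $(1+\frac1n)$'', and you never relate that event to acceptance.

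Be aware also that Step 3 cannot be completed as an honest tail bound with these parameters. The only property of the rounding the counting uses is the set-size bound, i.e.\ a fraction $(1+\frac1n)^{-m}>1/e$ of bases (since $m<n$). A random $f$ lands in sets of that size on about that fraction of pairs, far above $\beta_0\le\frac1n\ln\frac1{\lambda_\gamma}$. The appendix's count avoids this only because it bounds the number of $f$ compatible with a rounding on a \emph{fixed} set $\mathcal B$. It never pays the factor $\binom{2^{2n}}{\beta_0 2^{2n}}$ for choosing $\mathcal B$, even though $\mathcal B$ depends on $f$. So mirroring the paper here does not close the step either.
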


\begin{figure}[h]
    \centering
    \begin{tikzpicture}[node distance=3cm, auto]
    \node (A) {\Verifzero{}};
    %\node [left=1cm of A] {};
    \node [right=of A] (B) {\prover{}};
    \node [right=of B] (C) {\Verifone};
    %\node [right=1cm of C] {};
    \node [below=of A] (D) {};
    \node [below=of B] (E) {};

    \node [above=-0.1cm of A] (N) {};
    \node [right=0.8cm of N] {$\diagdots[120]{2.5em}{0.1em}$};
    
    \node [right=0.8cm of A] (M) {};
    \node [left=1.5cm of C] (M2) {};
    
    \node [below=of C] (F) {};
    \node [below=of D] (G) {};%\Verifzero{}
    \node [below=of E] (H) {};
    \node [below=of F] (I) {};%\Verifone{}
    \node [left= 4.5cm of E] (J) {};
    \node [below= 3cm of J] (K) {};
    \node [above= 3cm of J] (L) {};
    \node [right=0.1 of E](P fxy){$f(x,y)=z_1\ldots z_m$};

    \draw [->, transform canvas={xshift=0pt, yshift=0pt}, quantum] (M) -- (E) node[midway] (x) {} ;
    \draw [->, transform canvas={xshift=2pt, yshift=0pt}, quantum] (M) -- (E) node[midway] {} ;
    \draw [->, transform canvas={xshift=4pt, yshift=0pt}, quantum] (M) -- (E) node[midway] {} ;
    \draw [->, transform canvas={xshift=-2pt, yshift=0pt}, quantum] (M) -- (E) node[midway] {} ;
    %\draw [->, transform canvas={xshift=-4pt, yshift=0pt}, quantum] (M) -- (E) node[midway] {} ;
    \node [right=0.8cm of N,transform canvas={xshift=-4pt, yshift=0pt}] {$\diagdots[120]{2.5em}{0.1em}$};
    \node [right=0.8cm of N,transform canvas={xshift=-2pt, yshift=0pt}] {$\diagdots[120]{2.5em}{0.1em}$};
    \node [right=0.8cm of N,transform canvas={xshift=4pt, yshift=0pt}] {$\diagdots[120]{2.5em}{0.1em}$};
    \node [right=0.8cm of N,transform canvas={xshift=2pt, yshift=0pt}] {$\diagdots[120]{2.5em}{0.1em}$};

    \draw [->] (A) -- (E);
    \draw [->] (C) -- (E);
    \draw [->] (E) -- (I) node[midway] (q) {}; 
    %\draw [->,transform canvas={xshift=-4pt, yshift=0pt}] (E) -- (I) ;
    \draw [->,transform canvas={xshift=4pt, yshift=0pt}] (E) -- (I) ;
    \draw [->,transform canvas={xshift=-2pt, yshift=0pt}] (E) -- (I) ;
    \draw [->,transform canvas={xshift=2pt, yshift=0pt}] (E) -- (I) ;

    \draw [][->] (E) -- (G); %line width=0.4mm, dotted
    %\draw [->,transform canvas={xshift=-4pt, yshift=0pt}] (E) -- (G) ;
    \draw [->,transform canvas={xshift=4pt, yshift=0pt}] (E) -- (G) ;
    \draw [->,transform canvas={xshift=-2pt, yshift=0pt}] (E) -- (G) ;
    \draw [->,transform canvas={xshift=2pt, yshift=0pt}] (E) -- (G) ;

    \draw [->] (L) -- (K) node[midway] {time};

    \node[below=0.5cm of B, xshift=-20pt, fill opacity=0.7,      text opacity=1] {$\otimes_{i=1}^m\ket{\phi_i}$};
    
    \node[left=1.4cm of x, transform canvas={xshift=+ 2pt, yshift = +2 pt}] {$x\in\{0,1\}^n$};
    \node[right = 2.9cm of x, transform canvas={xshift=+ 2pt, yshift = +2 pt}] {$y \in \{0,1\}^n$};
    \node[left = 2.5cm of q, fill=white, fill opacity=0.7, text opacity=1, xshift=-14pt, yshift=-15pt] (V0ans) {$w\in\{0,1\}^m$};%\in\{0,1\}^m
    \node[right = 1cm of V0ans, fill=white, fill opacity=0.7, text opacity=1, xshift=45pt] (V0ans) {$w\in\{0,1\}^m$};%\in\{0,1\}^m

    \node [above=0.5cm of A] (posV00) {};
    \node [left=1cm of posV00] (posV0) {};
    \node [above=0.5cm of C] (posV11) {};
    \node [right=1cm of posV11, xshift=-2em] (posV1) {};
    \draw [->] (posV0) -- (posV1) node[midway,yshift=3pt,xshift=11pt] {position};

    \node[above=0.5cm of A, yshift=-6pt, xshift=-2pt](V0pos){$|$};
    \node[right= of V0pos, xshift=10pt](Ppos){$|$};
    \node[below=0.2cm of Ppos, yshift=10pt](pos){$pos$};
    \node[right=7cm of V0pos, xshift=-3pt](V1pos){$|$};

    \end{tikzpicture}
\caption{Schematic representation of \QPVBBfparallel. }

\label{fig:parallel_BB84}
\end{figure}

\cref{thm:soundess_BB84fparallel} establishes security guarantees for \QPVBBfparallel. However, the result is derived under the assumption that all $m$ transmitted qubits are available to the prover when the classical information arrives. In realistic optical implementations, transmission losses are unavoidable and may be substantial over long distances. As discussed in the introduction, this creates a fundamental vulnerability: an adversary can measure each incoming qubit in a randomly chosen basis and only report outcomes for those qubits for which the guess was correct, declaring loss otherwise. Since the adversary succeeds with probability 1/2 on each qubit, any protocol that does not explicitly constrain loss events becomes insecure once the transmission rate drops below 50\%.

To address this issue, we adopt the \emph{commitment} modification introduced in~\cite{allerstorfer2023makingexistingquantumposition}. The central idea is that the prover must declare, before learning the basis information, which qubits have been successfully received. This prevents adversaries from using knowledge of the basis string $z=f(x,y)$ to selectively decide which qubits are reported as lost. We apply this idea to the parallel protocol \QPVBBfparallel{}, obtaining the commitment protocol \cQPVBBfparallel{} described in \cref{fig:description_c-parallel_BB84}. The protocol is identical to \QPVBBfparallel{} except for the addition of a commitment phase prior to the measurement step. Specifically, after receiving the quantum systems but before learning the classical inputs $x$ and $y$, the prover broadcasts a commitment string $c\in\{0,1\}^m$ indicating which qubits have been successfully received, where $c_i=1$ if the $i$-th qubit has been received and $c_i=0$ otherwise. This corresponds to Step~3 of \cref{fig:description_c-parallel_BB84} and constitutes the essential modification relative to \QPVBBfparallel{}. The verifiers proceed only if the commitment contains at least $k$ declared arrivals, for a certain pre-defined threshold $k$. Once the classical information becomes available, the prover measures only the committed qubits and the verification test is restricted to this subset. Apart from this restriction, the measurement and verification procedures are identical to those of the original \QPVBBfparallel{} protocol, applied to the subset $C=\{i\mid c_i=1\}$ instead of the full set of $m$ qubits.

The parameter $k$ aggregates all sources of qubit loss in an implementation, including transmission loss, coupling inefficiencies, detector inefficiency, and failures of photon presence detection, which can all be captured by a single loss model. The experimental feasibility of this commitment paradigm is discussed in~\cite{allerstorfer2023makingexistingquantumposition}.

In the next section, we carry out the security analysis of \cQPVBBfparallel{} and show that the commitment modification preserves security while making the protocol robust against arbitrary loss. In particular, we prove that in the bounded-entanglement model, the adversarial acceptance probability decays exponentially in the threshold parameter \(k\), establishing a soundness guarantee similar to that of \cref{thm:soundess_BB84fparallel} for \QPVBBfparallel{}, but with \(k\) (the minimum number of qubits required for a conclusive execution) replacing \(m\) as the governing parameter.

\begin{figure}[h!]
\hrule
\vspace{0.5em}
\textbf{The \cQPVBBfparallel{} protocol.}
\vspace{0.5em}
\hrule
\vspace{0.5em}

Let $n>m$, where $n$ denotes the size of the classical inputs and $m$ the number of qubits used in the protocol. Let $f:\{0,1\}^n \times \{0,1\}^n \to \{0,1\}^m$ be a publicly known function, let $\gamma \in \left[0,\frac{1}{2}\right)$ be an error parameter, and let $k \in [m]$ be a threshold parameter.

\begin{enumerate}
    \item \textbf{Preparation.} The verifiers \Verifzero{} and \Verifone{} secretly sample uniform random strings
    \(
       {x,y \in \{0,1\}^n}, v \in \{0,1\}^m     \) and compute $z := f(x,y) \in \{0,1\}^m$. \Verifzero{} prepares   
        $H^{z}\ket{v} = \otimes_{i=1}^m H^{z_i}\ket{v_i}$.

    \item \textbf{Distribution.} \Verifzero{} sends $H^{z}\ket{v}$ to \prover{}, while \Verifzero{} and \Verifone{} send $x$ and $y$, respectively. The classical strings are constrained to propagate at the speed of light and are timed to arrive simultaneously at ${pos}$. 
    The quantum qubits are sent earlier, with no additional constraints beyond being available at $\mathsf{pos}$ \emph{prior} to the arrival of the classical inputs.
    
    \item \textbf{Commitment.} Upon receiving the quantum systems, \prover{} determines which qubits have arrived and broadcasts a commitment string
\(
    c \in \{0,1\}^m,
\)
where $c_i = 1$ if the $i$-th qubit has been received and $c_i = 0$ otherwise. The verifiers accept the commitment if
\begin{equation}
    |c| \geq k \,\,\,\,\,\,\,\,\,\textrm{ (accept $c$)},
\end{equation}
   otherwise, the protocol aborts (and the verifiers reject).

    \item \textbf{Measurement.} Upon receiving $x$ and $y$, \prover{} computes $z = f(x,y)$. Let
\(
    C := \{i \in [m] : c_i = 1\}
\)
be the set of indices corresponding to received qubits. The prover measures the qubits indexed by $C$ in the bases specified by $z_C$, obtaining outcomes 
\(
    w_C \in \{0,1\}^{|c|},
\)
which are broadcast to both verifiers.

    \item \textbf{Verification.} The verifiers accept if: both receive the same strings $c$ and $w_C$, all messages arrive within the timing constraints consistent with ${pos}$, and 
        \begin{equation}
             d_H(v_C,w_C) \leq \gamma |c| \,\,\,\,\,\,\,\,\,\textrm{ (accept $w_C$)}.
\end{equation}
\end{enumerate}

\vspace{0.5em}
\hrule
\caption{Description of the \cQPVBBfparallel{} protocol}
\label{fig:description_c-parallel_BB84}
\end{figure}

\begin{figure}[h]
    \centering
    \begin{tikzpicture}[node distance=3cm, auto]
    \node (A) {\Verifzero{}};
     
    %\node [left=1cm of A] {};
    \node [right=of A] (B) {\prover{}};
    \node [right=of B] (C) {\Verifone};
    %\node [right=1cm of C] {};
    \node [below=of A] (D) {};
    \node [below=of B] (E) {};
    %\draw [-,dotted] (A) -- (E);

    %\node [above=-0.1cm of A] (N) {};
    %\node [right=0.8cm of N] {$\diagdots[120]{2.5em}{0.1em}$};
    
    \node [right=1.6cm of A] (M) {};
    \node [left=1.5cm of C] (M2) {};
    
    \node [below=of C] (F) {};
    \node [below=of D] (G) {};%\Verifzero{}
    \node [below=of E] (H) {};
    \node [below=of F] (I) {};%\Verifone{}
    \node [left= 4.5cm of E] (J) {};
    \node [below= 4.4cm of J] (K) {};
    \node [above= 3cm of J] (L) {};
    
\path (M) -- (E) coordinate[pos=0.1] (mid00);
\path (M) -- (E) coordinate[pos=0.3] (mid0);
\path (M) -- (E) coordinate[pos=0.25] (mid1);
\path (M) -- (E) coordinate[pos=0.4] (mid2);
\path (M) -- (E) coordinate[pos=0.55] (mid3);
\path (M) -- (E) coordinate[pos=0.7] (mid4);

% full channels

\draw [->, transform canvas={xshift=-3pt}, quantum]
    (M) -- (E) node[midway] (x) {};
\draw [->, transform canvas={xshift=0pt, yshift=2}, quantum]     (M) -- (E);
\draw [->, transform canvas={xshift=3pt, yshift=4}, quantum]     (M) -- (mid2);
\draw [->, transform canvas={xshift=-6pt, yshift=-2}, quantum]     (M) -- (mid3);
 \draw [->, transform canvas={xshift=-9pt, yshift=-4}, quantum]     (M) -- (mid1);
\draw [->, transform canvas={xshift=6pt, yshift=6}, quantum]     (M) -- (mid0);

  %Extra ones
  %\draw [-, transform canvas={xshift=9pt, yshift=6}, quantum]     (M) -- (mid3);
     %\draw [-, transform canvas={xshift=12pt, yshift=8}, quantum]     (M) -- (mid2);
     %\draw [-, transform canvas={xshift=15pt, yshift=10}, quantum]     (M) -- (mid2);
     
 % \draw [->, transform canvas={xshift=-12pt, yshift=-8}, quantum]     (M) -- (mid1);
  %\draw [-, transform canvas={xshift=-15pt, yshift=-10}, quantum]     (M) -- (mid1);

    %\node [right=0.8cm of N,transform canvas={xshift=-4pt, yshift=0pt}] {$\diagdots[120]{2.5em}{0.1em}$};
    %\node [right=0.8cm of N,transform canvas={xshift=-2pt, yshift=0pt}] {$\diagdots[120]{2.5em}{0.1em}$};
    %\node [right=0.8cm of N,transform canvas={xshift=0pt, yshift=0pt}] {$\diagdots[120]{2.5em}{0.1em}$};
    %\node [right=0.8cm of N,transform canvas={xshift=4pt, yshift=0pt}] {$\diagdots[120]{2.5em}{0.1em}$};
    %\node [right=0.8cm of N,transform canvas={xshift=2pt, yshift=0pt}] {$\diagdots[120]{2.5em}{0.1em}$};
    %\node [right=0.8cm of N,transform canvas={xshift=6pt, yshift=0pt}] {$\diagdots[120]{2.5em}{0.1em}$};

    \node [below=0.90cm of A] (V0) {};
    \node [below=0.90cm of C] (V1) {};
    \node [below=1cm of E] (P) {};
    \node [right=0.1 of P, xshift=-5pt](P fxy){$f(x,y)$};
    \node [below=1cm of G] (V0_ans) {};
    \node [below=1cm of I] (V1_ans) {};

    \draw [-, transform canvas={xshift=-2pt, yshift=0pt}, quantum] (E) -- (P) node[midway] {} ;
    \draw [-, transform canvas={xshift=2pt, yshift=0pt}, quantum] (E) -- (P) node[midway] {} ;

    \draw [->] (V0) -- (P);
    \draw [->] (V1) -- (P);
    \draw [->] (E) -- (I) node[midway] (q) {}; 
    \draw [->,transform canvas={xshift=-4pt, yshift=0pt}] (E) -- (I) ;
    \draw [->,transform canvas={xshift=4pt, yshift=0pt}] (E) -- (I) ;
    \draw [->,transform canvas={xshift=-2pt, yshift=0pt}] (E) -- (I) ;
    \draw [->,transform canvas={xshift=2pt, yshift=0pt}] (E) -- (I) ;
    \draw [->,transform canvas={xshift=6pt, yshift=0pt}] (E) -- (I) ;

    \draw [][->] (E) -- (G); %line width=0.4mm, dotted
    \draw [->,transform canvas={xshift=-4pt, yshift=0pt}] (E) -- (G) ;
    \draw [->,transform canvas={xshift=4pt, yshift=0pt}] (E) -- (G) ;
    \draw [->,transform canvas={xshift=-2pt, yshift=0pt}] (E) -- (G) ;
    \draw [->,transform canvas={xshift=2pt, yshift=0pt}] (E) -- (G) ;
    \draw [->,transform canvas={xshift=6pt, yshift=0pt}] (E) -- (G) ;

     \draw [->, transform canvas={xshift=-2pt, yshift=0pt}] (P) -- (V0_ans) node[midway] {} ;
    \draw [->, transform canvas={xshift=2pt, yshift=0pt}] (P) -- (V0_ans) node[midway] {} ;

    \draw [->, transform canvas={xshift=-2pt, yshift=0pt}] (P) -- (V1_ans) node[midway] {} ;
    \draw [->, transform canvas={xshift=2pt, yshift=0pt}] (P) -- (V1_ans) node[midway] {} ;

    \draw [->] (L) -- (K) node[midway] {time};

    \node[below=0.2cm of B, xshift=-38pt, yshift=20pt, fill=white, fill opacity=1,      text opacity=1] {$\otimes_{i=1}^m\ket{\phi_i}$};%{$\otimes_{i=1}^mH^{f(x,y)_i}\ket{v_i}$};
    
    \node[below=1cm of V0, xshift=30pt] {$x$};
    \node[below=1cm of V1, xshift=-30pt] {$y$};

    \node[left = 2.5cm of q, fill opacity=0.7, text opacity=1, xshift=-45pt, yshift=-10pt] (V0ans) {$c$};%\in\{0,1\}^m
    
    \node[right = 4.6cm of V0ans,  fill opacity=0.7, text opacity=1] (V1ans) {$c$};%\in\{0,1\}^m

    \node[below=1.4cm of V0ans, fill=white, xshift=0pt, fill opacity=0.7, text opacity=1] (V0ans2) {$w_{C}\in\{0,1\}^{\abs{c}}$};
    \node[below=1.4cm of V1ans, xshift=10pt, fill=white, fill opacity=0.7, text opacity=1, ] (V0ans2) {$w_{C}\in\{0,1\}^{\abs{c}}$};

    \node [above=0.5cm of A] (posV00) {};
    \node [left=1cm of posV00] (posV0) {};
    \node [above=0.5cm of C] (posV11) {};
    \node [right=1cm of posV11, xshift=-2em] (posV1) {};
    \draw [->] (posV0) -- (posV1) node[midway,yshift=3pt,xshift=11pt] {position};

    \node[above=0.5cm of A, yshift=-6pt, xshift=-2pt](V0pos){$|$};
    \node[right= of V0pos, xshift=10pt](Ppos){$|$};
    \node[below=0.2cm of Ppos, yshift=10pt](pos){$pos$};
    \node[right=7cm of V0pos, xshift=-3pt](V1pos){$|$};

    \end{tikzpicture}
\caption{Schematic representation of \cQPVBBfparallel.  Unlike~\cref{fig:parallel_BB84}, 
which depicts the original protocol where all $m$ qubits are measured and the full 
string $w \in \{0,1\}^m$ is returned, here only the committed subset $C$ is measured, 
and the prover returns $w_C \in \{0,1\}^{|c|}$. }

\label{fig:parallel-BB84}
\end{figure}

\section{Security Analysis of \texorpdfstring{\cQPVBBfparallel}{c-QPVfm}}\label{section:security_parallel}

As is customary in the literature, we consider the purified version of \cQPVBBfparallel{}, which is equivalent to the original protocol. In this formulation, \Verifzero{} prepares $m$ EPR pairs
\(
\ket{\Phi^+}_{V_1Q_1}\otimes \cdots \otimes \ket{\Phi^+}_{V_mQ_m},
\)
and sends the qubit registers $Q_1,\ldots,Q_m$ to the prover. At a later stage, \Verifzero{} measures the local registers $V_1\ldots V_m=:V$ using the measurement
\[
\left\{
M_v^{f(x,y)}
\right\}_{v\in\{0,1\}^m},
\]
where
\[
M_v^{f(x,y)} := H^{f(x,y)}\ketbra{v}{v}_V H^{f(x,y)}=\otimes_{i=1}^mH^{f(x,y)_i}\ketbra{v}{v}_{V_i}H^{f(x,y)_i}.
\]
where each qubit is measured either in the computational or Hadamard basis according to $f(x,y)$. Since the protocol only verifies the coordinates in $C \subseteq [m]$, the measurement can equivalently be restricted to the subsystem $V_C$. More precisely, it suffices to consider the induced POVM on $V_C$ given by
\[
\left\{
 M_{v_C}^{f(x,y)_C}
\right\}_{v_C\in\{0,1\}^{|c|}},
\]
where $ M_{v_C}^{f(x,y)_C}:=H^{f(x,y)_C}\ketbra{v_C}{v_C} H^{f(x,y)_C}$. Equivalently, only the outcomes on $C$ are relevant for verification, and the remaining registers can be disregarded.

\newpage
\noindent A general attack on {\cQPVBBfparallel}, see \cref{fig:attack-parallel_repBB84}, follows the same structure as in \cite{allerstorfer2023makingexistingquantumposition}:
\begin{enumerate} \item Alice intercepts the $m$-qubit state $Q_1\ldots Q_m$ and applies an arbitrary quantum operation to it and to a local register that she possess, possibly entangling them. She keeps part of the resulting state, and sends the rest to Bob. Since the qubits $Q_1\ldots Q_m$  can be sent arbitrarily slowly by $V_0$ (the verifiers only time the classical information), this happens before Alice and Bob can intercept $x$ and $y$. Denote by $\rho$ their joint state at this stage. 

\item Alice and Bob intercept $x$ and $y$, make a copy and send it to the other attacker, respectively. Because of the relativistic constraints, the attackers have to commit before they receive the classical information from the other party, thus their commitment can only depend on $x$ or $y$, respectively. The most general operation they can perform is to use local quantum instruments $\{\mathcal{I}^x_{c_A}\}_{c_A\in\{0,1\}^m}$ and $\{\mathcal{J}^y_{c_B}\}_{c_B\in\{0,1\}^m}$ on their registers of $\rho$ to determine the commitments $c_A$ and $c_B$, respectively. They send the commitments $c_A, c_B$ to the verifier closest to them at the appropriate time. Assume $c_A=c_B=:c$ and $\abs c\geq k$, otherwise the location is \emph{rejected}. Let 
\begin{equation}\label{eq:post-commitment}
    \Tilde{\mathcal{I}}^{xy}_c(\rho):=\frac{(\mathbb I_V\otimes \mathcal{I}^{x}_{c}\otimes \mathcal{J}^{y}_{c})(\rho)}{\tr{(\mathbb I_V\otimes\mathcal{I}^{x}_{c}\otimes \mathcal{J}^{y}_{c})(\rho)}},
\end{equation}

which corresponds to the post-selected state corresponding to the commitment $c$, given inputs $x$ and $y$. Denote by  $A:=A_\text{k}A_\text{s}$ and $B=:B_\text{k}B_\text{s}$ the registers of $\Tilde{\mathcal{I}}^{xy}_c(\rho)$ that Alice and Bob hold, where k and com denote the registers that will be kept and sent, respectively.

\item Alice sends register $A_\text{s}$ and $x$ to Bob (and keeps register $A_\text{k}$), and Bob sends register $B_\text{s}$ and $y$ to Alice (and keeps register $B_\text{s})$. 

\item Alice and Bob perform POVMs  $\{ A^{xyc}_{a_C}\}_{a_C\in\{0,1\}^{\abs{c}}}$ and $\{ B^{xyc}_{b_C}\}_{b_C\in\{0,1\}^{\abs{c}}}$ on their local registers $A_\text{k}B_\text{s}=:A'$ and $B_\text{k}A_\text{s}=:B'$, and answer their outcomes $a_C$ and $b_C$ to their closest verifier, respectively. 
\end{enumerate}

An attack on \QPVBBfparallel{} is described by replacing the instruments
$\{\mathcal{I}^{x}_{c_A}\}_{c_A}$ and $\{\mathcal{J}^{y}_{c_B}\}_{c_B}$
with quantum channels $\mathcal{E}_A^x$ and $\mathcal{E}_B^y$, respectively,
where in the original protocol the commitment outcome is fixed to
$c_A=c_B=1\ldots1$. 
Formally, the BE$(q)$ model is the setting in which Alice and Bob each hold at most $q$ qubits 
of the joint state $(\mathcal{E}_A^x\otimes \mathcal E_B^y)(\rho)$. Similarly, the BE$(q)$ model for \cQPVBBfparallel{} considers states $\Tilde{\mathcal{I}}^{xy}_c(\rho)$ where Alice and Bob each hold at most $q$ qubits of the joint state conditioned on $(x,y,c)$.

\begin{figure}[h]
    \centering
\begin{tikzpicture}[node distance=3cm, auto]
    \node (V0) {\Verifzero};
    \node [above=0.1cm of V0, xshift=-2pt](V0pos){$|$};
    \node [right=1cm of V0pos, xshift=13pt](Apos){$|$};
    \node [right=2.5cm of V0pos, xshift=18pt](Ppos){$|$};
    
    \node [right=4.5cm of V0pos, xshift=10pt](Bpos){$|$};
    \node [right=6.5cm of V0pos, xshift=2pt](V1pos){$|$};
    %\node [left=1cm of V0] {};
    \node [right=of V0] (P) {};

    \node[below=0.2cm of P, yshift=-1pt](psi_0){$\rho$};
    \node[below=0.7cm of P, yshift=-2pt](psi_xy){{$\rho^{xyc}$}};
\node[below=4cm of P, yshift=-4pt](psi_xy){$\rho^{xyc}$};

    \node [right=of P] (V1) {};
    \node [right=of P] (C) {\Verifone};
    \node [right=1.5 of V0] (A_0) {};
    \node [below=0.5 of A_0, xshift=-5pt] (A) {$\mathcal{I}^x_{c_A}$};

    \node [right=4.5 of V0] (B_0) {};
    \node [below=0.5 of B_0, xshift=5pt] (B) {$\mathcal J^y_{c_B}$};

    \begin{scope}
    \fill[gray!30, draw=black, opacity=0.25, rounded corners=2pt]
        ($(A)+(-2,-0.05)$)
        rectangle
        ($(B)+(0.8,-0.45)$);
\end{scope}
    
    \node [below= of A] (A1) {$\{A^{xyc}_{a_C}\}_{a_C}$};
    \node [below= of B] (B1) {$\{B^{xyc}_{b_C}\}_{b_C}$};

    \begin{scope}[]
    \fill[gray!30, draw=black, opacity=0.25, rounded corners=2pt]
        ($(A1)+(-2,-0.25)$)
        rectangle
        ($(B1)+(0.8,0.2)$);
\end{scope}

    \node [below= 5cm of V0, yshift=-8pt, xshift=12pt] (V0final) {};
    \node [below= 5cm of V1, yshift=-10pt, xshift=-10pt] (V1final) {};

    \node [right=1.2cm of V0, xshift=-10pt] (Alice) {Alice};
    \node[right= 1cm of Alice, xshift=0pt, yshift=-1pt](pos){$pos$};
    \node [right= 4.3cm  of V0] (Bob) {Bob};

    \node [above=0.3cm of V0] (posV00) {};
    \node [left=1cm of posV00] (posV0) {};
    \node [above=0.3cm of C] (posV11) {};
    \node [right=0.5cm of posV11] (posV1) {};
    \draw [->] (posV0) -- (posV1) node[midway,yshift=5pt, xshift=5pt] {position};

    \node [left=1cm of V0](t0){};
    \node [below=5.5cm of t0](t1){};
    \draw [->] (t0) -- (t1) node[midway] {time};

  \begin{scope}    \fill[gray!30, draw=black, opacity=0.25, rounded corners=2pt]        ($(A)+(-2,0.2)$)        rectangle        ($(B)+(0.8,0.5)$);\end{scope}

 \draw[->,quantum, transform canvas={xshift=4pt, yshift=0pt}] (A)--(A1){};

 \node[below=2.4cm of A](Acommit){};

 \draw[-,transform canvas={xshift=-3pt, yshift=0pt}] (A)--(Acommit){};
  \draw[-,transform canvas={xshift=-5pt, yshift=0pt}] (A)--(Acommit){};
   \draw[-,transform canvas={xshift=-7pt, yshift=0pt}] (A)--(Acommit){};
    \draw[-,transform canvas={xshift=-9pt, yshift=0pt}] (A)--(Acommit){};
     \draw[-,transform canvas={xshift=-11pt, yshift=0pt}] (A)--(Acommit){};
      \draw[-,transform canvas={xshift=-1pt, yshift=0pt}] (A)--(Acommit){};

\node [below= 4.4cm of V0, yshift=12pt, xshift=30pt] (V0com) {};
 \node[above=0.1cm of Acommit, yshift=-3pt, xshift=3.5pt](Acommit1){};
 
      \draw[->,transform canvas={xshift=-3pt, yshift=0pt}] (Acommit1)--(V0com){};
      \draw[->,transform canvas={xshift=-5pt, yshift=0pt}] (Acommit1)--(V0com){};
      \draw[->,transform canvas={xshift=-7pt, yshift=0pt}] (Acommit1)--(V0com){};
      \draw[->,transform canvas={xshift=-9pt, yshift=0pt}] (Acommit1)--(V0com){};
      \draw[->,transform canvas={xshift=-11pt, yshift=0pt}] (Acommit1)--(V0com){};
      \draw[->,transform canvas={xshift=-1pt, yshift=0pt}] (Acommit1)--(V0com){};

\draw[->,quantum, transform canvas={xshift=4pt, yshift=0pt}] (A)--(A1){};

\draw[->,quantum, transform canvas={xshift=-4pt, yshift=0pt}] (B)--(B1){};

 \node[below=2.4cm of B](Bcommit){};

 \draw[-,transform canvas={xshift=3pt, yshift=0pt}] (B)--(Bcommit){};
  \draw[-,transform canvas={xshift=5pt, yshift=0pt}] (B)--(Bcommit){};
   \draw[-,transform canvas={xshift=7pt, yshift=0pt}] (B)--(Bcommit){};
    \draw[-,transform canvas={xshift=9pt, yshift=0pt}] (B)--(Bcommit){};
     \draw[-,transform canvas={xshift=11pt, yshift=0pt}] (B)--(Bcommit){};
      \draw[-,transform canvas={xshift=1pt, yshift=0pt}] (B)--(Bcommit){};

\node [below= 4.4cm of V1, yshift=7pt, xshift=-25pt] (V1com) {};
 \node[above=0.1cm of Bcommit, yshift=-3pt, xshift=-3.5pt](Bcommit1){};
 
      \draw[->,transform canvas={xshift=3pt, yshift=0pt}] (Bcommit1)--(V1com){};
      \draw[->,transform canvas={xshift=5pt, yshift=0pt}] (Bcommit1)--(V1com){};
      \draw[->,transform canvas={xshift=7pt, yshift=0pt}] (Bcommit1)--(V1com){};
      \draw[->,transform canvas={xshift=9pt, yshift=0pt}] (Bcommit1)--(V1com){};
      \draw[->,transform canvas={xshift=11pt, yshift=0pt}] (Bcommit1)--(V1com){};
      \draw[->,transform canvas={xshift=1pt, yshift=0pt}] (Bcommit1)--(V1com){};

    \draw[->,quantum] (A)--(B1){};
    \draw[->,quantum] (B)--(A1){};

    \draw [->,transform canvas={xshift=0pt, yshift=0pt}] (A1) -- (V0final) ;
    %\draw [->,transform canvas={xshift=2pt, yshift=0pt}] (A1) -- (V0final) ;
    \draw [->,transform canvas={xshift=4pt, yshift=0pt}] (A1) -- (V0final) ;
    %\draw [->,transform canvas={xshift=-2pt, yshift=0pt}] (A1) -- (V0final) ;

    \draw [->,transform canvas={xshift=0pt, yshift=0pt}] (B1) -- (V1final) ;
    %\draw [->,transform canvas={xshift=2pt, yshift=0pt}] (B1) -- (V1final) ;
    \draw [->,transform canvas={xshift=4pt, yshift=0pt}] (B1) -- (V1final) ;
    %\draw [->,transform canvas={xshift=-2pt, yshift=0pt}] (B1) -- (V1final) ;

    \node[right = 0.1cm of V0final, fill=white, fill opacity=0.7, text opacity=1, xshift=2pt,yshift=12pt] (V0ans) {$a_C$};
    \node[left= 0.1cm of V1final, fill=white, fill opacity=0.7, text opacity=1, xshift=3pt,yshift=12pt] (V0ans) {$b_C$};

    \node[above = 1.5cm of V0final, xshift=15pt,yshift=12pt] () {$c_A$};

    \node[above = 1.5cm of V1final, xshift=-12pt,yshift=12pt] () {$c_B$};

    \end{tikzpicture}
\caption{Schematic representation of a general attack on \cQPVBBfparallel. }
\label{fig:attack-parallel_repBB84}
\end{figure}

\noindent To enhance readability, consider
   \begin{equation}\label{eq:check_correctness}
       \Pi_{VAB}^{xyc}:=\sum_{v_C}\left(M^{f(x,y)_C}_{v_C}  \otimes_{i\notin C}\mathbb I_{V_{i}}\otimes \sum_{a:d_H(v_C,a)\leq \gamma |c|}A^{xyc}_{a}\otimes B^{xyc}_{a}\right).
   \end{equation}
The operator $ \Pi_{VAB}^{xyc}$ corresponds to the accepting event for inputs $x$ and $y$ and commitment $c$, aggregating all outcomes in which the attackers' answers are consistent within error $\gamma$. The tuple
\begin{equation}\label{eq:tuple_strategy}
    S := \Big(
\rho,\;
\{\mathcal I^x_c\}_{x,c},\;
\{\mathcal J^y_c\}_{y,c},\;
\{A^{xyc}_{a}\}_{x,y,c,a},\;
\{B^{xyc}_{b}\}_{x,y,c,b}
\Big)
\end{equation}
is referred to as a \emph{strategy} for \cQPVBBfparallel. 
Then, given the strategy $S$, the probability that Alice and Bob are accepted is given by
\begin{equation}\label{eq:pr_accept}
\begin{split}
    \Pr[\mathrm{accept} \,\,\,\mathsf{c}\textrm{-}\mathrm{QPV}_{\mathrm{BB84}}^{f: n\rightarrow m} ]=\frac{1}{2^{2n}}\sum_{x,y}\sum_{c:\abs{c}\geq k}\tr{(\mathcal{I}^{x}_{c}\otimes \mathcal{J}^{y}_{c})(\rho)}\tr{\Pi^{xyc}_{VAB}\rho^{xyc}}.
\end{split}
\end{equation}
This can be justified as follows. 
In order to accept the location, it is required that (i) $c_A=c_B=:c$ and $|c|\geq k$, 
and (ii) $ d_H(v_C,w_C) \leq \gamma |c|$. 
Thus, given $S$,

\begin{equation}\label{eq:pr_accept_long}
\begin{split}
    \Pr[\mathrm{accept }\; \text{\cQPVBBfparallel}]
&=
\Pr\!\big[ \textrm{accept }c,\, \textrm{accept }w_C
\big]=\frac{1}{2^{2n}}\sum_{x,y}\Pr\!\big[\textrm{accept }c,\, \textrm{accept }w_C\mid x,y\big]\\
&=\frac{1}{2^{2n}}\sum_{x,y}\Pr\!\big[\textrm{accept }c\mid x,y\big]\Pr\!\big[ \textrm{accept }w_C
\mid x,y,\textrm{accept }c]\\
&=\frac{1}{2^{2n}}\sum_{x,y}\sum_{c:\abs{c}\geq k}\Pr\!\big[c,c\mid x,y\big]\Pr\!\big[\textrm{accept }w_C
\mid x,y, c\big].
\end{split}
\end{equation}
Then equation \eqref{eq:pr_accept} is recovered, using \eqref{eq:check_correctness} and 
the fact that, provided the strategy $S$, Alice and Bob will reproduce the commitments given by the distribution
\begin{equation}\label{eq:pr[ca,cb|x,y]}
    \Pr[c_A,c_B\mid x,y]=\tr{(\mathcal{I}^{x}_{c_A}\otimes \mathcal{J}^{y}_{c_B})(\rho)}.
\end{equation}

At this point, a natural concern is that the decomposition in~\eqref{eq:pr_accept_long} allows the adversaries to correlate their choice of commitment $c$ with their subsequent success probability in the verification test, potentially biasing the product
\[
\Pr[c_A=c_B=c \mid x,y]\;\Pr[\mathrm{accept}\, w_C \mid x,y,c].
\]
In particular, one might imagine that nonlocal correlations could allow Alice and Bob to select commitments that are simultaneously more likely to pass the verification test, thereby artificially boosting the overall acceptance probability.

However, such a strategy would require the adversaries to coordinate their choice of $c$ in a way that depends on both $x$ and $y$ before the corresponding classical information is jointly available at their respective locations. This is precisely ruled out by the no-signalling principle, which constrains the joint distribution of $(c_A,c_B)$ conditioned on $(x,y)$.

We will make this intuition formal by exploiting no-signalling constraints on the induced correlations. We show that adversaries cannot gain any advantage from conditioning on $c$, and the acceptance probability can be bounded in a way that preserves security of the original protocol.

\subsection{Failure of Sequential Repetition Proof Techniques}

Notice that in the commitment setting the adversary’s strategy space is significantly richer than for \QPVBBfparallel, as the state preparation \eqref{eq:post-commitment} may depend on the full tuple $(x,y,c_A,c_B)$ instead of $(x,y)$. In particular, Alice and Bob may correlate their local operations with the commitment strings, effectively leading to a family of conditional states indexed by all such tuples. This increases the number of potential states from $2^{2n}$ in the original protocol to $2^{2n+2m}$ in the commitment version.

The first obstacle to extending the techniques of~\cite{allerstorfer2023makingexistingquantumposition} arises from the following issue. In the sequential setting, each round produces a binary indicator $c \in \{0,1\}$, and only the case $c=1$ (conclusive rounds) is relevant for the security analysis. This reduces the analysis to a single case within the strategy. In contrast, in the present setting any string $c \in \{0,1\}^m$ with $|c|\geq k$ is accepted. The number of such admissible commitments is
\begin{equation}
\sum_{i=k}^{m} \binom{m}{i},
\end{equation}
which grows exponentially in $m$, and prevents a reduction to a single effective outcome class.

On the other hand, the second obstacle comes from the sequential repetition setting of~\cite{allerstorfer2023makingexistingquantumposition}: each round produces a binary outcome $c^i \in \{0,1\}$, where $c^i=1$ corresponds to a conclusive round. The analysis in that setting is based on per-round conditional bounds by showing that either in each round the attackers have a low probability $\varepsilon_i^2$ of committing differently, 
where the parameters $\varepsilon_i$ may be chosen adaptively and depend on the previous rounds, or they are caught by inconsistent commitments. A key feature of the proof is that, conditioned on not being detected throughout the sequential execution, one can control the aggregate effect of these per-round disagreement probabilities, ensuring that their total contribution remains small. 
This yields global control over the adversary’s behavior across the execution, which is then used to derive structural properties of the induced states $\rho^{xyc}$ that are ultimately exploited in the security proof.

In the present single-execution setting, this structure is absent. There is no notion of repeated rounds and even events with small but non-negligible probability for specific commitments $c$ may compromise security. Moreover, the verification is global in the sense that it depends on the entire bit strings $c_A$ and $c_B$, preventing any a priori decomposition into per-index constraints on $(c_A,c_B)$.

\subsection{Exponential Soundness for the Commitment Setting}

A quantity that will be relevant is the probability of accepting the measurement outcome conditioned on a fixed commitment $c$:
\begin{equation}
\Pr[\textnormal{accept } w_C \mid c]=\frac{1}{2^{2n}}\sum_{xy}\Pr[\textnormal{accept } w_C \mid x,y, c].
\end{equation}
Since the subset $C \subseteq [m]$ is fixed by the commitment, this quantity can be interpreted (and upper bounded) as the success probability of an attack on the  \QPVBBfparallel{} protocol restricted to the coordinates in $C$, see~\cref{eq:conditioned on c} in~\cref{sec:appendix}.  We adapt the techniques of~\cite{escolàfarràs2025quantumpositionverificationshot} to this setting and show that, with high probability over the choice of~$f$, the acceptance probability in the BE($O(n)$) model decays exponentially in $|C|$. The result is summarized in the following lemma, and the proof is deferred to \cref{sec:appendix}, as it closely follows the argument in~\cite{escolàfarràs2025quantumpositionverificationshot}.

\begin{lemma} \label{lem:fix_c}Let $c\in\{0,1\}^m$ with $\abs{c}\geq k$,  $\varepsilon\leq2^{-m-1}$ and $\lambda_\gamma$ as defined in \eqref{eq:lambda_gamma}. 
With probability at least $1-2^{-k({\lambda_\gamma^{k}}/{(2n^2)})2^{2n}}$, 
a uniformly random $f\in \mathcal F^\varepsilon$ will be such that, if 
  
     \begin{equation}\label{eq:soundess_theorem}
  2q<n-2\log n-k\log\frac{1}{\lambda_\gamma}-\log\left(\frac{8m\log({n}/{\lambda_\gamma)}}{k(1-\lambda_\gamma-1/n)}\right),
    \end{equation}    
    then,
     \begin{equation*}\label{eq:w_theorem}
        \Pr[\textnormal{accept } w_C
\mid c]\leq \left(\lambda_\gamma\left(1+\frac1n\right)\right)^{(1-\frac1n)k}
\left(1+6\sqrt{\ln(2/\varepsilon)}\,2^{-n+m/2}\right)
+21\left(\frac{\lambda_\gamma}{n}\right)^m=:{\delta_k} .
    \end{equation*}
\end{lemma}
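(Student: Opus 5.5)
The plan is to reduce the conditioned acceptance probability to an attack on an instance of \QPVBBfparallel{} restricted to the coordinates in $C$, and then rerun the argument of \cite{escolàfarràs2025quantumpositionverificationshot} (underlying \cref{thm:soundess_BB84fparallel}) with $|c|$ playing the role of $m$.

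\textbf{Step 1: reduction to a restricted protocol.} Fix $c$ with $|c|\geq k$. For each $(x,y)$, the post-selected state $\rho^{xyc}=\Tilde{\mathcal I}^{xy}_c(\rho)$ of \eqref{eq:post-commitment} together with the POVMs $\{A^{xyc}_a\},\{B^{xyc}_b\}$ defines a strategy against the $|c|$-qubit protocol with basis function $f_C:=f(\cdot,\cdot)_C$. Tracing out $V_{[m]\setminus C}$ leaves the verifier's registers $V_C$ still maximally entangled with the attackers' systems. In the BE$(q)$ model each attacker holds at most $q$ qubits of $\rho^{xyc}$. Then
\[
\Pr[\textnormal{accept } w_C\mid c]=2^{-2n}\sum_{x,y}\mathrm{Tr}[\Pi^{xyc}_{VAB}\rho^{xyc}].
\]
This is exactly of the form analysed in \cite{escolàfarràs2025quantumpositionverificationshot}, with one caveat: the state may depend on $(x,y)$ before the final message exchange. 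That dependence is already present in the original analysis, since there too $(\mathcal E_A^x\otimes\mathcal E_B^y)(\rho)$ depends on $x$ and $y$ locally. The product structure is preserved after normalisation because $\mathcal I^x_c\otimes\mathcal J^y_c$ acts locally on each side. The security proof there only uses that the $q$-qubit state shared by the attackers is fixed up to local $x$- and $y$-dependent operations before the exchange, so it applies.

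\textbf{Step 2: rerun the counting and union bound for $f_C$.}
\begin{itemize}
\item Since $f\in\mathcal F^\varepsilon$, the marginal $q_{f_C}$ is obtained by summing $q_f$ over $2^{m-|c|}$ strings. It is therefore within $[2^{-|c|}\pm 2^{m-|c|}\sqrt{3\ln(2/\varepsilon)}2^{-n-m/2}]$, which produces the factor $1+6\sqrt{\ln(2/\varepsilon)}2^{-n+m/2}$.
\item The core of the original argument is a discretisation (net) of the $q$-qubit strategies and classical rounding to $O(\log)$ bits. For each rounded strategy, a uniformly random $f$ makes the strategy succeed on more than a $\lambda_\gamma^{|c|}$-type fraction of $(x,y)$ only with probability doubly exponentially small in $2^{2n}$.
\item The per-pair monogamy-of-entanglement bound gives $\lambda_\gamma^{|c|}\le\lambda_\gamma^k$, since $\lambda_\gamma<1$. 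This explains why the final bound features $k$ in place of $m$, the probability $1-2^{-k\lambda_\gamma^k2^{2n}/(2n^2)}$, and the term $k\log(1/\lambda_\gamma)$ in \eqref{eq:soundess_theorem}.
\item The extra factor $m/k$ inside the last logarithm of \eqref{eq:soundess_theorem} accounts for the rounding precision needed, because answers and bases are indexed in $[m]$ rather than $[|c|]$.
\item The residual term $21(\lambda_\gamma/n)^m$ is inherited unchanged from \cref{thm:soundess_BB84fparallel}.
\end{itemize}

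\textbf{Main obstacle.} The delicate part is the random-$f$ union bound. The bad-event probability must be controlled with $f$ uniform in $\mathcal F^\varepsilon$ but only $f_C$ mattering. The restricted function $f_C$ is not uniform over functions into $\{0,1\}^{|c|}$, although its values are independent across $(x,y)$, which is what the Chernoff/union-bound step needs. I would handle this by conditioning on membership in $\mathcal F^\varepsilon$, at cost $\varepsilon\le2^{-m-1}$, and checking that the exponent $k\lambda_\gamma^k2^{2n}/(2n^2)$ still dominates the logarithm of the net size, $2^{O(q\cdot 2^{n}\ldots)}$. That condition is precisely \eqref{eq:soundess_theorem}.
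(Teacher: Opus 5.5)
Your outline coincides with the paper's Appendix~\ref{sec:appendix}. You fix $c$, pass to an attack on \fCQPVBBfparallel{}, and rerun the monogamy-game, set-valued-rounding and counting argument with $\abs{c}$ in place of $m$. The step that fails is Step~1. The appendix makes the same move (it ``relaxes'' to states $(U^{x,c}\otimes V^{y,c})\ket{\psi^c}$ with $\ket{\psi^c}$ independent of $(x,y)$), so this is not an idea you missed from the paper. By \cref{lemma: instrument and channel}, $\tilde{\mathcal I}^{xy}_c(\rho)=(\mathcal E^x_c\otimes\mathcal E^y_c)(\rho^{xy}_c)$ with $\rho^{xy}_c\propto(\sqrt{M^x_c}\otimes\sqrt{M^y_c})\,\rho\,(\sqrt{M^x_c}\otimes\sqrt{M^y_c})$. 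The normalisation $\tr{(M^x_c\otimes M^y_c)\rho}$ depends jointly on $(x,y)$, so post-selection lets Alice's share depend on $y$ and Bob's on $x$. For instance, if Bob commits only when measuring his halves of shared EPR pairs yields $g(y)$, Alice's halves are afterwards in $\ket{g(y)}$. By contrast, Alice's marginal of any $(U^x\otimes V^y)\ket{\psi}$ is independent of $y$. So the product structure is \emph{not} preserved by normalisation, and the rounding argument, which discretises local unitaries acting on one fixed state, does not apply. Discretising $\sqrt{M^x_c}$ and $\sqrt{M^y_c}$ separately produces errors amplified by the inverse commitment probability, which can be exponentially small.

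In the BE$(q)$ model as the paper defines it (only $\tilde{\mathcal I}^{xy}_c(\rho)$ is constrained), no argument can close this, because the statement itself fails. Let $\rho$ contain, besides $Q$, a uniform $r=(r_A,r_B)\in\{0,1\}^{2n}$ copied to both attackers. Alice commits $1\ldots1$ iff $r_A=x$, in which case she measures $Q$ in basis $f(r_A,r_B)$, keeps the outcome and sends a copy to Bob. Bob commits $1\ldots1$ iff $r_B=y$. For every $(x,y)$, $\Pr[c_A=c_B=1\ldots1\mid x,y]=2^{-2n}>0$, and conditioned on this event $r=(x,y)$. Both attackers therefore answer correctly with certainty while holding only $2m$ classical bits, which \eqref{eq:soundess_theorem} permits once $n$ is somewhat larger than $4m$. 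Hence $\Pr[\textnormal{accept } w_C\mid 1\ldots1]=1>\delta_k$ for every $f$. What is missing is the commitment weight; here $\Pr_A[1\ldots1\mid x]=2^{-n}$. A correct argument has to bound $\sum_c\Pr[c,c\mid x,y]\Pr[\textnormal{accept } w_C\mid x,y,c]$, or otherwise keep those weights, rather than the unweighted conditional average. Separately, $\lambda_\gamma^{\abs c}\le\lambda_\gamma^k$ does not account for the $k$ in the failure probability or in \eqref{eq:soundess_theorem}. The quantity $\abs{c}\lambda_\gamma^{\abs c}$ is not monotone, and the $q$-condition with $\abs{c}$ is stronger than with $k$, so the threshold step would have to be run with $(\lambda_\gamma+\Delta)^{\alpha k}$ instead.
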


\cref{lem:fix_c} gives a bound on the acceptance probability conditioned on a fixed commitment string $c$. In the \cQPVBBfparallel{} protocol, however, the commitment $c$ is generated as part of the adversarial strategy and may depend on the inputs $x$ and $y$, so that different executions correspond to potentially different conditional choices of $c$.

A crucial observation is that the adversaries induce marginal distributions for $c_A$ and $c_B$ which, by the no-signalling constraints imposed by their spatial separation, cannot depend on both inputs $x$ and $y$ simultaneously. This property is what allows us to control how the commitment distribution interacts with the input sampling.

The key point is that Lemma~\ref{lem:fix_c} can be used within the global expression for the acceptance probability to control the contribution arising from each fixed commitment string. We then combine this with a decomposition over inputs, which isolates the part of the analysis where this bound is effective, while the remaining contribution can be bounded trivially. Putting these ingredients together yields a global exponentially decaying bound on the acceptance probability, which is captured in the following theorem.

\begin{theorem}\label{thm:main} Let $\varepsilon\leq2^{-m-1}$, let $\lambda_\gamma$ be as in \cref{eq:lambda_gamma} and 
\begin{equation}\label{eq:conditionq_thm}
    2q<n-2\log n-k\log\frac{1}{\lambda_\gamma}-\log\left(\frac{8m\log({n}/{\lambda_\gamma)}}{k(1-\lambda_\gamma-1/n)}\right).
\end{equation}
Then, with probability at least ${1-2^{-k(\lambda_\gamma^k/(2n^2))2^{2n}+m}}$ for a uniformly random choice of $f \in \mathcal F^\varepsilon$, the probability that the verifiers \emph{accept} any attackers in the  \emph{BE}$(q)$ model is bounded by
\begin{equation}
     \Pr[\textnormal{accept} \,\,\textnormal{\cQPVBBfparallel}]
    \leq2\sqrt{
    \left(\lambda_\gamma\left(1+\frac1n\right)\right)^{(1-\frac1n)k}
\left(1+6\sqrt{\ln(2/\varepsilon)}2^{-n+m/2}\right)
+21\left(\frac{\lambda_\gamma}{n}\right)^m}.
\end{equation}
\end{theorem}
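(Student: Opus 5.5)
We start from the expression \eqref{eq:pr_accept} and write $P(c\mid x,y):=\Pr[c_A=c_B=c\mid x,y]$ and $p_{xyc}:=\Pr[\textnormal{accept } w_C\mid x,y,c]$. The acceptance probability is then $\frac{1}{2^{2n}}\sum_{x,y}\sum_{|c|\ge k}P(c\mid x,y)\,p_{xyc}$.

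\textbf{Step 1 (choice of $f$).} Apply \cref{lem:fix_c} to every commitment string $c$ with $|c|\ge k$. There are at most $2^m$ such strings. A union bound shows that with probability at least $1-2^{m}\cdot 2^{-k(\lambda_\gamma^k/(2n^2))2^{2n}}$, a uniformly random $f\in\mathcal F^\varepsilon$ satisfies $\Pr[\textnormal{accept } w_C\mid c]=\frac{1}{2^{2n}}\sum_{x,y}p_{xyc}\le\delta_k$ simultaneously for all admissible $c$ and all BE$(q)$ strategies. This accounts for the extra $+m$ in the exponent of the failure probability.

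\textbf{Step 2 (no-signalling).} Define the marginals $P_A(c\mid x):=\sum_{c_B}\Pr[c,c_B\mid x,y]$ and $P_B(c\mid y)$ analogously. By \eqref{eq:pr[ca,cb|x,y]} and no-signalling, $P_A(c\mid x)$ does not depend on $y$, and $P_B(c\mid y)$ does not depend on $x$. Moreover, $P(c\mid x,y)\le\min\{P_A(c\mid x),P_B(c\mid y)\}\le\sqrt{P_A(c\mid x)P_B(c\mid y)}$.

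\textbf{Step 3 (decoupling inputs from commitments).} The aim is to bound $\sum_c P(c\mid x,y)p_{xyc}$ by something that separates into a factor depending on $x$ and a factor depending on $y$, so that it can be paired with the averaged bound from Step 1. One option is Cauchy--Schwarz, using $p_{xyc}\le\sqrt{p_{xyc}}$ since $p_{xyc}\in[0,1]$:
\[
\frac{1}{2^{2n}}\sum_{x,y,c}P(c\mid x,y)\,p_{xyc}\le\frac{1}{2^{2n}}\sum_{x,y,c}\sqrt{P_A(c\mid x)P_B(c\mid y)}\,\sqrt{p_{xyc}}.
\]
This form alone is awkward, so I would instead use a threshold decomposition, matching the paper's mention of "a decomposition over inputs". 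Fix a parameter $t>0$ and split the pairs $(x,y,c)$ into two sets.

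\begin{itemize}
\item \emph{Good set:} pairs with $P_A(c\mid x)\le t\,\bar P_A(c)$, where $\bar P_A(c)=2^{-n}\sum_x P_A(c\mid x)$. On this set, $\sum P(c\mid x,y)p_{xyc}\le t\sum_c\bar P_A(c)\sum_{x,y}p_{xyc}/2^{2n}\le t\,\delta_k$ after normalisation, using Step 1.
\item \emph{Bad set:} the remaining pairs. Their total weight is at most $\sum_c\Pr_x[P_A(c\mid x)>t\bar P_A(c)]\cdot(\ldots)$, which Markov's inequality bounds by about $1/t$.
\end{itemize}

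The hope is that balancing gives $t\delta_k+1/t$, and choosing $t=\delta_k^{-1/2}$ yields $2\sqrt{\delta_k}$, exactly the claimed bound.

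\textbf{Main obstacle.} The delicate step is making the good-set estimate rigorous. $P(c\mid x,y)$ sits next to $p_{xyc}$, and the averaged bound from Step 1 weights all $x$ uniformly, so one must show that replacing $P(c\mid x,y)$ by $t\bar P_A(c)$ is legitimate and that the weights $\bar P_A(c)$ sum to at most $1$. The bad-set Markov estimate must be summed over $c$ carefully, weighting by $P_A(c\mid x)$ rather than counting strings, to avoid a factor of $2^m$.

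If the one-sided threshold fails, I would fall back on a product-form bound $P(c\mid x,y)\le\sqrt{P_A P_B}$ with weighted Cauchy--Schwarz over $(x,y,c)$. The target in that case is to bound
\[
\sum_c\Big(\frac{1}{2^{2n}}\sum_{x,y}P_A(c\mid x)P_B(c\mid y)\Big)^{1/2}\Big(\frac{1}{2^{2n}}\sum_{x,y}p_{xyc}\Big)^{1/2}.
\]
The factor of $2$ should then come from combining the two halves.
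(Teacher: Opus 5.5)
Your reduction and Step~1 (union bound over the at most $2^m$ admissible $c$) coincide with the paper. However, the threshold decomposition you put forward as the main route has a genuine gap, exactly where you suspected.

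After $P\le P_A$ and $p_{xyc}\le1$, the bad-set contribution is $\sum_c\mathbb{E}_x\bigl[P_A(c\mid x)\,\mathbf{1}\{P_A(c\mid x)>t\bar{P}_A(c)\}\bigr]$. Markov's inequality controls only $\Pr_x[P_A(c\mid x)>t\bar{P}_A(c)]\le 1/t$, not this $P_A$-weighted mass. Take deterministic commitments $P_A(c\mid x)=[c=c(x)]$ whose fibres all have measure below $1/t$: every contributing term is bad, and you get only $t\delta_k+1$.

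More careful weighting by $P_A$ cannot help, because no argument that bounds $P(c\mid x,y)$ by Alice's marginal alone can work. Suppose there are many more than $1/\delta_k$ admissible strings (e.g.\ $m\ge 2k$ gives $\binom{m}{k}\delta_k\ge(2\lambda_\gamma)^k\gg1$). Partition $\{0,1\}^n$ into sets $X_c$ of measure at most $\delta_k$ and put $P_A(c\mid x)=p_{xyc}=[x\in X_c]$. Every per-$c$ average is then at most $\delta_k$, yet $\mathbb{E}_x\sum_c P_A(c\mid x)\,\mathbb{E}_y\,p_{xyc}=1$.

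The missing idea is to charge the bad set to \emph{Bob's} marginal; your placeholder $(\ldots)$ should be $\bar{P}_B(c)$. The bad set depends only on $(x,c)$, and $P(c\mid x,y)\le P_B(c\mid y)$. Independence of $x$ and $y$ then bounds the bad-set contribution by $\sum_c\bar{P}_B(c)\Pr_x[P_A(c\mid x)>t\bar{P}_A(c)]\le 1/t$. Choosing $t=\delta_k^{-1/2}$ gives exactly $2\sqrt{\delta_k}$.

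The paper's proof is a one-sided split of the same kind: it bounds $\Pr[c,c\mid x,y]\le\Pr_A[c\mid x]$ and applies Markov to $\omega^c_x$. It needs the same repair. Its set $T$ is obtained for a fixed $c$ but then used inside $\sum_c$, so $\frac{1}{2^n}\sum_{x\in\bar T}\sum_c\Pr_A[c\mid x]\le|\bar T|/2^n$ is unjustified once $T=T_c$ varies with $c$. Bounding the $\bar T_c$ terms via $\Pr_B[c\mid y]$ instead fixes it.

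Your fallback already contains the two-sided idea and finishes in one line, with a better constant. Combine $P(c\mid x,y)\le\sqrt{P_A(c\mid x)P_B(c\mid y)}$, Cauchy--Schwarz in $(x,y)$ with $p_{xyc}^2\le p_{xyc}$, independence of $x$ and $y$, and Cauchy--Schwarz in $c$:
\[
\Pr[\mathrm{accept}]\le\sum_{c:\,|c|\ge k}\sqrt{\bar{P}_A(c)\,\bar{P}_B(c)}\,\sqrt{\mathbb{E}_{x,y}\,p_{xyc}}\le\sqrt{\delta_k}\sum_{c}\sqrt{\bar{P}_A(c)\,\bar{P}_B(c)}\le\sqrt{\delta_k}.
\]
No factor $2$ arises, and this should have been your main argument.

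One caveat applies equally to your argument and to the paper's. Both feed in \cref{lem:fix_c} as a bound on the uniform average of the \emph{normalized} post-selected success $p_{xyc}$. That input is fragile, because post-selecting on $c_A=c_B=c$ can make the conditional state depend jointly on $(x,y)$. For instance, take shared randomness $\hat y$. Let Alice measure the qubits in basis $f(x,\hat y)$ and always commit $1^m$, and let Bob commit $1^m$ iff $y=\hat y$ (else $0^m$). Then $p_{xy1^m}=1$ for every $(x,y)$ without any entanglement, even though the overall acceptance is only $2^{-n}$. The decoupling step is sound, but the per-$c$ bound it consumes needs a justification that survives such post-selection.
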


\cref{thm:main} shows that for error rates up to $3.7\%$, for which $\lambda_\gamma < 1$, the probability that attackers are accepted decays exponentially in the threshold parameter $k$. Provided the size of the \emph{classical} inputs ($n$) is large enough---which is inexpensive to guarantee since it concerns only classical data---the bound simplifies to
\[
\Pr[\mathrm{accept}\,\textnormal{\cQPVBBfparallel}]
\;\lesssim\;
2(\lambda_\gamma)^{k/2}.
\]

Therefore, the security of the protocol is governed by the number $k$ of required conclusive qubits rather than the total number of transmitted systems. In particular, the protocol remains secure under arbitrarily large photon loss while preserving robustness up to the same noise threshold.

\begin{proof}
Let $S_{\textsf c}$ be such that $q$ fulfills \eqref{eq:conditionq_thm}. From \eqref{eq:pr_accept_long}, given $S_{\textsf{c}}$,
\begin{equation}
            \Pr[\textnormal{accept \cQPVBBfparallel{}}]=\frac{1}{2^{2n}}\sum_{x,y}\sum_{c:\abs{c}\geq k}\Pr\!\big[c,c\mid x,y\big]\Pr\!\big[\textrm{accept }w_C
\mid x,y, c\big],
    \end{equation}
Observe that
\begin{equation}
    \Pr[c,c\mid x,y]\leq \Pr\!_A[c\mid x,y].
\end{equation}
Since the correlations reproduced by the adversaries are, in particular, no-signalling, we have $\Pr_A[c\mid x,y]=\Pr_A[c\mid x,y']$ 
for every $y'\in\{0,1\}^n$. Hence, \(\Pr_A[c\mid x,y]\) is independent of $y$, and we can write $\Pr_A[c\mid x,y]=\Pr_A[c\mid x]$. Then, we have
\begin{equation}
            \Pr[\mathrm{accept }\; \text{\cQPVBBfparallel{}}]\leq\frac{1}{2^{n}}\sum_{x}\sum_{c:\abs{c}\geq k}\Pr\!_A[c\mid x]\frac{1}{2^{n}}\sum_{y}\Pr\!\big[\textrm{accept }w_C
\mid x,y, c\big].
    \end{equation}
Fix $c\in\{0,1\}^m$ with $\abs{c}\geq k$, and let $\omega_x^c:=\frac{1}{2^{n}}\sum_{y}\Pr\!\big[\textrm{accept }w_C
\mid x,y, c\big]$. 
Then, by \cref{lem:fix_c}, we have that with probability at least {$1-2^{-k (\lambda_\gamma^k/(2n^2))2^{2n}}$} the function $f$ is such that
\begin{equation}\label{eq:fix_c}
    \frac{1}{2^n}\sum_x\omega_x^c=\frac{1}{2^{2n}}\sum_{xy}\Pr\!\big[\textrm{accept }w_C
\mid x,y, c\big]=\Pr\!\big[\textrm{accept }w_C
\mid c\big]\leq \delta_k.
\end{equation}
Recall that there are $\sum_{i=k}^m\binom{m}{i}\leq 2^{m}$ $c$'s such that $\abs{c}\geq k$. Then, we have that with probability at least $(1-2^{-k (\lambda_\gamma^k/(2n^2))2^{2n}})^{2^{m}}\geq1-2^{-k (\lambda_\gamma^k/(2n^2))2^{2n}+m}$, \eqref{eq:fix_c} holds for all $c$ with $\abs{c}\geq k$. 
By Markov's inequality, we have that 
\begin{equation}
        \Pr_x\left[\omega^c_x\leq  \frac{\mathbb E_x[\omega_x^c]}{\sqrt{\delta_k}}\right]\geq 1-\sqrt{\delta_k}.
\end{equation}
Therefore, on a fraction $1-\sqrt{\delta_k}$ of $x$'s, we have that $\omega^c_x\leq\sqrt{\delta_k}$. Denote the set of such $x$'s by $T$. Then,
   \begin{equation}
   \begin{split}
       \Pr[\mathrm{accept}]&\leq\frac{1}{2^{n}}\sum_{x}\sum_{c:\abs{c}\geq k}\Pr\!_A[c\mid x]\omega_x^c=\frac{1}{2^{n}}\sum_{x\in T}\sum_{c:\abs{c}\geq k}\Pr\!_A[c\mid x]\omega_x^c+\frac{1}{2^{n}}\sum_{x\in\overline T}\sum_{c:\abs{c}\geq k}\Pr\!_A[c\mid x]\omega_x^c\\
       &\leq (\sqrt{\delta_k})\,\frac{1}{2^{n}}\sum_{x\in T}\sum_{c:\abs{c}\geq k}\Pr\!_A[c\mid x]+\frac{1}{2^{n}}\sum_{x\in \overline T}\sum_{c:\abs{c}\geq k}\Pr\!_A[c\mid x]\\
       &\leq (\sqrt{\delta_k})\,\frac{1}{2^{n}}\abs{T}+\frac{1}{2^{n}}\abs{\bar T}\\%\leq (\sqrt{\delta_k})\,\frac{1}{2^{n}}2^{2n}+\frac{1}{2^{n}}(\sqrt{\delta_k})2^{2n}=
       &\leq 2\sqrt{\delta_k},
   \end{split}      
    \end{equation}
    where in the second line we used that $\omega_c^x\leq 1$ for $x\in \bar T$, then, we used that $\sum_{c:\abs{c}\geq k}\Pr\!_A[c\mid x]\leq\sum_{c}\Pr\!_A[c\mid x]= 1$, and finally we used that $\abs{T}\leq 2^{n}$ and $\abs{\bar T}\leq \sqrt{\delta_k}2^{n}$.
\end{proof}

\subsection{Security from Parallel Repetition of a Single Boolean Function}

We now consider the protocol \cQPVBBfparallel{} where the global basis-selection function is constructed by parallel repetition of a single Boolean function. Specifically, let $x,y \in \{0,1\}^n$ be partitioned into $m$ blocks of equal length $\ell = n/m$, i.e.,
\[
x = x_1\ldots x_m, \qquad y = y_1\ldots y_m,
\]
with $x_i,y_i \in \{0,1\}^{\ell}$. Consider a Boolean function
\(
g : \{0,1\}^{2\ell} \to \{0,1\},
\)
and construct the protocol function as its parallel repetition
\[
f := g^{\otimes m}, \qquad 
f(x,y) =g(x_1,y_1)\ldots g(x_m,y_m).
\]
In this setting, we show that the security of \cQPVBBfparallelotimes{} reduces to the security of \QPVBBfparallelotimesk. Consequently, any security for \QPVBBfparallelotimesk{} with respect to a concrete Boolean function $g$ in a given model immediately implies security of \cQPVBBfparallelotimes{}, instantiated with $f=g^{\otimes m}$, in the same model. We formalize this reduction in the following theorem.

\begin{theorem}\label{thm:parallel_g} Let $g:\{0,1\}^\ell \times \{0,1\}^\ell \to\{0,1\}$. Then the acceptance probability of the commitment-based protocol \cQPVBBfparallelotimes{} satisfies
\begin{equation} \Pr[\textnormal{accept }\textrm{\cQPVBBfparallelotimes}]\leq 2 \sqrt{\Pr[\textnormal{accept }\textrm{\QPVBBfparallelotimesk}]}. \end{equation}
\end{theorem}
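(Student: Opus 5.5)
The plan follows the proof of \cref{thm:main}, with \cref{lem:fix_c} replaced by a direct reduction to \QPVBBfparallelotimesk{}. The key structural fact is that for $f=g^{\otimes m}$ the basis on coordinate $i$ depends only on the block $(x_i,y_i)$. Hence, for a fixed commitment $c$, the induced test on the coordinates $C$ is exactly an instance of \QPVBBfparallelotimesc{} on the blocks $x_C,y_C$. The remaining blocks $x_{\bar C},y_{\bar C}$ are uniformly random auxiliary inputs, independent of $z_C$.

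\textbf{Step 1.} Start from \eqref{eq:pr_accept_long}. Bound $\Pr[c,c\mid x,y]\le \Pr_A[c\mid x]$ using no-signalling, exactly as in the proof of \cref{thm:main}. This gives
\[
\Pr[\mathrm{accept}]\le \frac1{2^n}\sum_x\sum_{c:|c|\ge k}\Pr\!_A[c\mid x]\,\omega^c_x,\qquad \omega_x^c:=\frac1{2^n}\sum_y \Pr[\mathrm{accept}\,w_C\mid x,y,c].
\]

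\textbf{Step 2 (reduction).} For each fixed $c$ with $|c|\ge k$, show that
\[
\delta:=\Pr[\mathrm{accept}\,w_C\mid c]=\mathbb E_x[\omega^c_x]\le \Pr[\mathrm{accept}\ \mathrm{QPV}_{\mathrm{BB84}}^{g^{\otimes k}}].
\]

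First, construct from $S_{\mathsf c}$ an attack on \QPVBBfparallelotimesc{}:
\begin{itemize}
\item The attackers receive $x_C,y_C$ and locally sample $x_{\bar C}$ and $y_{\bar C}$ uniformly themselves, Alice sampling $x_{\bar C}$ and Bob sampling $y_{\bar C}$.
\item Alice applies $\mathcal I^x_c$ and renormalizes, and Bob does the same with $\mathcal J^y_c$. Each renormalization is a local post-selection on Alice's side and on Bob's side, so the two must be handled separately.
\end{itemize}
Post-selecting on both attackers obtaining outcome $c$ yields exactly the state $\rho^{xyc}$. Post-selection is not a physical operation, so instead one writes $\Pr[\mathrm{accept}\,w_C\mid x,y,c]=\mathrm{Tr}[\Pi^{xyc}\rho^{xyc}]$. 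The state $\rho^{xyc}$ is a valid joint state of a two-party strategy after the exchange, with the same entanglement as in the BE model for \cQPVBBfparallel{}. The appended analysis in the appendix (\cref{eq:conditioned on c}) says precisely that this conditioned quantity is upper bounded by the success probability of some attack on the unconditioned protocol restricted to $C$.

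Second, because the BB84 registers $V_{\bar C}$ are traced out and $z_C$ depends only on $(x_C,y_C)$, averaging over $x_{\bar C},y_{\bar C}$ is just a shared-randomness-free local resampling. The result is therefore an attack on $\mathrm{QPV}_{\mathrm{BB84}}^{g^{\otimes|c|}}$.

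Third, pass from $|c|$ to $k$. Restrict the acceptance test to the first $k$ indices of $C$: the Hamming condition on all $|c|$ coordinates does not imply it on a subset, so instead choose a uniformly random $k$-subset $K\subseteq C$. On average, the error fraction on $K$ equals the error fraction on $C$. Then view the other blocks of $C$ as locally resampled inputs again. I expect this step to need care, possibly requiring the threshold test to be monotone or the statement read for $|c|=k$. This is the main obstacle.

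\textbf{Step 3.} With $\delta\le P_k:=\Pr[\mathrm{accept}\ \mathrm{QPV}_{\mathrm{BB84}}^{g^{\otimes k}}]$ for every admissible $c$, repeat the Markov split of \cref{thm:main}. Here no union bound over random $f$ is needed, since $g$ is fixed. Take $T=\{x:\omega^c_x\le\sqrt{P_k}\}$, so that $|\bar T|\le\sqrt{P_k}\,2^n$. Then bound the sum by $\sqrt{P_k}$ on $T$ and trivially on $\bar T$, using $\sum_c\Pr_A[c\mid x]\le1$. This yields $2\sqrt{P_k}$.

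One subtlety remains: $T$ depends on $c$. Handle it by splitting $\sum_c \Pr_A[c\mid x]\omega_x^c$ inside the sum over $c$ before summing. For each $c$, $\frac1{2^n}\sum_x \Pr_A[c\mid x]\omega^c_x\le \sqrt{P_k}\,\frac1{2^n}\sum_{x\in T_c}\Pr_A[c\mid x]+\frac1{2^n}\sum_{x\in\bar T_c}\Pr_A[c\mid x]$. The second term summed over $c$ must be controlled, and I would check this bookkeeping carefully, mirroring the argument in \cref{thm:main}.
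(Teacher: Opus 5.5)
Your outline follows the paper's own route: the no-signalling bound $\Pr[c,c\mid x,y]\le\Pr_A[c\mid x]$, a per-$c$ reduction to $\mathrm{QPV}_{\mathrm{BB84}}^{g^{\otimes|c|}}$, then a Markov split. However, Step~2 fails before the $|c|\to k$ issue even arises.

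\textbf{The per-$c$ reduction.} The quantity $\mathbb{E}_{xy}\Pr[\textnormal{accept } w_C\mid x,y,c]$ averages \emph{normalized} post-selected success probabilities. The family $\{\rho^{xyc}\}_{x,y}$ is not produced by any attack on the restricted protocol. In an attack, the pre-exchange state is $(\mathcal E^x_A\otimes\mathcal E^y_B)(\rho)$, whose $VA$-marginal depends on $x$ alone. Conditioning on $c_A=c_B=c$ is instead a joint filtering, renormalized by $\Pr[c,c\mid x,y]$, which depends on both inputs; so ``each party renormalizes locally'' does not reproduce it.

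Concretely, give the attackers a shared uniform $r\in\{0,1\}^n$ and no entanglement. Alice measures all qubits in the bases $g^{\otimes m}(x,r)$ and commits $1\ldots1$. Bob commits $1\ldots1$ iff $y=r$ (else $0\ldots0$), and both answer Alice's outcomes. Then $\Pr[\textnormal{accept } w_C\mid x,y,1\ldots1]=1$ for every $(x,y)$, so your $\delta=1$, while $P_k<1$ for any $g$ where the statement has content. Yet $\Pr[1\ldots1,1\ldots1\mid x,y]=2^{-n}$: the protocol survives only through the factor $\Pr_B$, which your Step~1 has already discarded. So Steps~1 and~2 cannot be chained.

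Citing \eqref{eq:conditioned on c} does not help. The appendix's replacement of $\rho^{xyc}$ by $(U^{x,c}\otimes V^{y,c})\ket{\psi^c}$ with a fixed $\ket{\psi^c}$ is exactly this illegitimate step. What does hold is the unnormalized bound
\[
J_c:=\mathbb{E}_{xy}\Pr[c_A=c_B=c,\ \textnormal{accept } w_C\mid x,y]\le P_{|c|}.
\]
To see it, run the instruments as channels, keep the outcomes, apply the branch-$c$ measurements, and pad $\bar C$ with locally prepared maximally mixed qubits and locally sampled inputs.

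\textbf{The Markov step.} This has the problem you suspected, and it is not mere bookkeeping. Per-$c$ bounds $\mathbb{E}_x\omega^c_x\le P_k$ cannot control $\mathbb{E}_x\sum_c\Pr_A[c\mid x]\,\omega^c_x$, which may equal $\mathbb{E}_x\max_c\omega^c_x$. Partition $\{0,1\}^n$ into cells $X_c$ of size at most $P_k2^n$ (possible once $\binom mk\ge 1/P_k$). Put $\omega^c_x=1$ on $X_c$ and $0$ elsewhere, and let Alice commit to $c$ on $X_c$. Every per-$c$ hypothesis holds, yet the bound equals $1$. Your $c$-wise split only gives $\sqrt{P_k}+\sum_c 2^{-n}|\bar T_c|$, losing a factor equal to the number of admissible $c$'s. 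The paper's chain hides this by using a single $T$ for all $c$.

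\textbf{The missing idea.} You need a trade-off between the agreement probability on \emph{both} sides and the information leaked by post-selection. For instance, suppose commitments are deterministic given shared randomness $r$. Then $\{c_A=c_B=c\}$ is a rectangle $A_c\times B_c$ of uniform measure $a_cb_c$, with $\sum_c a_c=\sum_c b_c=1$. Hence $J^r_c\le\min(P_{|c|},a_cb_c)$, and when $P_{|c|}\le P_k$, Cauchy--Schwarz gives $\sum_c J^r_c\le\sqrt{P_k}\sum_c\sqrt{a_cb_c}\le\sqrt{P_k}$. For general quantum instruments an argument of this kind still has to be supplied.

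\textbf{The $|c|\to k$ step.} Your random-$k$-subset idea works for $\gamma=0$, where $P_{|c|}\le P_k$ by padding. It does not work for $\gamma>0$: matching the mean error fraction does not preserve the threshold event. A random $k$-subset of a set with $\gamma|c|$ errors can violate it with constant probability.
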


\noindent The proof proceeds analogously to that of \cref{thm:main}. The key observation is that 
$
{\Pr[\textnormal{accept } w_C \mid c]} \\ \leq
\Pr[\textnormal{accept } \text{\QPVBBfparallelotimesc}],
$
since, as discussed above, conditioning on a fixed commitment $c$ corresponds to restricting the protocol \QPVBBfparallel{} to the subset of coordinates indexed by $C$. 
This in turn is equivalent to considering the parallel-repetition protocol instantiated with~$g^{\otimes |c|}$.

\section{Improved Parameters for Sequential Repetition}

In this section, we revisit the commitment version of \QPVBBf{} introduced in~\cite{allerstorfer2023makingexistingquantumposition}, denoted as $\textsf{c}$-\QPVBBf, which may be viewed as a sequential repetition of \cQPVBBfparallel{} in the special case where each execution uses a single qubit (equivalently, $f$ is Boolean). The original analysis establishes security by requiring a sufficiently large number of conclusive rounds, resulting in guarantees that depend on the accumulation of a polynomial number (in a security parameter) of successful events to show a per-round upper bound on the probability to answer the correct bit for a given fraction of the rounds.

We consider a modified acceptance criterion analogous to the one introduced 
for \cQPVBBfparallel{} and instead of showing a per-round bound for a given 
fraction of rounds, we take a more direct approach to show that the probability 
of accepting the location decays exponentially in a threshold parameter $k$, 
denoting the minimum number of conclusive rounds required for acceptance, where 
both $m$ (the number of sequential repetitions) and $k$ are fixed before the execution begins. See~\cref{fig:description_c-seq_BB84} 
for the description.

\begin{figure}[h!]
\hrule
\vspace{0.5em}
\textbf{The} (\cQPVBBf)$^{m}$ \textbf{protocol.}
\vspace{0.5em}
\hrule
\vspace{0.5em}
Let $f:\{0,1\}^n \times \{0,1\}^n \to \{0,1\}$ be a publicly known function, $\gamma \in \left[0,\frac{1}{2}\right)$ be an error parameter, $m$ be the number of sequential repetitions and  $k \in [m]$ be a threshold parameter. In every round $i\in[m]$,

\begin{enumerate}
    \item \textbf{Preparation.} The verifiers \Verifzero{} and \Verifone{} secretly sample uniform random strings
    \(
       {x_i,y_i \in \{0,1\}^n}, v_i \in \{0,1\}     \) 
       and compute $z_i := f(x_i,y_i) \in \{0,1\}$. \Verifzero{} prepares   
        $H^{z_i}\ket{v_i}$.

    \item \textbf{Distribution.} \Verifzero{} sends $H^{z_i}\ket{v_i}$ to \prover{}, while \Verifzero{} and \Verifone{} send $x_i$ and $y_i$, respectively. The classical strings are constrained to propagate at the speed of light and are timed to arrive simultaneously at ${pos}$. The quantum states are sent earlier, with no additional constraints beyond being available at $\mathsf{pos}$ \emph{prior} to the arrival of the classical inputs.
    
    \item \textbf{Commitment.} \prover{}  broadcasts a commitment bit
\(
    c_i \in \{0,1\},
\)
where $c_i = 1$ if the qubit has been received and $c_i = 0$ otherwise. If $c_i=0$, the round is inconclusive and discarded. Otherwise, proceed to the next step.  
    \item \textbf{Measurement.} Upon receiving $x_i$ and $y_i$, \prover{} computes $z_i = f(x_i,y_i)$ and  measures the qubit in the basis specified by $z_i$, obtaining outcomes 
\(
    w_i \in \{0,1\},
\)
which is broadcast to both verifiers.
\end{enumerate}
\textbf{Verification.} After the $m$ sequential repetitions, let
\(
    C := \{i \in [m] : c_i = 1\}
\)
be the set of indices corresponding to received qubits. The verifiers accept if: all messages arrive within the timing constraints consistent with ${pos}$, both receive the same bits
$c_1\ldots c_m$ and $w_i$ for all $i\in C$, and 
\begin{equation}
    |C| \geq k  \quad
\text{ and }\quad
             d_H(v_C,w_C) \leq \gamma |C| .
\end{equation}

\vspace{0.5em}
\hrule
\caption{Description of the (\cQPVBBf)$^{m}$  protocol}
\label{fig:description_c-seq_BB84}
\end{figure}

Similarly to \cQPVBBfparallel{} in \cref{section:security_parallel}, we analyze the security of $\textsf{c}$-\QPVBBf{} via its purified formulation. In each round \(i\), the verifiers prepare an EPR pair
\(
\ket{\Phi^+}_{VQ}
\)
and send the register \(Q\) to the prover. At a later stage, \Verifzero{} measures the register \(V\) using the product measurement
\[
\left\{
M_v^{f(x_i,y_i)}
\right\}_{v\in\{0,1\}},
\]
where $ M_v^{f(x_i,y_i)}:=H^{f(x_i,y_i)}\ketbra{v}{v}_V H^{f(x_i,y_i)}$, and the measurement basis is determined by \(f(x_i,y_i)\).

A general attack on $\textsf{c}$-\QPVBBf{} in round \(i\) is obtained by specializing the attack model of \cQPVBBfparallel{} (cf.~\cref{section:security_parallel}) to the case \(m=1\). In particular, Alice and Bob apply quantum instruments
\(
\{\mathcal{I}^{x_i}_{c_A^i}\}_{c_A^i\in\{0,1\}}
\)
and
\(
\{\mathcal{J}^{y_i}_{c_B^i}\}_{c_B^i\in\{0,1\}},
\)
respectively, to the state $\rho_i$, producing commitment bits \(c_A^i\) and \(c_B^i\). We omit the explicit dependence of the instruments on \(i\) to avoid notational clutter. If \(c_A^i \neq c_B^i\), the protocol aborts. If \(c_A^i = c_B^i = 0\), the round is inconclusive and no further action is taken in this round. Otherwise, for \(c_A^i = c_B^i = 1\), we condition on this event and consider the post-selected state
\begin{equation}\label{eq:post-commitment-seq}
\widetilde{\mathcal{I}}^{x_i y_i}_1(\rho_i)
:=
\frac{
(\mathbb I_V \otimes \mathcal{I}^{x_i}_{1} \otimes \mathcal{J}^{y_i}_{1})(\rho_i)
}{
\tr{(\mathbb I_V \otimes \mathcal{I}^{x_i}_{1} \otimes \mathcal{J}^{y_i}_{1})(\rho_i)}
}.
\end{equation}

Conditioned on a commitment $c_i=1$, the attack proceeds exactly as in Steps 3 and 4 in \cref{section:security_parallel} for \(m=1\), with Alice and Bob performing (possibly different per-round) POVMs
\(
\{ A^{x_i y_i}_{a_i}\}_{a_i\in\{0,1\}}
\)
and
\(
\{ B^{x_i y_i}_{b_i}\}_{b_i\in\{0,1\}}
\)
on their respective local registers. Conditioned on inputs \(x_i,y_i\) and on a conclusive commitment \(c_i=1\), we define the operator corresponding to a correct response in round \(i\) as
\begin{equation}\label{eq:check_correctness_seq}
\Pi_{VAB}^{x_i y_i}
:=
\sum_{v_i\in\{0,1\}}
\left(
M^{f(x_i,y_i)}_{v_i}
\otimes A^{x_i y_i}_{v_i}
\otimes B^{x_i y_i}_{v_i}
\right).
\end{equation}
This operator represents the event that the verifiers obtain outcome \(v_i\) from their measurement and that the attackers output the corresponding value \(v_i\). The corresponding success probability in round $i$, for a conclusive commitment ($c_i=1$), is given by
\begin{equation}\label{eq:c=1 correct}
\Pr_{\mathrm{com}}[w_i=v_i]
=
\frac{1}{2^{2n}}
\sum_{x_i,y_i}
\tr{
\Pi_{VAB}^{x_i y_i}
\,\widetilde{\mathcal{I}}^{x_i y_i}_1(\rho_i)
},
\end{equation}
where $\Pr_{\mathrm{com}}$ denotes probabilities taken over the execution of the commitment protocol \cQPVBBf.

An attack on \QPVBBf{} is described by replacing the instruments
\(\{\mathcal{I}^{x_i}_{c_A}\}_{c_A\in\{0,1\}}\) and \(\{\mathcal{J}^{y_i}_{c_B}\}_{c_B\in\{0,1\}}\)
with quantum channels \(\mathcal{E}_A^{x_i}\) and \(\mathcal{E}_B^{y_i}\), respectively. We omit the explicit dependence of the channels on \(i\) to avoid notational clutter. In the original protocol, the commitment outcome is fixed to \(c_A=c_B=1\), and no post-selection is performed. In this case, the probability that the attackers output the correct answer in round \(i\) of \QPVBBf{} is given by
\begin{equation}\label{eq:original-success}
\Pr_{\mathrm{orig}}[w_i=v_i]
=
\frac{1}{2^{2n}}
\sum_{x_i,y_i}
\tr{
\Pi_{VAB}^{x_i y_i}
\,(\mathcal{E}_A^{x_i} \otimes \mathcal{E}_B^{y_i})(\rho_i)
},
\end{equation}
where $\Pr_{\mathrm{orig}}$ denotes probabilities taken over the execution of the original protocol \QPVBBf.

In~\cite{bluhm2022single}, it was shown that in the BE\((O(n))\) model there exist functions for which the quantity in~\eqref{eq:original-success} is upper bounded by \(0.98\), later tightened to approximately \(0.85\) in~\cite{escolàfarràs2025quantumpositionverificationshot}. Denoting this bound by \(p^*\), one has
\begin{equation}\label{eq:original-success-bound}
\Pr_{\mathrm{orig}}[w_i=v_i]\leq p^*.
\end{equation}

In~\cite{allerstorfer2023makingexistingquantumposition}, a relation between the committed and uncommitted success probabilities (cf.~\eqref{eq:c=1 correct} and~\eqref{eq:original-success}) was established. As a consequence, the success probability in the commitment protocol can also be bounded by \(p^*\), up to negligible correction terms, showing that \cQPVBBf{} inherits the security of \QPVBBf{}. In particular, in \cite{allerstorfer2023makingexistingquantumposition} the following bound is given. 

\begin{lemma}\label{lem:eps ctilde}\textnormal{(Theorem 4.9 (ArXiv v1) of~\cite{allerstorfer2023makingexistingquantumposition})}
    Let $\varepsilon_i\geq0$ and $\Sigma_{\varepsilon_i} \subseteq \{0,1\}^{2n}$ be the set of input pairs defined by
\begin{equation}\label{eq:Sigma_eps}
    \Sigma_{\varepsilon_i}=\{(x_i,y_i)\mid\Pr[c_B^i = 0 \mid c_A^i = 1, x_i,y_i] \leq \varepsilon_i
 \text{ and } 
\Pr[c_A^i = 0 \mid c_B^i = 1, x_i,y_i] \leq \varepsilon_i\}.
\end{equation}
If $\varepsilon_i \leq 1/64$ and $\left|\Sigma_{\varepsilon_i}^{c}\right| \leq \Tilde{c}_i\,2^{2n}$, then
\begin{equation}\label{eq:eps c bound seq}
\Pr_{\mathrm{com}}[w_i=v_i ]
\leq
\Pr_{\mathrm{orig}}[w_i=v_i]
+ (1-2\Tilde{c})\,8\sqrt{\varepsilon_i}
+ 2\Tilde{c}_i.
\end{equation}
\end{lemma}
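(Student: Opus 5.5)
The plan is to turn any commitment attack into an attack on the original protocol \QPVBBf{} that succeeds almost as often, and then compare success probabilities input pair by input pair. We split the sum in \eqref{eq:c=1 correct} into pairs in $\Sigma_{\varepsilon_i}$ and pairs in $\Sigma_{\varepsilon_i}^c$. Pairs in $\Sigma_{\varepsilon_i}^c$ are bounded trivially by $1$. Since there are at most $\Tilde c_i 2^{2n}$ of them, this costs at most $\Tilde c_i$ in $\Pr_{\mathrm{com}}$, and again at most $\Tilde c_i$ when we later pass back to $\Pr_{\mathrm{orig}}$; this accounts for the $2\Tilde c_i$ term. The remaining fraction $(1-\Tilde c_i)$ of pairs will carry the $\sqrt{\varepsilon_i}$ error. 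Keeping track of the complementary weights is how the factor $(1-2\Tilde c)$ should appear in front of $8\sqrt{\varepsilon_i}$.

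On a good pair $(x_i,y_i)\in\Sigma_{\varepsilon_i}$, the first step is to compare the jointly post-selected state $\widetilde{\mathcal I}^{x_iy_i}_1(\rho_i)$ with the state post-selected only on Alice's outcome, $\mathcal I^{x_i}_1(\rho_i)/\Pr_A[1\mid x_i]$, and symmetrically with the state post-selected only on Bob's outcome. The hypothesis $\Pr[c_B^i=0\mid c_A^i=1,x_i,y_i]\le\varepsilon_i$ says that Bob's subsequent instrument yields outcome $1$ with probability at least $1-\varepsilon_i$ on Alice's post-selected state. By the gentle measurement lemma, applied through the decomposition $\mathcal J^{y}_1(\cdot)=\mathcal E(\sqrt{M}\cdot\sqrt{M})$ of \cref{lemma: instrument and channel}, the two post-selected states are then $O(\sqrt{\varepsilon_i})$-close in trace distance, and likewise with the roles of Alice and Bob exchanged.

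No-signalling then implies that $\Pr_A[1\mid x_i]$ depends only on $x_i$ and $\Pr_B[1\mid y_i]$ depends only on $y_i$. On $\Sigma_{\varepsilon_i}$ these two quantities agree up to a factor $1\pm O(\varepsilon_i)$, because each equals $\Pr[1,1\mid x_i,y_i]$ up to such a factor. Since $\Sigma_{\varepsilon_i}$ is dense when $\Tilde c_i$ is small, both marginals should be essentially a common constant $p$ on good pairs. We use this to define an uncommitted strategy. The attackers start from a fixed state $\rho_i'$ obtained from $\rho_i$, and Alice and Bob apply trace-preserving channels $\mathcal E^{x}_A,\mathcal E^{y}_B$ derived from the instruments via \cref{lemma: instrument and channel}. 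The channels are chosen so that $(\mathcal E^{x}_A\otimes\mathcal E^{y}_B)(\rho_i')$ is $O(\sqrt{\varepsilon_i})$-close to $\widetilde{\mathcal I}^{xy}_1(\rho_i)$ on good pairs. The attackers then reuse the same answering POVMs $A^{xy},B^{xy}$. Since $\Pi^{x_iy_i}_{VAB}$ is a contraction, the success probabilities on each good pair differ by at most the trace distance. Summing, with the constant worked out as $8\sqrt{\varepsilon_i}$ (which is where $\varepsilon_i\le 1/64$ enters, keeping the gentle-measurement bounds below $1$), gives \eqref{eq:eps c bound seq}.

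The main obstacle is the middle step: post-selection is not a physical channel, so we must justify replacing the joint post-selection on both outcomes by something Alice and Bob can implement locally, each knowing only their own input. I would first try the $\sqrt{M}$-decomposition combined with the near-constancy of the marginal $p$. If that proves too lossy, the fallback is to allow the original-protocol attackers to output arbitrarily whenever a local outcome is $0$, and to absorb the resulting loss into the $\varepsilon_i$ and $\Tilde c_i$ error terms.
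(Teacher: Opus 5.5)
\textbf{There is a genuine gap: the uncommitted attack is never actually constructed.} This is the step you yourself flag as the main obstacle, and neither of your two plans closes it.

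Near-constancy of the marginals $\Pr_A[1\mid x_i]$ and $\Pr_B[1\mid y_i]$ only controls normalisation constants. It says nothing about the post-selected states being close to one fixed state. In general, no local trace-preserving channels applied to $\rho_i$ itself can reproduce a post-selection whose success probability $p$ is small, and that is exactly the high-loss regime the commitment is designed for. Your fallback, answering arbitrarily on local outcome $0$, loses order $1-p$, which is not small either.

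The missing idea is to start the uncommitted attack from the $\sqrt{M}$-post-selected state $\rho^{x_i^*y_i^*}$ at a fixed reference pair. Then apply the channels $\mathcal E^{x_i}_A\otimes\mathcal E^{y_i}_B$ from \cref{lemma: instrument and channel} and reuse the same answering POVMs. The original-protocol bounds hold for arbitrary initial states, so this is a legitimate attack on the original protocol.

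Your gentle-measurement step is the right tool, but on its own it only shows that $\rho^{x_iy_i}$ is close to a state depending on $x_i$ alone and to one depending on $y_i$ alone. To reach the reference state you must chain $(x_i,y_i)\to(x_i,y_i^*)\to(x_i^*,y_i^*)$ through the single-party post-selected states. This requires the intermediate pair $(x_i,y_i^*)$ to lie in $\Sigma_{\varepsilon_i}$ as well, and it yields the $8\sqrt{\varepsilon_i}$ of the path lemma in \cref{sec:appendix_seq_rep}.

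A counting argument then finishes the job (\cref{lemma:existence_reference_state_close_to_other_states}):
\begin{itemize}
\item pick $y_i^*$ whose column contains at most $\Tilde{c}_i2^n$ pairs of $\Sigma^c_{\varepsilon_i}$, which exists by averaging;
\item pick $x_i^*$ with $(x_i^*,y_i^*)\in\Sigma_{\varepsilon_i}$;
\item every $(x_i,y_i)$ with both $(x_i,y_i)$ and $(x_i,y_i^*)$ in $\Sigma_{\varepsilon_i}$ is then connected to the reference, and there are at least $(1-2\Tilde{c}_i)2^{2n}$ such pairs.
\end{itemize}

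This also corrects your bookkeeping of the error terms. Passing back to $\Pr_{\mathrm{orig}}$ costs nothing: you only add the nonnegative terms $\mathrm{Tr}\big[\Pi^{x_iy_i}_{VAB}(\mathcal E^{x_i}_A\otimes\mathcal E^{y_i}_B)(\rho^{x_i^*y_i^*})\big]$ for the unconnected pairs. So the second $\Tilde{c}_i$ does not come from there. It comes from the rows whose intermediate pair $(x_i,y_i^*)$ is bad.

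Let $g\geq 1-2\Tilde{c}_i$ be the connected fraction. You get
$\Pr_{\mathrm{com}}[w_i=v_i]\leq\Pr_{\mathrm{orig}}[w_i=v_i]+g\cdot 8\sqrt{\varepsilon_i}+(1-g)$. The hypothesis $\varepsilon_i\leq 1/64$, i.e.\ $8\sqrt{\varepsilon_i}\leq 1$, is exactly what makes this decreasing in $g$. Substituting $g=1-2\Tilde{c}_i$ gives the stated $(1-2\Tilde{c}_i)\,8\sqrt{\varepsilon_i}+2\Tilde{c}_i$.
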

The set $\Sigma_{\varepsilon_i}$ consists of inputs for which the commitment procedure is consistent, in the sense that conditioned on one party outputting a valid commitment, the probability that the other party fails to do so is at most $\varepsilon_i$. 
The bound on $\left|\Sigma_{\varepsilon_i}^{c}\right|$ ensures that only a small fraction 
of inputs violates this consistency condition.

Using this per-round relation, the sequential repetition framework shows that the parameters $\varepsilon_i$ and $\Tilde{c}_i$ can be made sufficiently small by choosing a large enough number of repetitions. More specifically, the protocol is executed until a threshold number of $r=320\tau^4$ rounds with successful (conclusive) commitments is reached, for a given security parameter $r$, inducing a stochastic stopping time with expected length $320r^4/p_{\mathrm{commit}}$, where $p_{\mathrm{commit}}$ denotes the probability, in the honest execution of the protocol, that the honest prover produces a conclusive commitment in a given round. In this regime, it is shown that there exists a set $\mathcal R$ of cardinality at least $1 - 1/\tau$ such that, for all $i \in \mathcal R$,
\begin{equation}
\Pr_{\mathrm{com}}[w_i=v_i ]
\leq
p^*
+ \frac{6}{\tau}.
\end{equation}

The above result shows that after accumulating sufficiently many conclusive rounds, the conditional success probability in almost all rounds remains essentially bounded by the corresponding success probability $p^*$ of the original protocol.

Our approach departs from this analysis in two aspects. First, we do not condition on the occurrence of a prescribed number of conclusive rounds and therefore avoid introducing a random stopping time. Second, rather than deriving per-round guarantees and subsequently combining them through a sequential repetition argument, we directly bound the adversary's overall acceptance probability in the repeated protocol. This leads to a soundness error that decreases exponentially in the threshold parameter $k$, while allowing the total number of rounds to remain fixed in advance.

We build on several of the techniques developed in~\cite{allerstorfer2023makingexistingquantumposition}. In particular, we reuse parts of their reduction from the commitment protocol to the original protocol, but combine these ingredients with a different statistical analysis tailored to our acceptance criterion. We state the main results below and defer the technical proofs to \cref{sec:appendix_seq_rep}.

We refine \cref{lem:eps ctilde} and we show that if instead of imposing an input-dependent constraint (on $x_i$ and $y_i$), we impose a global constraint, we can get rid of the dependence on $\Tilde{c}$ and obtain an upper bound that solely depends on $\varepsilon_i$.

\begin{lemma}\label{lemma:per_round_eps} If in a round $i$ the probability that either of the parties does not commit given that the other does is upper bounded by $\varepsilon_i^2$:
\begin{equation}\label{eq:conditional at most eps}
    \Pr[c_A^i=0\mid c_B^i=1]\leq \varepsilon_i^2 \text{ and } \Pr[c_B^i=0\mid c_A^i=1]\leq \varepsilon_i^2,
\end{equation}
then,
\begin{equation}\label{eq:nice t picked}
    \Pr_{\mathrm{com}}[ w_i=v_i]
    \leq
    p^* + 5\varepsilon_i^{2/3}.
\end{equation}

\end{lemma}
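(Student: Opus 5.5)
The plan is to reduce \cref{lemma:per_round_eps} to \cref{lem:eps ctilde} by turning the global hypothesis \eqref{eq:conditional at most eps} into the input-wise control that \cref{lem:eps ctilde} needs, and then optimising a free threshold. The global conditional probability is an average of input-wise ones:
\begin{equation*}
\Pr[c_B^i=0\mid c_A^i=1]=\sum_{x_i,y_i}\Pr[x_i,y_i\mid c_A^i=1]\,\Pr[c_B^i=0\mid c_A^i=1,x_i,y_i],
\end{equation*}
and the same holds with the roles of $A$ and $B$ exchanged. Introduce a threshold $t>0$ and apply Markov's inequality with respect to the posterior weights $\Pr[x_i,y_i\mid c_A^i=1]$. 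This shows that the inputs with $\Pr[c_B^i=0\mid c_A^i=1,x_i,y_i]>t$ carry posterior mass at most $\varepsilon_i^2/t$. Symmetrically, the inputs with $\Pr[c_A^i=0\mid c_B^i=1,x_i,y_i]>t$ carry mass at most $\varepsilon_i^2/t$ under $\Pr[\cdot\mid c_B^i=1]$. The set of good inputs is then $\Sigma_t$ as in \eqref{eq:Sigma_eps}, with $\varepsilon$ replaced by $t$.

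The main obstacle is that \cref{lem:eps ctilde} controls $\Sigma^c_t$ through its \emph{uniform} measure, $|\Sigma_t^c|\le\tilde c\,2^{2n}$. Markov only gives a bound under the commit-conditioned measure, and these two measures can differ badly. For instance, the attackers could commit with tiny probability on most inputs. I would therefore not invoke the lemma as a black box. Instead I would reopen its proof in the appendix and use its structure: the lemma is obtained by splitting the sum in \eqref{eq:c=1 correct} into inputs in $\Sigma_t$ and inputs outside it.

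For inputs in $\Sigma_t$, the post-selected state $\widetilde{\mathcal I}^{x_iy_i}_1(\rho_i)$ is close to a state produced by local channels. Concretely, the argument of~\cite{allerstorfer2023makingexistingquantumposition} builds channels from the instruments via \cref{lemma: instrument and channel} and gentle-measurement estimates. That yields a distance of order $\sqrt t$, which reproduces the $8\sqrt{t}$ term. For the bad inputs, I will bound their contribution by their weight in the distribution that actually governs $\Pr_{\mathrm{com}}[w_i=v_i]$, namely the posterior given $c^i_A=c^i_B=1$. By the Markov estimate above this weight is $O(\varepsilon_i^2/t)$, where I use $\Pr[c_A=c_B=1]\ge(1-\varepsilon_i^2)\Pr[c_A=1]$ to pass between conditionings. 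If the proof turns out to require the uniform weighting in the good-set average as well, I would reweight using the no-signalling marginals, following the product structure exploited in the proof of \cref{thm:main}.

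Combining the two parts gives
\begin{equation*}
\Pr_{\mathrm{com}}[w_i=v_i]\le p^*+a\sqrt t+b\,\frac{\varepsilon_i^2}{t}
\end{equation*}
for explicit constants $a,b$. Choosing $t=\varepsilon_i^{4/3}$ balances the two error terms at order $\varepsilon_i^{2/3}$. One then checks that the constants sum to at most $5$, possibly after slightly adjusting them, and that $t\le 1/64$ holds whenever the claimed bound is nontrivial. In the remaining case $5\varepsilon_i^{2/3}\ge1$, the inequality holds trivially.
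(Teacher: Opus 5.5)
\textbf{Verdict: same strategy as the paper, but there is a genuine gap at the step you flag yourself, and your workaround does not close it.}

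Your overall route matches the paper's. You split the inputs at a threshold $t$, use the gentle-measurement/reference-pair argument of \cite{allerstorfer2023makingexistingquantumposition} on the good inputs to get $8\sqrt t$, charge the bad inputs $O(\varepsilon_i^2/t)$, and take $t\sim\varepsilon_i^{4/3}$. The problem is the passage from the commit-conditioned measure to the uniform one.

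\emph{Why the uniform measure is unavoidable.} $\Pr_{\mathrm{com}}[w_i=v_i]$ is defined in \eqref{eq:c=1 correct} as a \emph{uniform} average over $(x_i,y_i)$. The same holds for $\Pr_{\mathrm{orig}}$ and $p^*$ in \eqref{eq:original-success}. Moreover, the reference-pair step (\cref{lemma:existence_reference_state_close_to_other_states}) is a counting argument whose hypothesis is $|\Sigma_t^c|\le\tilde c\,2^{2n}$. So you need the uniform measure of the bad inputs already for the good-set term, not only for the remainder.

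\emph{Why your workaround fails.} Weighting the bad inputs by the posterior given $c_A^i=c_B^i=1$ changes the quantity being bounded. Comparing the good-set average with $p^*$ would then need exactly the reweighting you postpone. The technique of \cref{thm:main} does not supply it: $\Pr[c,c\mid x,y]\le\Pr_A[c\mid x]$ followed by Markov over $x$ yields bounds of type $2\sqrt{\cdot}$. That is useless for an additive bound $p^*+O(\varepsilon_i^{2/3})$ with $p^*$ a constant.

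\emph{The missing idea: under the hypothesis your worry cannot occur.} The two-sided bound plus no-signalling forces the commit-conditioned input distribution to be $O(\varepsilon_i^2)$-close to uniform.
\begin{itemize}
\item By no-signalling, $a(x)=\Pr[c_A^i=1\mid x]$ depends only on $x$ and $b(y)=\Pr[c_B^i=1\mid y]$ only on $y$. Let $j(x,y)=\Pr[c_A^i=c_B^i=1\mid x,y]\le\min\{a(x),b(y)\}$.
\item The hypothesis reads $\mathbb E_{x,y}[a-j]\le\varepsilon_i^2\bar a$ and $\mathbb E_{x,y}[b-j]\le\varepsilon_i^2\bar b$, where $\bar a,\bar b$ are uniform averages.
\item Since $|a(x)-b(y)|\le(a-j)+(b-j)$, you get $\mathbb E_{x,y}|a(x)-b(y)|\le\varepsilon_i^2(\bar a+\bar b)$.
\item By Jensen this bounds both $|\bar a-\bar b|$ and $\mathbb E_x|a(x)-\bar b|$, hence $\mathbb E_x|a(x)/\bar a-1|\le 4\varepsilon_i^2/(1-\varepsilon_i^2)$.
\end{itemize}
So ``tiny commit probability on most inputs'' is possible only if it is tiny on essentially all inputs, and then the posterior is nearly uniform. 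Your posterior Markov estimate therefore transfers. The uniform measure of $\{(x,y):\Pr[c_B^i=0\mid c_A^i=1,x,y]>t\}$ is at most $\varepsilon_i^2/t+4\varepsilon_i^2/(1-\varepsilon_i^2)$, and symmetrically for $B$. This is the content of \cref{lemma:per x y t bound}, which the paper invokes without proof to get $|G_t^c|=O(\varepsilon_i^2 2^{2n}/t)$. The paper then splits the uniform sum in \eqref{eq:c=1 correct} over $G_t$ and $G_t^c$.

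\emph{The constant.} Getting ``constants at most $5$'' is not a slight adjustment. With $a=8$, the minimum over $t$ of $a\sqrt t+b\,\varepsilon_i^2/t$ is $12\cdot 2^{-2/3}b^{1/3}\varepsilon_i^{2/3}$. That is about $7.6\,\varepsilon_i^{2/3}$ already for $b=1$. For $b=2$, attained at $t=(\varepsilon_i^2/2)^{2/3}$, it is about $9.5\,\varepsilon_i^{2/3}$. These are the values in the paper's own intermediate bound, so the paper's constant $5$ does not follow from its proof either. Reaching $5$ needs sharper inputs, at least $\mathrm{Tr}[\Pi(\rho-\sigma)]\le\frac12\|\rho-\sigma\|_1$, and the arithmetic must be checked rather than assumed.
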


In principle, we do not have control over $\varepsilon_i$ in a round $i$, which attackers can pick adaptively depending on the previous rounds. To control their aggregate effect, we follow the observation of~\cite{allerstorfer2023makingexistingquantumposition}. Let $\delta$ denote the probability that no inconsistency occurs in any of the $m$ rounds,~i.e.,
\begin{equation}
    \delta := \Pr[\forall i \in [m],\; c_A^i = c_B^i].
\end{equation}
From \eqref{eq:conditional at most eps}, we upper bound the probability of disagreement as follows:
\begin{align}
\Pr[c_A^i \neq c_B^i]
&= \Pr[c_B^i=0 \mid c_A^i=1]\Pr[c_A^i=1]
  + \Pr[c_A^i=0 \mid c_B^i=1]\Pr[c_B^i=1] \\
&\le \varepsilon_i^2 \Pr[c_A^i=1] + \varepsilon_i^2 \Pr[c_B^i=1]  \le 2\varepsilon_i^2.
\end{align}
The parameters $\varepsilon_i$ should be understood as a conservative worst-case bound on the probability of an inconsistent commitment in round $i$, even though in practice any such inconsistency would immediately trigger abortion. Combining \eqref{eq:conditional at most eps} with the argument above, we obtain for each round $i$ that
\begin{equation}
\Pr[c_A^i = c_B^i \mid \text{history up to } i-1]
\ge 1 - 2\varepsilon_i^2.
\end{equation}
Since this bound holds conditioned on any history, iterating over the $m$ rounds yields
\begin{equation}
\delta \le \prod_{i=1}^m (1 - 2\varepsilon_i^2)
\le e^{-2\sum_{i=1}^m \varepsilon_i^2},
\end{equation}
where we used $1-x \le e^{-x}$, 
which implies that if after the $m$ sequential repetitions the attackers are not caught with probability $1-\delta$, then the aggregate contribution of the $\varepsilon_i$ is upper bounded by 
\begin{equation}\label{eq:sum eps<=ln}
    \sum_{i=1}^m\varepsilon_i^2\leq \frac12\ln\frac{1}{\delta}.
\end{equation}

We first consider the idealized, error-free setting, in which the verifiers accept the claimed location only if all responses are correct. We then extend the analysis to the noisy case. Rather than deriving a per-round bound, we directly upper bound the overall acceptance probability after $m$ rounds, subject to the constraint~\eqref{eq:sum eps<=ln}.

\begin{theorem}\label{thm:seq_rep}
Let the optimal success probability of any adversary for \QPVBBf{} in the considered security model be upper bounded by $p^*$ (cf.~\eqref{eq:original-success-bound}). In the error-free case, let $\delta$ denote the probability that no inconsistency of commitments occurs in any of the $m$ rounds. Then, either the attackers are detected due to an inconsistent commitments with probability at least $1-\delta$, or conditioned on the event that no inconsistency occurs, the acceptance probability after $m$ rounds satisfies
\begin{equation}\label{eq:error-free}
     \Pr\!\big[\mathrm{accept }\; (\text{\cQPVBBf})^{m}\big]
     \;\leq\;
     \left(
        p^* + 5\sqrt[3]{\frac{\ln(1/\delta)}{2k}}
     \right)^{k}.
\end{equation}
\end{theorem}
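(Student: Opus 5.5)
The plan is to combine the per-round bound of \cref{lemma:per_round_eps} with the aggregate constraint \eqref{eq:sum eps<=ln}. The overall acceptance probability will be bounded by a product over the conclusive rounds, and a concavity (Jensen-type) argument will then distribute the total budget $\sum_i\varepsilon_i^2$ as evenly as possible over $k$ rounds.

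\textbf{Step 1: reduce to a product over conclusive rounds.} In the error-free case, acceptance requires that $|C|\ge k$ and that $w_i=v_i$ for every $i\in C$. I would order the conclusive rounds chronologically as $i_1<i_2<\dots$ and consider only the first $k$ of them. Conditioned on the full history up to round $i_j$ and on $c_{i_j}=1$, round $i_j$ is an attack on a single-round commitment protocol. The adversary's choice of instruments and state may depend on the history, but \cref{lemma:per_round_eps} applies to each round separately for whatever $\varepsilon_{i_j}$ is realised there. This gives a conditional success probability of at most $p^*+5\varepsilon_{i_j}^{2/3}$.

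Chaining the conditional probabilities, for instance via a martingale or tower-property argument over rounds, yields
\[
\Pr[\mathrm{accept}]\le \prod_{j=1}^k\bigl(p^*+5\varepsilon_{i_j}^{2/3}\bigr).
\]
To make this rigorous for adaptive strategies, I would either take a supremum over histories, or define $\varepsilon_i$ as history-dependent and note that the constraint \eqref{eq:sum eps<=ln} holds for the realised sequence. Non-conclusive rounds only contribute factors of at most one, and rounds beyond the $k$-th conclusive round are dropped, since their factors are also at most one.

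\textbf{Step 2: optimise under the budget.} Conditioned on no inconsistency, \eqref{eq:sum eps<=ln} gives
\[
\sum_{j=1}^k \varepsilon_{i_j}^2\le\sum_i\varepsilon_i^2\le \tfrac12\ln(1/\delta).
\]
I then maximise $\sum_j\log\bigl(p^*+5 t_j^{1/3}\bigr)$ over $t_j=\varepsilon_{i_j}^2\ge0$ subject to $\sum_j t_j\le \tfrac12\ln(1/\delta)$. The function $t\mapsto\log(p^*+5t^{1/3})$ is concave and increasing, so by Jensen the maximum is attained at $t_j=\ln(1/\delta)/(2k)$ for all $j$. This gives the bound
\[
\Bigl(p^*+5\sqrt[3]{\ln(1/\delta)/(2k)}\Bigr)^k.
\]
The case in which the attackers are caught is exactly the complementary event, which has probability $1-\delta$. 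This yields the stated dichotomy.

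\textbf{Main obstacle.} The delicate step is Step 1 together with the interplay of conditioning. The $\varepsilon_i$ are adaptive random quantities, while \eqref{eq:sum eps<=ln} was derived as a worst-case bound through $\delta\le e^{-2\sum\varepsilon_i^2}$.

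I would handle this by defining $\varepsilon_i$ as deterministic functions of the history. For each fixed path, the product bound then holds and the budget constraint applies to that path, so the Jensen bound holds uniformly over paths. Averaging over paths then preserves it. Conditioning on the no-inconsistency event must be shown not to break the per-round bound of \cref{lemma:per_round_eps}. I expect to argue this because the lemma already conditions on $c_A^i=c_B^i=1$, so further conditioning on past consistency only fixes the history.
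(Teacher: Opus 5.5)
Your proposal is essentially the paper's proof: chain the per-round bound of \cref{lemma:per_round_eps} into a product over conclusive rounds, then maximise that product under the budget \eqref{eq:sum eps<=ln}, which gives the symmetric choice $\varepsilon_i^2=\ln(1/\delta)/(2k)$. Your two small deviations are, if anything, cleaner: keeping only the first $k$ conclusive rounds, and using concavity/Jensen instead of Lagrange multipliers, certifies that the optimum is global, and it avoids the paper's final step from $|C|$ to $k$, which tacitly uses that $N\mapsto\bigl(p^*+5(\ln(1/\delta)/(2N))^{1/3}\bigr)^N$ is non-increasing whenever the bound is nontrivial.
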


The corresponding bound for adaptive adversaries shown
in~\cite{allerstorfer2023makingexistingquantumposition}
is as follows. They fix $\delta = e^{-20}$, and conditioned on no
inconsistency, given $r = 320\tau^4$ rounds with successful
commitments, the probability of accepting the location is at most
\[
\bigg(p^* + 12\sqrt[4]{\frac{20}{r}}\bigg)^{\big(1 - 2\sqrt[4]{\frac{20}{r}}\big)r}.
\]
Compared to~\cite{allerstorfer2023makingexistingquantumposition}, where $r$plays the role of our $k$ as the number of conclusive rounds, but is random rather than fixed,  our exponent is $k$ instead of $\bigl(1-2\sqrt[4]{20/r}\bigr)r$,
and our correction term decays as $k^{-1/3}$ rather than $r^{-1/4}$.

Moreover, our result (cf~\eqref{eq:error-free}) shows that sequential repetition achieves the same asymptotic scaling under adaptive adversaries as the i.i.d. analysis in \cite{allerstorfer2023makingexistingquantumposition}, where, in terms of the total number of conclusive rounds $r$, a bound of the form
\(
(p^* + O(r^{-1/3}))^{r}
\)
was obtained.

\begin{proof}
In the error-free case,  recall that the verifiers accept the location for \cQPVBBf{} executed sequentially $m$ times if \(
|C| \ge k,
\)
and $w_i=v_i$ for all $i\in C$. The probability to accept the location for the attackers can be decomposed as 
\begin{equation}\label{eq:accept_sequential}
    \Pr[\text{accept}]=\sum_{c_i\ldots c_m}\Pr[c_1\ldots c_m]\Pr[\text{accept}\mid c_1...c_m]=\!\!\!\!\!\!\sum_{c_i\ldots c_m:\abs{c_i,\ldots, c_m}\geq k}\!\!\!\!\!\!\Pr[c_1\ldots c_m]\Pr[\text{accept}\mid c_1...c_m],
\end{equation}
    where the latter inequality follows from the verifiers not accepting the location if $\abs C<k$ and the corresponding accepting probability being 0. 
    Fix $c_1\ldots c_m$ with $\abs{C}\geq k$, then 
    \begin{equation}\label{eq:prob_acc_product}
    \begin{split}
        \Pr[\text{accept}|c_1\ldots c_m]&=
        \prod_{i\in C}\Pr[w_i=v_i\mid w_1=v_1,\ldots, w_{i-1}=v_{i-1}, c_1\ldots c_m]\\&\leq \prod_{i\in C}\left(\Pr\!_{\mathrm{orig}}[w_i=v_i]+5\varepsilon_i^{2/3}\right)\leq \prod_{i\in C}\left(p^*+5\varepsilon_i^{2/3}\right),        
    \end{split}
    \end{equation}
    where we used \eqref{eq:nice t picked}. Then, if the attackers did not produce inconsistent commitments
    with probability at least $1-\delta$, the probability of accepting is upper bounded by 
    \begin{equation}\label{eq:maximization1}
    \begin{split}
        \max_{\{\varepsilon_i\}_{i\in C}}\qquad &\prod_{i\in C}\left(p^*+5\varepsilon_i^{2/3}\right) \\ \text{subject to }\quad&\sum_{i=1}^m\varepsilon_i^2\leq \frac{1}{2}\ln\frac{1}{\delta}.
    \end{split}
    \end{equation}
    The $\varepsilon_i$ maximizing \eqref{eq:maximization1} maximize as well the natural logarithm of the product in \eqref{eq:maximization1}. To find them, consider the maximization, where we see the condition on $\varepsilon_i$ also implies $\sum_{i\in C}\varepsilon_i^2\leq \ln(1/\delta)$, 
    \begin{equation}
    \begin{split}
        \max_{\{\varepsilon_i\}_{i\in C}}\qquad &\ln\prod_{i\in C}\left(p^*+5\varepsilon_i^{2/3}\right)=\sum_{i\in C}\ln\left(p^*+5\varepsilon_i^{2/3}\right) \\ \text{subject to }\quad&\sum_{i\in C}\varepsilon_i^2\leq \frac12\ln\frac{1}{\delta}.
    \end{split}
    \end{equation}
   Optimizing via Lagrange multipliers yields a symmetric solution in which the parameters that maximize the quantity satisfy
\begin{equation}
    \varepsilon_i^2 = \frac{\ln(1/\delta)}{2|C|} \quad \text{for all } i \in C.
\end{equation}
Then, plugging it into \eqref{eq:prob_acc_product}, we have that 
\begin{equation}
\begin{split}
     \Pr[\text{accept}|c_1\ldots c_m]&\leq \prod_{i\in C}\bigg(p^*+5\varepsilon_i^{2/3}\bigg)\leq \bigg(p^*+5\sqrt[3]{\frac{\ln(1/\delta)}{2|C|}}\bigg)^{\abs C}\leq \bigg(p^*+5\sqrt[3]{\frac{\ln(1/\delta)}{2k}}\,\bigg)^{k},
\end{split}
\end{equation}
where we used that $\abs{C}\geq k$. Then, from \eqref{eq:accept_sequential},
\begin{equation}
\begin{split}
     \Pr[\text{accept}]&=\sum_{c_i\ldots c_m:\abs{c_i,\ldots, c_m}\geq k}\!\!\!\!\!\!\Pr[c_1\ldots c_m]\Pr[\text{accept}\mid c_1...c_m]\\&
     \leq  \bigg(p^*+5\sqrt[3]{\frac{\ln(1/\delta)}{2k}}\,\bigg)^{k}\sum_{c_i\ldots c_m:\abs{c_i,\ldots, c_m}\geq k}\!\!\!\!\!\!\Pr[c_1\ldots c_m]%     \leq  \bigg(p^*+5\sqrt[3]{\frac{\ln(1/\delta)}{k}}\,\bigg)^{k}\sum_{c_i\ldots c_m}\Pr[c_1\ldots c_m].
\end{split}
\end{equation}
Finally, we just apply $\sum_{c_i\ldots c_m:\abs{c_i,\ldots, c_m}\geq k}\Pr[c_1\ldots c_m]\leq\sum_{c_i\ldots c_m}\Pr[c_1\ldots c_m]=1$. 
\end{proof}

We now turn to the more realistic setting in which errors may occur, and the acceptance criterion is relaxed from requiring perfect correctness to allowing a bounded fraction of discrepancies. Specifically, as stated in de description of (\cQPVBBf)$^m$, the verifiers accept the claimed location if the Hamming distance between the reported answers and the expected responses satisfies
\(
d_H(v_C, w_C) \le \gamma |C|.
\)

\begin{theorem}\label{thm:seq_error}
Let the optimal success probability of any adversary for \QPVBBf{} in the considered security model be upper bounded by $p^*$ (cf.~\eqref{eq:original-success-bound}). Let $\delta$ denote the probability that no inconsistency of commitments occurs in any of the $m$ rounds. Then, either the attackers are detected due to an inconsistent commitments with probability at least $1-\delta$, or conditioned on the event that no inconsistency occurs, for any error parameter ${\gamma\leq1-p^*-5({\ln(1/\delta)}/{k})^{1/3}}$, the acceptance probability after $m$ rounds satisfies

 \begin{equation}
        \Pr[\textnormal{accept }(\text{\cQPVBBf})^{m}]\leq e^{-\frac{k}{2}\left(1-\gamma-p^*-5\left(\frac{\ln(1/\delta)}{k}\right)^{1/3}\right)^2},
    \end{equation}
\end{theorem}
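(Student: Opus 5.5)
The plan is to follow the error-free argument of \cref{thm:seq_rep}, replacing the product over committed rounds with a martingale concentration bound on the number of correct answers. Start from the decomposition \eqref{eq:accept_sequential} and fix a commitment pattern $c_1\ldots c_m$ with $|C|\geq k$. For each $i\in C$, let $X_i\in\{0,1\}$ be the indicator that $w_i=v_i$. Acceptance requires $\sum_{i\in C}X_i\geq(1-\gamma)|C|$.

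Condition on the history of previous rounds. \cref{lemma:per_round_eps} gives
\[
\Pr[X_i=1\mid \text{history}]\leq p^*+5\varepsilon_i^{2/3}=:p_i,
\]
where the $\varepsilon_i$ may be chosen adaptively but are constrained by \eqref{eq:sum eps<=ln} on the event that no inconsistency occurs. Set $\bar p:=\frac{1}{|C|}\sum_{i\in C}p_i$. The function $t\mapsto t^{1/3}$ is concave, so Jensen's inequality together with \eqref{eq:sum eps<=ln} gives
\[
\frac{1}{|C|}\sum_{i\in C}\varepsilon_i^{2/3}\leq\Big(\frac{1}{|C|}\sum_{i\in C}\varepsilon_i^2\Big)^{1/3}\leq\Big(\frac{\ln(1/\delta)}{2|C|}\Big)^{1/3}\leq\Big(\frac{\ln(1/\delta)}{k}\Big)^{1/3}.
\]
Hence $\bar p\leq p^*+5(\ln(1/\delta)/k)^{1/3}$. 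This replaces the Lagrange-multiplier step of \cref{thm:seq_rep}, and here only an average over rounds is needed.

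Next, consider the process $Y_j=\sum_{i\leq j,\,i\in C}(X_i-p_i)$. Each increment has conditional expectation at most $0$ and lies in an interval of length $1$, so $Y_j$ is a supermartingale with bounded differences. The Azuma--Hoeffding inequality for such supermartingales gives, for $t\geq0$,
\[
\Pr\Big[\sum_{i\in C}X_i\geq \sum_{i\in C}p_i+t|C|\Big]\leq e^{-2t^2|C|}.
\]
Take $t=1-\gamma-\bar p$. This is nonnegative because of the assumed bound on $\gamma$ and the estimate on $\bar p$ above. The result is
\[
\Pr[\text{accept}\mid c_1\ldots c_m]\leq e^{-2|C|(1-\gamma-\bar p)^2}\leq e^{-\frac{k}{2}\left(1-\gamma-p^*-5(\ln(1/\delta)/k)^{1/3}\right)^2},
\]
using $|C|\geq k$ and the stated (weaker) constant $\tfrac12$. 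Summing over the commitment patterns with total weight at most $1$ finishes the proof, as in \cref{thm:seq_rep}.

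The main obstacle is making the adaptive dependence rigorous. Both the $p_i$ and the set $C$ are random and chosen adaptively. Conditioning on the full pattern $c_1\ldots c_m$, as \eqref{eq:prob_acc_product} implicitly does, may distort the conditional bounds coming from \cref{lemma:per_round_eps}.

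I would try to handle this as follows. Work with the filtration over rounds, treating $c_i$ as revealed before $w_i$, and apply Azuma directly to $\sum_{i\leq m}c_i(X_i-p_i)$. The predictable quantities $p_i$ and $c_i$ are known at each step, and the number of committed steps is at least $k$ on the accepting event. A stopping-time version, stopping at the $k$-th committed round or using Freedman's inequality, should absorb the randomness of $|C|$. The slack between the factor $2$ and $\tfrac12$ in the exponent can be used to cover losses from this step and from replacing $\tfrac12\ln(1/\delta)$ by $\ln(1/\delta)$.
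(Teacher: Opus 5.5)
Your proposal is correct and follows essentially the paper's proof. The paper likewise centers the indicators $X_i$ into a martingale, bounds each conditional mean by $p^*+5\varepsilon_i^{2/3}$ via \cref{lemma:per_round_eps}, controls $\sum_{i\in C}\varepsilon_i^{2/3}$ through \eqref{eq:sum eps<=ln} (your Jensen step makes this explicit), and concludes with Azuma's inequality (whose bounded-difference constant is where the $\tfrac{k}{2}$ comes from), applied as though $|C|$ were deterministic. The paper glosses over the randomness of $|C|$ that you flag, and of your suggested fixes the Freedman-type (exponential-supermartingale) one works with no loss: with the deterministic $\eta:=1-\gamma-p^*-5(\ln(1/\delta)/k)^{1/3}\geq 0$ and $\lambda=4\eta$, the process $\exp\bigl(\lambda\sum_{i\le j}c_i(X_i-p_i)-\tfrac{\lambda^2}{8}\sum_{i\le j}c_i\bigr)$ is a supermartingale whose final value is at least $e^{2\eta^2 k}$ on the accepting event, so Markov gives $e^{-2\eta^2k}$; stopping at the $k$-th committed round would not work, since for $\gamma>0$ acceptance constrains all of $C$.
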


Compared to \cite{allerstorfer2023makingexistingquantumposition}, where the i.i.d.\ and adaptive analyses yield $O(r^{-1/3})$ and $O(r^{-1/4})$ finite-size corrections respectively, our noisy-case bound achieves an $O(k^{-1/3})$ scaling while remaining valid against adaptive adversaries.

\begin{proof}
    As introduced above, let $C=\{i\mid c_i=1\}$ denotes the set of conclusive rounds. 
    For $i\in C$, consider the random variable $X_i$ which takes value 1 if the answer is correct in the round $i$ (i.e. $w_i=v_i$) and 0 otherwise. Let $\Gamma_C=\sum_{i\in C}X_i,$
then, the verifiers accept if 
\begin{equation}\label{eq:accept_Gamma}
    \Gamma_C\geq(1-\gamma)\abs{C}.
\end{equation}
Define
\begin{equation}
    M_t:=\sum_{i=1}^t(X_{i}-\mathbb E[X_i\mid X_1,\ldots,X_{i-1}]),
\end{equation}
which can be rewritten as $M_t=M_{t-1}+(X_t-\mathbb E[X_t\mid X_1,\ldots,X_{t-1}])$, and thus
\begin{equation}
\begin{split}
    \mathbb E[M_t\mid  X_1,\ldots,X_{t-1}]&=\mathbb E[M_{t-1}\mid X_1,\ldots,X_{t-1}]+\mathbb E[X_t\mid  X_1,\ldots,X_{t-1}]-\mathbb E[X_t\mid X_1,\ldots,X_{t-1}]\\&=\mathbb E[M_{t-1}\mid X_1,\ldots,X_{t-1}]=M_{t-1},
\end{split}
\end{equation}
therefore, $M_t$ is a martingale. We have that
\begin{equation}\label{eq:M and Gamma}
    \Gamma_C=M_{\abs{C}}+\sum_{i\in C}\mathbb E[X_i\mid \mathcal F_{i-1}]\leq M_{\abs C}+ \sum_{i\in C}(p^*+ 5\varepsilon_i^{2/3})\leq M_{\abs C}+ p^*+5\left(\frac{\ln(1/\delta)}{2\abs{C}}\right)^{1/3}
\end{equation}
where we used that from \eqref{eq:nice t picked}, $ \Pr[X_i\mid \mathcal F_{i-1}]\leq p^*+ 5\varepsilon_i^{2/3}$, which holds for $\varepsilon_i$ picked adaptively depending on the previous rounds, and in the last inequality we used the constraint \eqref{eq:sum eps<=ln}, $\sum_{i\in C}\varepsilon_i^2\leq \sum_{i=1}^m\varepsilon_i^2\leq\frac12\ln(1/\delta)$. Combining \eqref{eq:M and Gamma} with \eqref{eq:accept_Gamma}, 
\begin{equation}
(1-\gamma)\abs{C}\leq\Gamma_C\leq M_{\abs C}+ p^*+5\left(\frac{\ln(1/\delta)}{2\abs{C}}\right)^{1/3},   
\end{equation}
and therefore
\begin{equation}
    \Pr[\text{accept}]=\Pr[\Gamma_C\geq (1-\gamma)\abs{C}]\leq\Pr[M_{\abs C}\geq \left(1-\gamma-p^*-5\left({\ln(1/\delta)}/{(2\abs{C})}\right)^{1/3}\right)\abs{C}].
\end{equation}
Then, by Azuma's inequality, and using $\abs{C}\leq k$, we have that the above quantity is upper bounded by $e^{-\frac{k}{2}\left(1-\gamma-p^*-5\left(\frac{\ln(1/\delta)}{2k}\right)^{1/3}\right)^2}$, provided that $1-\gamma-p^*-5\left({\ln(1/\delta)}/{2k}\right)^{1/3}>0$. 
\end{proof}

%========================================================================================
\section{Discussion and Acknowledgements}

We have shown that the commitment paradigm introduced in~\cite{allerstorfer2023makingexistingquantumposition} can be applied in parallel to the single-execution protocol \QPVBBfparallel{} while preserving the security guarantees of the underlying protocol. The resulting protocol achieves exponentially small soundness, remains secure against bounded-entanglement adversaries, and tolerates arbitrarily large transmission losses. These features make \cQPVBBfparallel{} particularly appealing for long-distance implementations. Furthermore, its single-execution nature makes it especially suitable in settings where QPV is used as a building block within larger cryptographic protocols.

A key difference from the sequential setting lies in the role of conditioning. In the sequential analysis of~\cite{allerstorfer2023makingexistingquantumposition}, and also in our refined treatment, security is established through an ``either--or'' statement involving a parameter $\delta$ (cf.~\cref{thm:seq_error}): either the attackers are detected due to inconsistent commitments with probability at least $1-\delta$, or, conditioned on the event that no inconsistency occurs, their acceptance probability is bounded. In the parallel setting, by contrast, we directly analyze the acceptance probability of the protocol, without an ``either--or'' formulation. This leads to a more direct bound on the adversary’s success probability.

An interesting direction for future work is to obtain security results for explicit choices of the function $f$. Such constructions are likely to be the most relevant for practical implementations. In particular, for Boolean functions implemented in parallel, stronger security guarantees, e.g.~super-linear lower bounds on the pre-shared entanglement or more noise-tolerance, for the protocol \QPVBBfparallelotimes{} would immediately translate into stronger guarantees for its commitment-based counterpart through \cref{thm:parallel_g}.

This work was supported by was supported by the Dutch Ministry of Economic Affairs and Climate Policy (EZK), as part of the Quantum Delta NL program, and the project Divide and Quantum `D\&Q' NWA.1389.20.241 of the program `NWA-ORC', which is partly funded by the Dutch Research Council (NWO).

%========================================================================================
 
\bibliographystyle{alphaurl}
\bibliography{biblio.bib}

\begin{appendix}

  \section{Proof of  \texorpdfstring{\cref{lem:fix_c}}{Lemma 4.1 }}\label{sec:appendix}

Consider a general strategy as introduced in \eqref{eq:tuple_strategy}
\[
    S \;=\; \Big(
        \rho,\;
        \{\mathcal{I}^x_c\}_{x,c},\;
        \{\mathcal{J}^y_c\}_{y,c},\;
        \{A^{xyc}_{a}\}_{x,y,c,a},\;
        \{B^{xyc}_{b}\}_{x,y,c,b}
    \Big).
\]
We then fix an arbitrary $c \in \{0,1\}^m$ with $\abs{c} \geq k$; 
in the case 
$c = 1\ldots1$, \cref{lem:fix_c} corresponds to the acceptance probability of the attackers in \QPVBBfparallel. 
Applying the Stinespring dilation to replace the channels by unitaries, we may restrict attention, for each fixed $c$, to strategies of the form 
\[
    S_c =\{\ket{\psi^c},U^{x,c},V^{y,c},\{A^{xyc}_{a_C}\}_{a_C},\{B^{xyc}_{b_C}\}_{b_C}\}_{x,y}.
\]
We will consider strategies $S_c$ where the state $\ket{\psi^c}$ is not required to be generated via local instruments by Alice and Bob, i.e., we relax the requirement that it arises from a distributed preparation procedure. This yields a larger class of strategies and therefore an upper bound on the optimal success probability. Strategies $S_c$ are strategies to attack \QPVBBfparallel{} when $f$ is restricted to the subset $C$, denoted as \fCQPVBBfparallel. Then, we have that, writting $\ket{\psi^c_{xy}}:=(U^{x,c}\otimes V^{y,c})\ket{\psi^c}$,
\begin{equation}\label{eq:conditioned on c}
    \Pr[\textnormal{accept } w_C \mid c]\leq \frac{1}{2^{2n}}\sum_{x,y}\tr{\Pi^{xyc}_{VAB}\ketbra{\psi^c_{xy}}{\psi^c_{xy}}}=\Pr[\textnormal{accept \fCQPVBBfparallel}].
\end{equation}

To prove \cref{lem:fix_c}, we upper bound $\Pr[\textnormal{accept \fCQPVBBfparallel}]$. Our approach follows~\cite{escolàfarràs2025quantumpositionverificationshot}, which established an upper bound on $\Pr[\textnormal{accept \QPVBBfparallel}]$.

\begin{definition}\label{def:q-beta-strategy} Let $\omega_0,\beta\in(0,1]$. A $q$-qubit strategy $ S_\textsf{c}\big|_c $ is a $(\omega_0,q,\beta\cdot2^{2n})$-strategy if there exists a set $\mathcal{B}\subseteq\{0,1\}^{2n}$ with $\abs{\mathcal{B}}\geq \beta\cdot2^{2n}$  such that 
\begin{equation}
    \tr{\Pi^{xyc}_{AB}\ketbra{\psi_{xy}^c}{\psi_{xy}^c}}\geq \omega_0, \text{ }\text{ } \forall(x,y)\in\mathcal{B}.
\end{equation}
\end{definition}

A quantity that will be of interest is given by the probability to accept \fCQPVBBfparallel{} whenever the we fix $\ket{\psi^c}_{VA'B'}$ in a strategy $ S_c$, which we will shortly denote as $S_{\psi^c}$
\begin{equation*}
    \Pr_{S_{\psi^c}}[\textnormal{accept \fCQPVBBfparallel}]=\frac{1}{2^{2n} }\sum_{x,y}\tr{ \sum_{v_C}\left(M^{f(x,y)_C}_{v_C}  \otimes\mathbb I_{V_{\bar C}}\otimes \sum_{a:d_H(v_C,a)\leq \gamma |c|}A^{xyc}_{a}\otimes B^{xyc}_{a}\right)\ketbra{\psi^c}{\psi^c}}.
\end{equation*}

\begin{lemma}\label{lem:bound w_psi for f in F} 
Let $\varepsilon>0$. 
Then, for every $f\in\mathcal F_{\varepsilon}$  
the following bound holds: every $S_{\psi^c}$  fulfills that,
\begin{equation}\label{eq:p[correct |c] fixed state}
  \Pr_{S_{\psi^c}}[\textnormal{accept \fCQPVBBfparallel}]\leq 
    (\lambda_\gamma)^{\abs{c}}\big(1+\sqrt{3\ln{(2/\varepsilon)}}{2^{-n+m/2}}\big)
\end{equation}
\end{lemma}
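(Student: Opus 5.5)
The approach follows the fixed-state analysis of~\cite{escolàfarràs2025quantumpositionverificationshot}, restricted to the coordinates in $C$. With $\ket{\psi^c}$ fixed, the only dependence on $(x,y)$ is through the basis string $z_C=f(x,y)_C$ in the verifier's measurement and through the POVMs $A^{xyc}_a$, $B^{xyc}_b$. We first group inputs by the value of $z_C$. For fixed $z\in\{0,1\}^m$, the contribution of all $(x,y)$ with $f(x,y)=z$ is bounded by $2^{2n}q_f(z)$ times the best value achievable by a single pair of POVMs that answers for basis $z_C$. The reason is that each such term is $\tr{\Pi\,\psi^c}$ for a valid POVM pair, so we may replace them all by the maximizing pair $\{A^{z}_a\},\{B^{z}_b\}$ for that basis. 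Summing over $z$ and using $q_f(z)\le 2^{-m}+\sqrt{3\ln(2/\varepsilon)}\,2^{-n-m/2}$ (because $f\in\mathcal F^\varepsilon$) gives
\[
\Pr_{S_{\psi^c}}\le \big(1+\sqrt{3\ln(2/\varepsilon)}\,2^{-n+m/2}\big)\,\frac{1}{2^m}\sum_{z\in\{0,1\}^m}\tr{\Pi^{z}\psi^c}.
\]
The coordinates $z_{\bar C}$ do not enter $\Pi^z$ beyond the choice of POVMs, so we may regard the average over $z$ as an average over $z_C\in\{0,1\}^{|c|}$, with at most $2^{m-|c|}$ copies of each term.

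The second step is to bound $\frac{1}{2^{|c|}}\sum_{z_C}\tr{\Pi^{z_C}\psi^c}$ by $\lambda_\gamma^{|c|}$. This is a monogamy-of-entanglement game on $|c|$ BB84 qubits with error tolerance $\gamma$, played with Alice and Bob both required to output $a$ close to $v_C$. We drop Bob's condition, or more precisely keep only the event that both are close to $v_C$, which is contained in the event that $a=b$ and $a$ is within distance $\gamma|c|$ of $v_C$. We then apply the standard operator-norm argument of Tomamichel et al.~\cite{TomamichelMonogamyGame2013}. First, the success probability is at most the norm of $\frac{1}{2^{|c|}}\sum_{z}\Pi^z$. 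Second, a union bound over the error patterns contributes a factor $\sum_{j\le\gamma|c|}\binom{|c|}{j}\le 2^{h(\gamma)|c|}$. Third, for a fixed error pattern the norm bound factorizes qubit by qubit, each qubit contributing $\frac12+\frac{1}{2\sqrt2}$. Multiplying these gives $\lambda_\gamma^{|c|}$.

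The main obstacle is the first step. A clean replacement of the $(x,y)$-dependent POVMs by basis-dependent ones is only valid if we bound the single-$z$ value uniformly by the best strategy for that basis. The quantity that remains is then an average over $z$ of the success of a \emph{basis-aware} strategy, and such an average need not be small, since the players know $z$. So the monogamy bound must be applied in the form that holds even when the measurements depend on $z$, i.e.\ the game in which the measurement basis is announced to both players after the state is fixed. This is exactly the BB84 monogamy setting, so it works, and it is precisely where the $\frac12+\frac1{2\sqrt2}$ constant comes from.

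Finally, the factor $2^{m-|c|}$ copies cancels against the normalization $2^{-m}$, leaving $2^{-|c|}$. The nonuniformity factor carries $2^{m/2}$ rather than $2^{|c|/2}$, since $q_f$ is controlled on the full $m$-bit string; this yields exactly the stated bound.
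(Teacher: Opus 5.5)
Your proof is correct and follows essentially the same route as the paper: group inputs by $z=f(x,y)$, bound each group by $2^{2n}q_f(z)$ times the optimal value for basis $z_C$, use $q_f(z)\le 2^{-m}+\sqrt{3\ln(2/\varepsilon)}\,2^{-n-m/2}$, absorb the $2^{m-|c|}$ copies over $z_{\bar C}$, and apply the Tomamichel et al.\ bound $\lambda_\gamma^{|c|}$. The paper maximizes twice, first over the preimage of $z$ and then over $z_{\bar C}$; you do this in a single supremum over POVMs for $z_C$, which amounts to the same thing.

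One slip in your second step needs fixing. Bob's condition cannot be dropped, since Alice alone could hold the purification of $V_C$ and guess perfectly. Also, the inclusion goes the other way: $\{a=b,\ d_H(a,v_C)\le\gamma|c|\}\subseteq\{d_H(a,v_C)\le\gamma|c|,\ d_H(b,v_C)\le\gamma|c|\}$. This direction is exactly what makes the monogamy-game value an upper bound on the acceptance probability.
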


\begin{proof}
Let $\ket{\psi^c}_{VA'B'}$ from a $S_{\psi^c}$ strategy. Then, we have that 
\begin{equation}
\begin{split}
       &\Pr_{S_{\psi^c}}[\textnormal{accept \fCQPVBBfparallel}]\\&\leq\max_{\{A^{xyc}_a\}_a,\{B^{xyc}_b\}_b}\frac{1}{2^{2n} }\sum_{x,y,v_C}\tr{ \left(M^{f(x,y)_C}_{v_C}  \otimes\mathbb I_{V_{\bar C}}\otimes \sum_{a:d_H(v_C,a)\leq \gamma |c|}A^{xyc}_{a}\otimes B^{xyc}_{a}\right)\ketbra{\psi^c}{\psi^c}}\\
       &=\max_{\{A^{xyc}_a\}_a,\{B^{xyc}_b\}_b}\sum_{z}\frac{q_f(z)}{n_z}\sum_{x,y:f(x,y)=z}\sum_{v_C}\tr{ \left(M^{z_C}_{v_C} \otimes\mathbb I_{V_{\bar C}}\otimes \sum_{a:d_H(v_C,a)\leq \gamma |c|}A^{xyc}_{a}\otimes B^{xyc}_{a}\right)\ketbra{\psi^c}{\psi^c}}.
\end{split}
\end{equation}
Then, note that
\begin{equation}\label{eq:upper_bound_max_z}
\begin{split}
&\sum_{x,y:f(x,y)=z}\tr{ \left(M^{z_C}_{v_C} \otimes\mathbb I_{V_{\bar C}}\otimes \sum_{a:d_H(v_C,a)\leq \gamma |c|}A^{xyc}_{a}\otimes B^{xyc}_{a}\right)\ketbra{\psi^c}{\psi^c}}\\
&\leq n_z\max_{x,y:f(x,y)=z}\tr{ \left(M^{z_C}_{v_C} \otimes\mathbb I_{V_{\bar C}}\otimes \sum_{a:d_H(v_C,a)\leq \gamma |c|}A^{xyc}_{a}\otimes B^{xyc}_{a}\right)\ketbra{\psi^c}{\psi^c}}.
\end{split}
\end{equation}
Then, denoting by $A^{z}_{a'}$ and $B^{z}_{a'}$ the corresponding $A^{xyc}_{a'}$ and $ B^{xyc}_{a'}$ (recall that these $x$ and $y$ are such that $f(x,y)=z$ and $c$ is fixed) that attain the maximum in the right-hand side of \eqref{eq:upper_bound_max_z}, we have that 
\begin{equation}\label{eq:w_psi<=A^zB^z}
    \begin{split}
        &\max_{\{A^z_a\}_a,\{B^z_b\}_b}\sum_{z_C}\sum_{z_{\bar C}}q_f(z)\sum_{v_C}\tr{ \left(M^{z_C}_{v_C} \otimes\mathbb I_{V_{\bar C}}\otimes \sum_{a:d_H(v_C,a)\leq \gamma |c|}A^{z}_{a}\otimes B^{z}_{a}\right)\ketbra{\psi^c}{\psi^c}}\\&
        \leq\max_{\{A^z_a\}_a,\{B^z_b\}_b}\sum_{z_C}\bigg(\frac{1}{2^m}+\frac{\sqrt{3\ln{(2/\varepsilon)}}}{2^{n+m/2}}\bigg)\sum_{v_C}\sum_{z_{\bar C}}\tr{ \left(M^{z_C}_{v_C} \otimes\mathbb I_{V_{\bar C}}\otimes  \sum_{a:d_H(v_C,a)\leq \gamma |c|}A^{z}_{a}\otimes B^{z}_{a}\right)\ketbra{\psi^c}{\psi^c}}.
    \end{split}
\end{equation}
Where we used that $f\in\mathcal F_\varepsilon$ and thus $q_f(z)\leq\frac{1}{2^m}+\frac{\sqrt{3\ln{(2/\varepsilon)}}}{2^{n+m/2}}$. Similarly as in \eqref{eq:upper_bound_max_z}, we upper bound the sum over $z_C$  of traces by $\abs{z_{\bar{C}}}\max_{z_{\bar C}}$, and denoting ${A^{z_C}_a}$ and $B^{z_C}_b$ the corresponding maximizing elements, we have the above expression can be upper bounded by 
\begin{equation}\label{eq:max fix state}
    \abs{z_{\bar C}}\bigg(\frac{1}{2^m}+\frac{\sqrt{3\ln{(2/\varepsilon)}}}{2^{n+m/2}}\bigg)\max_{\{A^{z_C}_a\}_a,\{B^{z_C}_b\}_b}\sum_{z_C,v_C}\tr{ \left(M^{z_C}_{v_C} \otimes\mathbb I_{V_{\bar C}}\otimes  \sum_{a:d_H(v_C,a)\leq \gamma |c|}A^{z_C}_{a}\otimes B^{z_C}_{a}\right)\ketbra{\psi^c}{\psi^c}}.
\end{equation}

In \cite{TomamichelMonogamyGame2013}, it is proven that this maximum, which corresponds to the maximum winning probability of a monogamy-of-entanglement game described by the measurements $\{M^{z_C}_{v_C} \bigotimes_{i:c_i=0}\mathbb I_{V_i}\}_{v_C}$ is upper bounded by $2^{\abs{c}}\left(\lambda_\gamma\right)^{\abs{c}}$. 
Finally, using that $\abs{z_{\bar C}}=2^{m-\abs{c}}$, we recover \eqref{eq:p[correct |c] fixed state}.
\end{proof}

\begin{definition}\label{def:good_state_for_z}
   Let $\varepsilon,\Delta>0$. For $c\in\{0,1\}^m$, we say that a state $\ket{\psi^c}_{VA'B'}$ is $\Delta-$\emph{good to attack} $z_C\in\{0,1\}^{\abs{c}}$ if there exists \text{POVMs} $\{A^{z_C}_a\}_a$ and $\{B^{z_C}_b\}_{b}$ acting on $A'$ and $B'$, respectively, such that 
    \begin{equation}\label{eq:goodzc}%\bigotimes_{i:c_i=0}\mathbb I_{V_i}
       \sum_{v_C}\tr{ \left(M^{z_C}_{v_C} \otimes\mathbb I_{V_{\overline C}}\otimes\!\!\!\!\!  \sum_{a:d_H(v_C,a)\leq \gamma |c|}\!\!\!A^{z_C}_{a}\otimes B^{z_C}_{a}\right)\ketbra{\psi^c}{\psi^c}}\geq 
    (\lambda_\gamma+\Delta)^{\abs{c}}\big(1+3\sqrt{3\ln{(2/\varepsilon)}}{2^{-n+m/2}}\big)
    \end{equation}
\end{definition}
The above definition says that a state is $\Delta$-\emph{good to attack} $z_C$
if the acceptance probability on input $z_C$ (the left-hand side
of~\eqref{eq:goodzc}) exceeds the bound of \cref{lem:bound w_psi for f in F},
with $\lambda_\gamma$ essentially replaced by $\lambda_\gamma+\Delta$. We will see that we will have freedom to choose $\Delta>0$. For now, we only require that $\Delta$ is such that $\lambda_\gamma+\Delta<1$ to ensure that the bound in \cref{def:good_state_for_z} is nontrivial.

\begin{lemma} \label{lem:number_z} Let $\varepsilon,\Delta>0$ and fix $c\in\{0,1\}^m$. 
Then, for every $f\in\mathcal F_\varepsilon$, 
any quantum state $\ket{\psi}_{VA'B'}$ can be $\Delta-$good for \QPVBBfparallel~on at most a fraction of all the possible $z_C\in\{0,1\}^{\abs{c}}$ given by 
    \begin{equation}
        \left(\frac{\lambda_\gamma}{\lambda_\gamma+\Delta}\right)^{\abs{c}}.
    \end{equation}
\end{lemma}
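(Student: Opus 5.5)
Let $G\subseteq\{0,1\}^{\abs{c}}$ be the set of $z_C$ on which $\ket{\psi}_{VA'B'}$ is $\Delta$-good. The plan is a counting argument: compare the total success on $G$ with the monogamy-of-entanglement bound already used in the proof of \cref{lem:bound w_psi for f in F}. For each $z_C\in G$ I fix POVMs $\{A^{z_C}_a\}_a$, $\{B^{z_C}_b\}_b$ witnessing \eqref{eq:goodzc}. For $z_C\notin G$ I take arbitrary POVMs, for instance trivial ones with a fixed outcome. Every summand is nonnegative, so
\begin{equation*}
\sum_{z_C\in\{0,1\}^{\abs{c}}}\sum_{v_C}\tr{\Big(M^{z_C}_{v_C}\otimes\mathbb I_{V_{\bar C}}\otimes\!\!\sum_{a:d_H(v_C,a)\leq\gamma\abs{c}}\!\!A^{z_C}_a\otimes B^{z_C}_a\Big)\ketbra{\psi}{\psi}}\;\geq\;\abs{G}\,(\lambda_\gamma+\Delta)^{\abs{c}}\big(1+3\sqrt{3\ln(2/\varepsilon)}\,2^{-n+m/2}\big).
\end{equation*}

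For the upper bound, the left-hand side is the (unnormalized) winning probability of the monogamy-of-entanglement game built from the BB84 measurements $\{M^{z_C}_{v_C}\}$ on $V_C$, with $V_{\bar C}$ traced out. By \cite{TomamichelMonogamyGame2013}, exactly as invoked in the proof of \cref{lem:bound w_psi for f in F}, it is at most $2^{\abs{c}}\lambda_\gamma^{\abs{c}}$. This bound holds for every state and every choice of POVMs, so it applies to the mixed choice above.

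Combining the two bounds and using that the factor $1+3\sqrt{3\ln(2/\varepsilon)}\,2^{-n+m/2}$ is at least $1$, I get
\[
\frac{\abs{G}}{2^{\abs{c}}}\leq\left(\frac{\lambda_\gamma}{\lambda_\gamma+\Delta}\right)^{\abs{c}},
\]
which is the claim. The hypothesis $f\in\mathcal F_\varepsilon$ plays no role here, because the definition of goodness refers only to $z_C$ and not to $f$.

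The only delicate point is that the monogamy bound must allow the POVMs to depend arbitrarily on $z_C$. I expect it does, since this is precisely the form of the game in \cite{TomamichelMonogamyGame2013}, with the threshold $\gamma$ already accounted for by the factor $2^{h(\gamma)}$ inside $\lambda_\gamma$.
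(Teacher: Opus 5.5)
Your proof is correct, but it takes a more direct route than the paper's. The paper does not apply the monogamy-of-entanglement bound to the uniform sum over $z_C$. It starts from the $q_f$-weighted bound $\lambda_\gamma^{\abs{c}}(1+x)$ of \cref{lem:bound w_psi for f in F}, where $x:=\sqrt{3\ln(2/\varepsilon)}\,2^{-n+m/2}$. It then uses $f\in\mathcal F_\varepsilon$ to lower bound every $q_f(z)$ by $2^{-m}(1-x)$ and restricts to the good set. This gives $\abs{I^c_\psi}\leq 2^{\abs{c}}\big(\frac{\lambda_\gamma}{\lambda_\gamma+\Delta}\big)^{\abs{c}}\frac{1+x}{(1-x)(1+3x)}$. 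The inequality $\frac{1+x}{1-x}\leq 1+3x$ for $x<1/4$ then closes the argument, and this is exactly why the factor $1+3x$ is built into \cref{def:good_state_for_z}.

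You instead apply the same bound from \cite{TomamichelMonogamyGame2013} directly to the unweighted sum over $z_C$, with $z_C$-dependent POVMs. That is precisely the form of the game: the state is fixed before the basis is revealed, and the guessers' measurements may depend on it. It is also how the paper itself invokes the bound inside \cref{lem:bound w_psi for f in F}, so your ``delicate point'' is not a concern.

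Your route never involves $q_f$. It yields the slightly stronger bound $\abs{G}/2^{\abs{c}}\leq(\lambda_\gamma/(\lambda_\gamma+\Delta))^{\abs{c}}/(1+3x)$, and it shows that, for this lemma, both the hypothesis $f\in\mathcal F_\varepsilon$ and the $1+3x$ slack are unnecessary. The paper's detour buys only reuse of the chain already written for \cref{lem:bound w_psi for f in F}, plus an explanation of how the constant in the definition of $\Delta$-good was calibrated.
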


\begin{proof}
    Let $I^c_{\psi^c}=\{z_C\in\{0,1\}^{\abs{c}}\mid \ket{\psi} \text{ is }\Delta \text{-good to attack }z_C\}$. 
    We want to upper bound the size of~$I^c_{\psi}$. 
    By \cref{lem:bound w_psi for f in F}, see \eqref{eq:w_psi<=A^zB^z},
    \begin{equation}
        \begin{split}
         &(\lambda_\gamma)^{\abs{c}}\big(1+\sqrt{3\ln{(2/\varepsilon)}}{2^{-n+m/2}}\big)\\
        &\geq\max_{\{A^z_a\}_a,\{B^z_b\}_b}\sum_{z_C}\sum_{z_{\bar C}}q_f(z)\sum_{v_C}\tr{ \left(M^{z_C}_{v_C} \otimes\mathbb I_{V_{\bar C}}\otimes \sum_{a:d_H(v_C,a)\leq \gamma |c|}A^{z}_{a}\otimes B^{z}_{a}\right)\ketbra{\psi}{\psi}}
         \\
         &\geq\max_{\{A^{z_C}_a\}_a,\{B^{z_C}_b\}_b}\sum_{z_C}\sum_{z_{\bar C}}(\frac{1}{2^m}-\sqrt{3\ln{(2/\varepsilon)}}{2^{-n-m/2}})\sum_{v_C}\tr{ \left(M^{z_C}_{v_C} \otimes\mathbb I_{V_{\bar C}}\otimes \sum_{a:d_H(v_C,a)\leq \gamma |c|}A^{z_C}_{a}\otimes B^{z_C}_{a}\right)\ketbra{\psi}{\psi}}
         \\
         &\geq(\frac{1}{2^m}-\sqrt{3\ln{(2/\varepsilon)}}{2^{-n-m/2}})\abs{z_{\bar C}}\max_{\{A^{z_C}_a\}_a,\{B^{z_C}_b\}_b}\sum_{z_C\in I^c_{\psi} }\sum_{v_C}\tr{ \left(M^{z_C}_{v_C} \otimes\mathbb I_{V_{\bar C}}\otimes \sum_{a:d_H(v_C,a)\leq \gamma |c|}A^{z_C}_{a}\otimes B^{z_C}_{a}\right)\ketbra{\psi}{\psi}}
         \\
         &\geq(\frac{1}{2^{\abs c}}-\sqrt{3\ln{(2/\varepsilon)}}{2^{-n+m/2-\abs{c}}})(\lambda_\gamma+\Delta)^{\abs{c}}\big(1+3\sqrt{3\ln{(2/\varepsilon)}}{2^{-n+m/2}}\big)\sum_{z_C\in I^c_{\psi} }
        \end{split}
    \end{equation}
    where in the second inequality we used that the maximum depends only on $z_C$ instead of $z$, and that $\delta_f(z)\leq\frac{1}{2^m}-\sqrt{3\ln{(2/\varepsilon)}}{2^{-n-m/2}}$ by hypothesis of~$f$. 
    In the third inequality we just summed over a smaller set of non-negative elements, and the fourth inequality comes from the hypothesis that $\ket{\psi^c}_{VAB}$ is $\Delta-$good for $z_C$ for all $z_C\in I^c_{\psi}$. 
    Then, we have that
    \begin{equation}
        \abs{ I^c_{\psi} }\leq \frac{(\lambda_\gamma)^{\abs{c}}\big(1+\sqrt{3\ln{(2/\varepsilon)}}{2^{-n+m/2}}\big)}{\frac{1}{2^{\abs c}}(1-\sqrt{3\ln{(2/\varepsilon)}}{2^{-n+m/2}})(\lambda_\gamma+\Delta)^{\abs{c}}\big(1+3\sqrt{3\ln{(2/\varepsilon)}}{2^{-n+m/2}}\big)}\leq\left(\frac{\lambda_\gamma}{\lambda_\gamma+\Delta}\right)^{\abs{c}}2^{\abs{c}},
    \end{equation}
    where, since $f\in\mathcal F_{\varepsilon}$, it holds that $\sqrt{3\ln{(2/\varepsilon)}}2^{-n+m/2}<2^{-2}$. 
    Finally we used that $\frac{1+x}{1-x}\leq 1+3x$ for $0\leq x<2^{-2}$. 
\end{proof}
For $s\in[0,1]$, and $m\in\mathbb N$, 
we will use the notation
\begin{equation}
    \mathcal P_{\leq s}(\{0,1\}^{m})=\{S\subseteq\{0,1\}^{m}\mid\;\; \abs{S}\leq 2^{sm}\},
\end{equation}
for the set of subsets of $\{0,1\}^m$ of size at most $2^{sm}$.

\begin{definition}\label{def:rounding}
   Let $\omega^c\in(0,1]$, $\Delta>0$,  $s=1-\log\frac{\lambda_\gamma+\Delta}{\lambda_\gamma}$ and $k_1,k_2^c,k_3\in\mathbb N$.  A function 
    \begin{equation}
        g_c:\{0,1\}^{k_1}\times \{0,1\}^{k_2}\times \{0,1\}^{k_3}\rightarrow 
        \mathcal P_{\leq s}(\{0,1\}^{\abs{c}})
    \end{equation}
    is a $(\omega^c,q)$-set-valued classical rounding for \cQPVBBfparallel~ of sizes $k_1,k_2,k_3$ if for all functions $f\in\mathcal F_\varepsilon$, all $\ell\in\{1,\ldots,2^{2n}\},$ for all $(\omega^c,q,\ell)-$strategies, there exist functions ${f_A^c:\{0,1\}^n\rightarrow\{0,1\}^{k_1}}$, $f_B^c:\{0,1\}^n\rightarrow\{0,1\}^{k_2}$ and $\lambda^c\in\{0,1\}^{k_3}$ such that, on at least $\ell$ pairs $(x,y)$,
    \begin{equation}
        f(x,y)_C\in g_c(f_A^c(x),f_B^c(y),\lambda^c).
    \end{equation}
\end{definition}

\begin{lemma}\label{lem:existence_rounding} Let $\varepsilon,\Delta>0$, and $\omega^c\geq (\lambda_\gamma+\Delta)^{\abs{c}}(1+3\sqrt{3\ln(2/\varepsilon)}2^{-n+m/2})+7\cdot3\Delta^m$. Then, there exists an $(\omega^c,q)$-set-valued classical rounding of sizes  
\begin{equation}
k_1,k_2\leq\log_2\left(\frac{1}{\Delta}\right)m2^{2q+1}, \text{ and } k_3\leq\log_2\left(\frac{1}{\Delta}\right)m2^{2q+m+1}.
\end{equation}
\end{lemma}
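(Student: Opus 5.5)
We will follow the construction of classical roundings from \cite{escolàfarràs2025quantumpositionverificationshot} (itself based on \cite{bluhm2022single}), adapted to the restricted index set $C$. The idea is to discretize the $q$-qubit states that Alice and Bob can hold after their local unitaries, by an $\epsilon$-net over pure states of $2q$ qubits. The description of each discretized object then serves as the classical message, and $g_c$ recovers a small set of bases $z_C$ from it.

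\textbf{Step 1: nets for the local unitaries.} We first fix a $\Delta$-net $\mathcal N$ (in trace/operator norm) for the relevant objects: the state $\ket{\psi^c}$ after Alice's and Bob's unitaries $U^{x,c}$, $V^{y,c}$, restricted to what each keeps or sends. Since each party holds at most $q$ qubits, a standard volume argument gives a net of size roughly $(1/\Delta)^{O(2^{2q})}$ for each local unitary or channel acting on the $q$-qubit registers.

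\textbf{Step 2: defining $f_A^c$, $f_B^c$, $\lambda^c$ and $g_c$.} Set $f_A^c(x)$ to be the index of the net element closest to $U^{x,c}$, and $f_B^c(y)$ the index for $V^{y,c}$. Taking $m$ copies of precision, one per qubit coordinate, gives $k_1,k_2\le \log(1/\Delta)\,m\,2^{2q+1}$ bits. The advice $\lambda^c$ encodes a discretization of the shared state together with the $V$-register, whose extra $m$ qubits produce the factor $2^{m}$ in $k_3$.

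Given indices $(i,j,\lambda)$, define $g_c(i,j,\lambda)$ as the set of $z_C$ for which the discretized state $\ket{\tilde\psi_{ij\lambda}}$ is $\Delta$-good to attack $z_C$ in the sense of \cref{def:good_state_for_z}. By \cref{lem:number_z}, this set has size at most $(\lambda_\gamma/(\lambda_\gamma+\Delta))^{\abs c}2^{\abs c}=2^{s\abs c}$, so $g_c$ indeed maps into $\mathcal P_{\le s}(\{0,1\}^{\abs c})$.

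\textbf{Step 3: correctness.} For any $(\omega^c,q,\ell)$-strategy, each $(x,y)\in\mathcal B$ satisfies $\tr{\Pi^{xyc}\ketbra{\psi^c_{xy}}{\psi^c_{xy}}}\ge\omega^c$. Replacing the actual state by its net approximation changes the acceptance probability by at most a constant times the net precision, which we aim to make $7\cdot 3\Delta^m$. The slack assumed in $\omega^c$ then ensures that the approximate state is still $\Delta$-good for $f(x,y)_C$, using the same POVMs $A^{xyc},B^{xyc}$. Hence $f(x,y)_C\in g_c(f_A^c(x),f_B^c(y),\lambda^c)$ on all $\ell$ pairs in $\mathcal B$.

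\textbf{Main obstacle.} The hardest step is the continuity estimate: the precision must be $\Delta^m$ (exponentially small) while the size stays $\log(1/\Delta)\,m\,2^{2q}$. I would first try an operator-norm net on $2q$-qubit unitaries at precision $\Delta^m$, whose log-size is $O(2^{4q}m\log(1/\Delta))$. If that is too large, I would instead net only the reduced post-measurement states, as in \cite{bluhm2022single}. Tracking the constants $7$ and $3$ follows from triangle inequalities over the three approximated components, namely Alice's part, Bob's part and the shared state.
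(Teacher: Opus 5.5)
Your construction is the one the paper relies on; it defers the proof to the earlier one-shot paper, adapted to the index set $C$. It uses nets for Alice's and Bob's local unitaries and for the joint pure state on $VAB$, with nearest-net indices as $f_A^c(x)$, $f_B^c(y)$ and $\lambda^c$. It defines $g_c$ as the set of $z_C$ for which the discretized state is $\Delta$-good, with $\abs{g_c}\le 2^{s\abs{c}}$ by \cref{lem:number_z}, and pays for the continuity step with the slack $7\cdot 3\Delta^m$ in $\omega^c$.

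What you have not established is the size bookkeeping, which is the actual content of the lemma. Step~1 uses a $\Delta$-net, whose $O(\Delta)$ approximation error cannot be absorbed by the slack $21\Delta^m$. The phrase ``$m$ copies of precision, one per qubit coordinate'' does not describe any net that approximates $U^{x,c}$. The worry in your last paragraph comes from netting $2q$-qubit unitaries. In the BE$(q)$ model, however, $U^{x,c}$ and $V^{y,c}$ act on at most $q$ qubits each, so no detour through post-measurement states is needed.

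The fix is one net per object at precision $\delta:=7\Delta^m$. A maximal $\delta$-separated subset of $\mathrm{U}(2^q)\subset M_{2^q}(\mathbb C)\cong\mathbb R^{2\cdot 4^q}$, in operator norm, is a $\delta$-net lying inside the unitary group. Comparing volumes of the disjoint $\delta/2$-balls bounds its cardinality by $(1+2/\delta)^{2\cdot 4^q}\le(7/\delta)^{2\cdot 4^q}=\Delta^{-m2^{2q+1}}$ (valid for $\Delta^m\le 5/7$). Hence $k_1,k_2\le\log_2(1/\Delta)\,m\,2^{2q+1}$, and the factor $m$ is simply $\log(7/\delta)=m\log(1/\Delta)$.

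The same argument on the unit sphere of $\mathbb C^{2^{m+2q}}$ gives $k_3\le\log_2(1/\Delta)\,m\,2^{2q+m+1}$. Here the full $m$-qubit register $V$ is needed because of the $\mathbb I_{V_{\bar C}}$ factor.

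For continuity, the triangle inequality gives $\norm{(U\otimes V)\ket{\psi^c}-(\tilde U\otimes\tilde V)\ket{\tilde\psi}}\le 3\delta$. For $0\le\Pi\le\mathbb I$ one also has $\abs{\tr{\Pi(\ketbra{\psi}{\psi}-\ketbra{\tilde\psi}{\tilde\psi})}}\le\norm{\ket{\psi}-\ket{\tilde\psi}}$. So the acceptance probability drops by at most $3\delta=7\cdot 3\Delta^m$: the $7$ is the net constant, and only the $3$ counts the three approximated objects.

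Choosing the net points inside $\mathrm{U}(2^q)$ and on the unit sphere also matters. It makes the discretized vector a normalized state, which you need in order to apply \cref{lem:number_z} (the monogamy bound) to it. Finally, set $g_c:=\emptyset$ on index triples that do not come from net elements.
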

The proof is exactly the same as in~\cite{escolàfarràs2025quantumpositionverificationshot}.

\begin{lemma}\label{lem:q_bounded_in_rounding}
Let $\varepsilon>0$, $\beta\in(0,1]$, and $\omega^c\geq(\lambda_\gamma+\Delta)^{\abs{c}}(1+3\sqrt{3\ln(2/\varepsilon)}2^{-n+m/2})+7\cdot3\Delta^m$. Fix an $(\omega^c,q)$-set-valued classical rounding $g_c$ of sizes   $k_1,k_2\leq\log_2\left(\frac{1}{\Delta}\right)m2^{2q+1}$, $k_3\leq\log_2\left(\frac{1}{\Delta}\right)m2^{2q+m+1}$. Let  $f\in \mathcal F_\varepsilon$ be such that for any $f_A^c,f_B^c$ and $\lambda^c$ as defined in \cref{def:rounding}, $f(x,y)_C\in g_c(f_A^c(x),f_B^c(y),\lambda^c)$ holds on more than $\beta\cdot 2^{2n}$ pairs $(x,y)$. 
Then  with probability at least $2^{-\frac{\beta}{2}\log(\frac{\lambda_\gamma+\Delta}{\lambda_\gamma})m2^{2n}}$, 
$f$ is such that 
\begin{equation}\label{eq:2^q<beta...}
   \log(\frac{1}{\Delta})m2^{2q+n+2}(1+2^{-n+m-1})+2^{\abs{c}}\log(\frac{1}{1-\varepsilon})\geq\frac{\beta}2{}\log(\frac{\lambda_\gamma+\Delta}{\lambda_\gamma}){\abs{c}}2^{2n},
    \end{equation}
\end{lemma}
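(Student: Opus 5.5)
The plan is to follow the counting argument of~\cite{escolàfarràs2025quantumpositionverificationshot}, with the basis string $z$ replaced by its restriction $z_C$. Fix the rounding $g_c$ and regard $f$ as uniformly random in $\mathcal F_\varepsilon$. We first count how many triples $(f_A^c,f_B^c,\lambda^c)$ there are. There are $2^{k_1 2^n}$ choices of $f_A^c$, $2^{k_2 2^n}$ choices of $f_B^c$, and $2^{k_3}$ choices of $\lambda^c$. The total is therefore at most $2^{(k_1+k_2)2^n+k_3}$. With the stated sizes the exponent is at most $\log(1/\Delta)m2^{2q+n+2}(1+2^{-n+m-1})$, which is exactly the first term on the left of \eqref{eq:2^q<beta...}.

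Next, fix one triple and bound the probability that a random $f$ satisfies $f(x,y)_C\in g_c(f_A^c(x),f_B^c(y),\lambda^c)$ on at least $\beta 2^{2n}$ pairs. For each pair $(x,y)$ the target set has size at most $2^{s|c|}$, where $s=1-\log\frac{\lambda_\gamma+\Delta}{\lambda_\gamma}$. If $f$ were fully uniform, the event would have probability at most $2^{-(1-s)|c|}=(\lambda_\gamma/(\lambda_\gamma+\Delta))^{|c|}$ for each pair, independently across pairs, since $z_C$ is a uniform marginal of a uniform $z$. Conditioning on $f\in\mathcal F_\varepsilon$ raises this probability by at most a factor $1/\Pr[f\in\mathcal F_\varepsilon]$. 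We control that factor through $\Pr[\mathcal F_\varepsilon]\ge 1-\varepsilon$ per basis-value check, applied over the $2^{|c|}$ values of $z_C$. This produces the additive term $2^{|c|}\log\frac1{1-\varepsilon}$ in the exponent. A Chernoff or direct binomial union bound over subsets of $\beta2^{2n}$ pairs then bounds the probability by roughly $2^{-\beta(1-s)|c|2^{2n}}$. Since $\binom{2^{2n}}{\beta 2^{2n}}$ is dominated once the per-pair probability is exponentially small, this gives $2^{-\frac{\beta}{2}\log(\frac{\lambda_\gamma+\Delta}{\lambda_\gamma})|c|2^{2n}}$.

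Finally, take a union bound over all triples. The probability that some triple succeeds is at most
\[
2^{\log(1/\Delta)m2^{2q+n+2}(1+2^{-n+m-1})+2^{|c|}\log\frac{1}{1-\varepsilon}-\beta\log(\frac{\lambda_\gamma+\Delta}{\lambda_\gamma})|c|2^{2n}}.
\]
Suppose \eqref{eq:2^q<beta...} fails. Then the exponent is below $-\frac{\beta}{2}\log(\frac{\lambda_\gamma+\Delta}{\lambda_\gamma})|c|2^{2n}$, which is at most the claimed probability. Taking the contrapositive, with the complementary probability the inequality \eqref{eq:2^q<beta...} must hold for any $f$ admitting such a rounding. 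The statement's ``with probability at least'' should be read in this sense, matching the corresponding lemma in~\cite{escolàfarràs2025quantumpositionverificationshot}, where $m$ appears in place of $|c|$.

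The main obstacle is the binomial/Chernoff step under the conditioning on $\mathcal F_\varepsilon$. Pairs $(x,y)$ are not independent once $f$ is conditioned to have near-uniform $q_f$. I would handle this by bounding the probability for unconditioned uniform $f$ and then dividing by $\Pr[f\in\mathcal F_\varepsilon]$. Losing a factor $1/2$ in the exponent leaves enough room to absorb the $\binom{2^{2n}}{\beta2^{2n}}$ entropy term, as long as $\log\frac{\lambda_\gamma+\Delta}{\lambda_\gamma}|c|$ exceeds a constant.
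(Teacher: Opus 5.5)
\textbf{Gap: the union over the unspecified subset of pairs.} The problem is the per-triple step. You claim that the factor $\binom{2^{2n}}{\beta 2^{2n}}$ coming from the union over subsets is absorbed by ``losing a factor $1/2$ in the exponent.''

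\emph{The absorption condition is not what you state.} Since $\binom{2^{2n}}{\beta 2^{2n}}\ge \beta^{-\beta 2^{2n}}$, absorbing it requires at least $\log\frac1\beta\le\frac12|c|\log\frac{\lambda_\gamma+\Delta}{\lambda_\gamma}$. Equivalently, $\beta\ge\sqrt{p}$, where $p:=(\lambda_\gamma/(\lambda_\gamma+\Delta))^{|c|}$ is the per-pair hit probability. This condition ties $\beta$ to $\Delta$; it is not ``$|c|\log\frac{\lambda_\gamma+\Delta}{\lambda_\gamma}$ exceeds a constant,'' and no hypothesis of the lemma provides it.

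\emph{It fails where the lemma is used.} Later one takes $\Delta=\lambda_\gamma/n$ and $\beta=\beta_0\approx\lambda_\gamma^{|c|}/n$. Then $|c|\log(1+1/n)<2$, while $\log(1/\beta_0)\approx|c|\log(1/\lambda_\gamma)+\log n$. Worse, $p=(1+1/n)^{-|c|}\ge 1/e\gg\beta_0$ there. So for a triple whose sets have the maximal permitted size $2^{s|c|}$, the number of hits of a random $f$ concentrates around $p\,2^{2n}$. The event ``hits on at least $\beta_0 2^{2n}$ pairs, subset unspecified'' therefore has probability close to $1$, and no Chernoff estimate can make it small.

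\emph{The factor $1/2$ is spent twice.} Even where absorption is possible, after using the $1/2$ on the binomial your per-triple bound is $2^{-\frac{\beta}{2}\log(\frac{\lambda_\gamma+\Delta}{\lambda_\gamma})|c|2^{2n}}$. Yet your union-bound display carries the full term $-\beta\log(\frac{\lambda_\gamma+\Delta}{\lambda_\gamma})|c|2^{2n}$ and no entropy term. You then use the ``remaining'' half again to pass from failure of the inequality to the claimed bound. Done consistently, your route only gives a failure probability below $2^{h(\beta)2^{2n}-\frac{\beta}{2}\log(\frac{\lambda_\gamma+\Delta}{\lambda_\gamma})|c|2^{2n}}$. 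That is weaker than claimed, and trivial unless $h(\beta)$ is below the second term.

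\textbf{How the paper proceeds.} The paper never pays for the choice of subset. It takes $\mathcal B$ to be a given set of $\beta 2^{2n}$ pairs and counts directly:
\begin{itemize}
\item at most $(2^{k_1})^{2^n}(2^{k_2})^{2^n}2^{k_3}$ triples;
\item at most $2^{s|c|}$ values of $f(x,y)_C$ per pair in $\mathcal B$, and $2^{|c|}$ per pair outside it;
\item all divided by $|\mathcal F^\varepsilon_C|\ge(1-\varepsilon)^{2^{|c|}}2^{|c|2^{2n}}$.
\end{itemize}
This gives exactly the exponent in your final display, and the factor $1/2$ is used only once, at the end.

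Your triple count, your handling of the conditioning on $\mathcal F_\varepsilon$, and your last paragraph all agree with the paper. To reach your final display, however, you must drop the union over subsets and argue for a prescribed $\mathcal B$, as the paper does. The lemma's conclusion should be read in that sense.

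Your worry about the subset is legitimate: in the theorem, $\mathcal B$ is determined by a strategy chosen after $f$, and the paper's proof does not account for that choice. But, as shown above, it cannot be absorbed with the slack you propose.
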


\begin{proof}
By \cref{lem:existence_rounding}, there exists an $(\omega^c,q)$-set-valued classical rounding $g$ of sizes\\ $k_1,k_2\leq\log_2\left(\frac{1}{\Delta}\right)m2^{2q+1},\log_2\left(\frac{1}{\Delta}\right)m2^{2q+1}$, $k_3\leq\log_2\left(\frac{1}{\Delta}\right)m2^{2q+m+1}$. 
The number of possible functions $g(f_A(x),f_B(y),\lambda)$ that Alice and Bob can implement depends on the number of choices of ${f_A:\{0,1\}^n\rightarrow\{0,1\}^{k_1}}$, $f_B:\{0,1\}^n\rightarrow\{0,1\}^{k_2}$ and $\lambda\in\{0,1\}^{k_3}$. 
In total, there are $(2^{k_1})^{2^n}\cdot(2^{k_2})^{2^n}\cdot(2^{k_3})$ such functions. 
By hypothesis, $f(x,y)\in g(f_A(x),f_B(y),\lambda)$ on at least $\beta\cdot 2^{2n}$ pairs $(x,y)$. 
Denote by $\mathcal{B}$ the set of these $(x,y)$. 
Recalling that the cardinality of the set $g(f_A(x),f_B(y),\lambda)$ is at most $2^{(1-\log\frac{\lambda_\gamma+\Delta}{\lambda_\gamma}){\abs{c}}}$, we have that, given $g$, the total number of ways to assign outputs for these pairs is $(2^{(1-\log\frac{\lambda_\gamma+\Delta}{\lambda_\gamma}){\abs{c}}})^{\beta 2^{2n}}$. For the remaining $(1-\beta)\cdot 2^{2n}$ pairs of $(x,y)$, no compression is applied (i.e., we do not have the guarantee $f(x,y)_{{C}}\in g_{c}(f_A^c(x),f_B^c(y),\lambda^c)$). In these cases, we have that $f_C(x,y)\in\{0,1\}^{\abs{c}}$, for which we have $(2^{\abs{c}})^{(1-\beta)2^{2n}}$ possible ways to assign values. 

On the other hand, the cardinality of $\mathcal{F}_\varepsilon$ restricted on $C$ is at least $(1-\varepsilon)^{2^{\abs{c}}}2^{{\abs{c}}2^{2n}}$.
Then, we have that, using that $f_C$ is drawn uniformly at random, 
\begin{equation}
\begin{split}
 &\text{Pr}\{f\in\mathcal F_\varepsilon: \exists f_A^c,f_B^c,\lambda^c \text{ s.t. } f(x,y)_C\in g_c(f_A^c(x),f_B^c(y),\lambda^c)\text{ } \forall (x,y)\in \mathcal B\}\\
 &=\frac{\abs{f\in\mathcal F_\varepsilon:\exists f_A^c,f_B^c,\lambda^c \text{ s.t. } f(x,y)_C\in g_c(f_A^c(x),f_B^c(y),\lambda^c)\text{ } \forall (x,y)\in \mathcal B}}{\abs{\mathcal{F}^\varepsilon_C}}\\
 &\leq \frac{\left(2^{\log_2\left(\frac{1}{\Delta}\right)m2^{2q+1}}\right)^{2^n}\cdot\left(2^{\log_2\left(\frac{1}{\Delta}\right)m2^{2q+1}}\right)^{2^n}\cdot\left(2^{\log_2\left(\frac{1}{\Delta}\right)m2^{2q+m+1}}\right)\cdot(2^{(1-\log(\frac{\lambda_\gamma+\Delta}{\lambda_\gamma})){\abs{c}}})^{\beta 2^{2n}}\cdot(2^{\abs{c}})^{(1-\beta)2^{2n}}}{(1-\varepsilon)^{2^{\abs{c}}}2^{\abs{c}2^{2n}}}\\
 &=2^{\log(\frac{1}{\Delta})m2^{2q+n+2}(1+2^{-n+m-1})+2^{\abs{c}}\log(\frac{1}{1-\varepsilon})-\beta\log(\frac{\lambda_\gamma+\Delta}{\lambda_\gamma}){\abs{c}}2^{2n}}.
\end{split}
\end{equation}

Notice that the above quantity will be decreasing in $m$ and $n$ if the `dominating' term is the negative one. Then, if 
\begin{equation}\label{eq:condition 1 with high prob}
    \log(\frac{1}{\Delta})m2^{2q+n+2}(1+2^{-n+m-1})+2^{\abs{c}}\log(\frac{1}{1-\varepsilon})<\frac12\bigg(\beta\log(\frac{\lambda_\gamma+\Delta}{\lambda_\gamma}){\abs{c}}2^{2n}\bigg),
    \end{equation}
which is the converse of condition \eqref{eq:2^q<beta...}. In particular, we have that if \eqref{eq:condition 1 with high prob} holds,
\begin{equation}
    2^{\log(\frac{1}{\Delta})m2^{2q+n+2}(1+2^{-n+m-1})+2^{\abs{c}}\log(\frac{1}{1-\varepsilon})-\beta\log(\frac{\lambda_\gamma+\Delta}{\lambda_\gamma}){\abs{c}}2^{2n}}<2^{-\abs{c}\frac{\beta}{2}\log(\frac{\lambda_\gamma+\Delta}{\lambda_\gamma})2^{2n}}.
\end{equation}
\end{proof}

\begin{lemma}\label{lem:at leas b pairs are good}
   Let $\omega_1\in(0,1]$, and $S_{\textsf{c}}=\{\ket{\psi^c},U^{x,c},V^{y,c},\{A^{xyc}_{a_c}\}_{a_C},\{B^{xy}_{b_C}\}_{b_C}\}_{x,y}$ be a strategy such that $\omega_S\geq \omega_1$. Then, for $\omega_0<\omega_1$, there exist at least $\frac{\omega_1-\omega_0}{1-\omega_0}2^{2n}$ of pairs $(x,y)$ such that
    \begin{equation}
    \tr{\Pi^{xyc}_{AB}\ketbra{\psi_{xy}^c}{\psi_{xy}^c}}\geq \omega_0,
    \end{equation}
this is, $S$ is an $(\omega_0,q,\frac{\omega_1-\omega_0}{1-\omega_0}2^{2n})$-strategy. 
\end{lemma}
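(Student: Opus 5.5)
This is a reverse-Markov (averaging) argument on the per-input success probabilities. Write
\[
p_{xy}:=\tr{\Pi^{xyc}_{AB}\ketbra{\psi_{xy}^c}{\psi_{xy}^c}}\in[0,1],
\]
so that, by definition of the strategy's success probability, $\omega_S=\frac{1}{2^{2n}}\sum_{x,y}p_{xy}\geq\omega_1$. Let $\mathcal B:=\{(x,y)\mid p_{xy}\geq\omega_0\}$ and $\beta:=\abs{\mathcal B}/2^{2n}$. The goal is to show $\beta\geq\frac{\omega_1-\omega_0}{1-\omega_0}$. Together with \cref{def:q-beta-strategy}, this directly gives that $S$ is an $(\omega_0,q,\frac{\omega_1-\omega_0}{1-\omega_0}2^{2n})$-strategy.

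The first step is to split the average over $\mathcal B$ and its complement. On $\mathcal B$ we use the trivial bound $p_{xy}\leq 1$, since $\Pi^{xyc}_{AB}$ is a sum of products of POVM elements and therefore $0\leq\Pi\leq\mathbb I$. On $\overline{\mathcal B}$ we use $p_{xy}<\omega_0$. This gives
\[
\omega_1\leq\omega_S\leq \beta\cdot 1+(1-\beta)\,\omega_0 = \omega_0+\beta(1-\omega_0).
\]

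The second step is to rearrange. Since $\omega_0<\omega_1\leq 1$, the factor $1-\omega_0$ is positive, so we can divide by it and obtain $\beta\geq\frac{\omega_1-\omega_0}{1-\omega_0}$, which is the claim.

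There is no real obstacle here. The only points to check are that each $p_{xy}$ is a genuine probability in $[0,1]$, which follows from $0\leq\Pi^{xyc}_{AB}\leq\mathbb I$ and $\ket{\psi^c_{xy}}$ being normalized, and that $\omega_S$ is exactly the uniform average of the $p_{xy}$ as in \eqref{eq:conditioned on c}. The qubit bound $q$ plays no role and is simply carried over from $S$.
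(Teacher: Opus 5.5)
Your proposal is correct and is essentially the argument the paper relies on: the paper gives no proof of its own and defers to the cited work, which uses the same averaging bound $\omega_1\leq\omega_S\leq\beta+(1-\beta)\,\omega_0$ and then rearranges for $\beta$. Your checks that $0\leq\Pi^{xyc}_{AB}\leq\mathbb I$ and that $1-\omega_0>0$ are exactly the points that step needs.
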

The proof is identical to that of~\cite{escolàfarràs2025quantumpositionverificationshot}.
\begin{theorem}
Let $n>m$ and {$\varepsilon\leq 2^{-m-1}$}, and let
the number of qubits $q$ that the attackers pre-share is such that
   \begin{equation*}
  2q<n-2\log n-\abs{c}\log\frac{1}{\lambda_\gamma}-\log\left(\frac{8m\log({n}/{\lambda_\gamma)}}{\abs{c}(1-\lambda_\gamma-1/n)}\right)
    \end{equation*}
Then, with probability at least $1-2^{-\abs{c}2^{2n}({\lambda_\gamma^{\abs{c}}}/{(2n^2)})}$,
a uniformly random $f_C\in \mathcal F_\varepsilon$ will be such that
     \begin{equation*}
       \Pr[\textnormal{accept \fCQPVBBfparallel}]\leq\left(\lambda_\gamma\left(1+\frac1n\right)\right)^{(1-\frac1n)\abs{c}}
\left(1+6\sqrt{\ln(2/\varepsilon)}\,2^{-n+m/2}\right)
+21\left(\frac{\lambda_\gamma}{n}\right)^m .
    \end{equation*}
    
\end{theorem}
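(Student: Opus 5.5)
The plan is to argue by contradiction, following the template of the analogous result for \QPVBBfparallel{} in~\cite{escolàfarràs2025quantumpositionverificationshot} with $m$ replaced by $\abs{c}$ wherever the monogamy-of-entanglement bound enters. Suppose some $q$-qubit strategy $S_c$ for \fCQPVBBfparallel{} has acceptance probability $\omega_1$ exceeding the stated right-hand side. Set $\Delta:=\lambda_\gamma/n$, so that $\lambda_\gamma+\Delta=\lambda_\gamma(1+1/n)$ and $3\Delta^m\cdot 7=21(\lambda_\gamma/n)^m$. Define the threshold $\omega_0:=(\lambda_\gamma+\Delta)^{\abs{c}}(1+3\sqrt{3\ln(2/\varepsilon)}2^{-n+m/2})+21\Delta^m$, which is precisely the quantity required in \cref{lem:existence_rounding} and \cref{lem:q_bounded_in_rounding}.

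The first step is to apply \cref{lem:at leas b pairs are good}. This shows that $S_c$ is an $(\omega_0,q,\beta 2^{2n})$-strategy with $\beta=\frac{\omega_1-\omega_0}{1-\omega_0}\geq \omega_1-\omega_0$. We want $\beta$ of order $(\lambda_\gamma(1+1/n))^{\abs{c}}\cdot\mathrm{poly}(1/n)$. To get this, we compare $\omega_1\geq(\lambda_\gamma(1+1/n))^{(1-1/n)\abs{c}}(1+6\sqrt{\cdots})+21(\lambda_\gamma/n)^m$ with $\omega_0$. Since $(\lambda_\gamma(1+1/n))^{(1-1/n)\abs{c}}\geq (\lambda_\gamma+\Delta)^{\abs{c}}\cdot(\lambda_\gamma(1+1/n))^{-\abs{c}/n}$ and the $6$ versus $3$ factor leaves room, the gap $\omega_1-\omega_0$ is at least roughly $(\lambda_\gamma+\Delta)^{\abs{c}}\bigl((1/\lambda_\gamma)^{\abs{c}/n}-1\bigr)\gtrsim \lambda_\gamma^{\abs{c}}\,\abs{c}\log(1/\lambda_\gamma)/n$.

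The second step invokes \cref{lem:existence_rounding} to obtain a set-valued classical rounding $g_c$ of the stated sizes. This rounding is compatible with $f_C$ on at least $\beta 2^{2n}$ pairs. We then feed it into \cref{lem:q_bounded_in_rounding}. Outside an event of probability at most $2^{-\frac{\beta}{2}\log(1+1/n)\abs{c}2^{2n}}$ over the uniformly random $f_C$, inequality \eqref{eq:2^q<beta...} must hold. Using $\log(1+1/n)\geq 1/(n\ln 2)\cdot(1-1/n)$-type estimates together with the lower bound on $\beta$, the exponent becomes at least $\abs{c}2^{2n}\lambda_\gamma^{\abs{c}}/(2n^2)$, which matches the claimed failure probability. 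The third step is to rewrite \eqref{eq:2^q<beta...} as a lower bound on $2q$:
\[
2q\geq n-2\log n-\abs{c}\log\tfrac{1}{\lambda_\gamma}-\log\frac{8m\log(n/\lambda_\gamma)}{\abs{c}(1-\lambda_\gamma-1/n)}.
\]
Here the term $2^{\abs{c}}\log\frac{1}{1-\varepsilon}$ is negligible because $\varepsilon\leq 2^{-m-1}$, and $(1+2^{-n+m-1})\leq 2$ since $n>m$. This contradicts the hypothesis on $q$, so no such strategy exists.

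The main obstacle will be the bookkeeping in the first and third steps: extracting a clean lower bound on $\beta$, where the factor $(1-\lambda_\gamma-1/n)$ presumably arises from bounding $1-\omega_0$ and the gap, and then converting $\log\frac{1}{\Delta}=\log(n/\lambda_\gamma)$ and $\log(1+1/n)$ into exactly the constants $8$ and $n^2$ of the statement. I would first handle this crudely, using $\frac{\omega_1-\omega_0}{1-\omega_0}\geq\frac{\omega_1-\omega_0}{1}$, and only tighten it if the constants fail to match. Everything else transfers verbatim from the $c=1\ldots1$ case, with $m$ replaced by $\abs{c}$ in the monogamy bound of \cref{lem:bound w_psi for f in F} and in the counting of \cref{lem:number_z}.
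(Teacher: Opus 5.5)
Your proposal is correct and follows the paper's proof step for step. You argue by contradiction with $\Delta=\lambda_\gamma/n$ and threshold $\omega_0$ (your exponent $(1-1/n)\abs{c}$ is the paper's choice $\alpha=1-1/n$), then apply the good-pairs lemma, the existence of a set-valued rounding and the counting lemma, and finally rearrange into a lower bound on $2q$ that contradicts the hypothesis.

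On the bookkeeping you flagged: dropping the denominator $1-\omega_0$, as the paper also does, recovers the stated failure exponent $\abs{c}2^{2n}\lambda_\gamma^{\abs{c}}/(2n^2)$ only when $\abs{c}\log\frac{1}{\lambda_\gamma(1+1/n)}\gtrsim 1$. Keeping it instead, with $1-\omega_0\le 1-\mu^{\abs{c}}$ for $\mu=\lambda_\gamma(1+1/n)$ and $1-u^{1/n}\ge (1-u)/n$, gives the strategy-independent bound $\beta\ge \mu^{(1-1/n)\abs{c}}/n\ge\lambda_\gamma^{\abs{c}}/n$. That yields exactly the stated exponent and leaves ample room for the constants in the condition on $q$.
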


\begin{proof}
    Let $\Delta>0$ and $\alpha<1$, and let $S_{\textsf{c}}$ be a $q$-qubit strategy such that 
    \begin{equation}\label{eq:w_S>= in proof}
        \omega_{S_{\textsf{c}}}^c\geq  (\lambda_\gamma+\Delta)^{\alpha m}(1+3\sqrt{3\ln(2/\varepsilon)}2^{-n+m/2})+7\cdot 3\Delta^m,
    \end{equation}
    and let $\omega_0^c=(\lambda_\gamma+\Delta)^{\abs{c}}(1+3\sqrt{3\ln(2/\varepsilon)}2^{-n+m/2})+7\cdot 3\Delta^m$, then, by \cref{lem:at leas b pairs are good}, 
    $S_{\textsf{c}}$ is an $(\omega_0^c,q,\beta\cdot 2^{2n})$-strategy, with 
    \begin{equation}
       \beta=\frac{(\lambda_\gamma+\Delta)^{\alpha\abs{c}}(1+3\sqrt{3\ln(2/\varepsilon)}2^{-n+m/2})(1-(\lambda_\gamma+\Delta)^{(1-\alpha)m})}{1-(\lambda_\gamma+\Delta)^{\abs{c}}(1+3\sqrt{3\ln(2/\varepsilon)}2^{-n+m/2})}.
    \end{equation}
Since $3\sqrt{3\ln(2/\varepsilon)}2^{-n+m/2}\geq0$, we have that 
\begin{equation}
    \beta\geq\frac{(\lambda_\gamma+\Delta)^{\alpha\abs{c}}(1-(\lambda_\gamma+\Delta)^{(1-\alpha)\abs{c}})}{1-(\lambda_\gamma+\Delta)^{\abs{c}}(1+3\sqrt{3\ln(2/\varepsilon)}2^{-n+m/2})}\geq \frac{(\lambda_\gamma+\Delta)^{\alpha\abs{c}}(1-(\lambda_\gamma+\Delta)^{1-\alpha})}{1-(\lambda_\gamma+\Delta)^{\abs{c}}(1+3\sqrt{3\ln(2/\varepsilon)}2^{-n+m/2})},
\end{equation}
where we used that $1-(\lambda_\gamma+\Delta)^{\alpha\abs{c}}\geq1-(\lambda_\gamma+\Delta)^{1-\alpha}$. Then, using the inequality $\frac{1}{1-x}\geq 1$ for $x\in(0,1)$, 
\begin{equation}
    \beta\geq (\lambda_\gamma+\Delta)^{\alpha\abs{c}}(1-(\lambda_\gamma+\Delta)^{1-\alpha})=:\beta_0,
\end{equation}
then, in particular, $S_{\textsf{c}}$ is an $(\omega_0^c,q,\beta_0\cdot 2^{2n})$-strategy. Then, by \cref{lem:existence_rounding}, there exist an $(\omega_0^c,q)$-set-valued classical rounding of sizes  $k_1,k_2\leq\log_2\left(\frac{1}{\Delta}\right)m2^{2q+1}$, $k_3\leq\log_2\left(\frac{1}{\Delta}\right)m2^{2q+m+1}$.

Let $f_C\in\mathcal{F}^\varepsilon_C$ be such that  $f_C(x,y)\in g_c(f^c_A(x),f^c_B(y),\lambda^c)$ holds on more than $\beta_0\cdot 2^{2n}$ pairs $(x,y)$ for any $f^c_A,f^c_B$ and $\lambda^c$, by the counterstatement of \cref{lem:q_bounded_in_rounding}, a uniformly random  {$f_C\in\mathcal{F}^\varepsilon_C$}, with probability at least $1-2^{-\abs{c}\frac{\beta_0}{2}\log(\frac{\lambda_\gamma+\Delta}{\lambda_\gamma})2^{2n}}$,  will be such that 
\begin{equation}
   \log(\frac{1}{\Delta})m2^{2q+n+2}(1+2^{-n+m-1})+2^{\abs{c}}\log(\frac{1}{1-\varepsilon})\geq\frac{\beta_0}2{}\log(\frac{\lambda_\gamma+\Delta}{\lambda_\gamma}){\abs{c}}2^{2n},
    \end{equation}

  Since $n>m$, we have that $1\geq 2^{-n+m-1}$, and  using that $-\log(1-x)\geq 2x$ for $x\leq \frac{1}{2}$, we have that $2^{\abs{c}}\log(1-\varepsilon)\leq 2^{\abs{c}}2\varepsilon$,  then,   using $\varepsilon\leq \varepsilon_1=2^{-\abs{c}-m-1}$
\begin{equation}
   \log(\frac{1}{\Delta})m2^{2q+n+2}(1+2^{-n+m-1})+ 2^{-m} \geq\frac{\beta}2{}\log(\frac{\lambda_\gamma+\Delta}{\lambda_\gamma}){\abs{c}}2^{2n},
    \end{equation}
and thus, 
    \begin{equation}\label{eq:conditionq}
      2q\geq n-\log\frac{1}{\beta_0}-3-\log({m}/{\abs{c}})-\log\log\frac{1}{\Delta}+\log\log\left(\frac{\lambda_\gamma+\Delta}{\lambda_\gamma}\right),
    \end{equation}
Pick $\Delta=\lambda_\gamma/n$ and $\alpha=1-1/n$. Then, 
\begin{equation}
    \beta_0=(\lambda_\gamma(1+1/n))^{(1-1/n)\abs{c}}(1-(\lambda_\gamma(1+1/n))^{1/n})\geq\lambda_\gamma^{\abs{c}}\frac{1}{n}(1-\lambda_\gamma-1/n),
\end{equation}
where we used Bernoulli's inequality for $\lambda_\gamma^{1/n}=(1-(1-\lambda_\gamma))^{1/n}\leq1-(1-\lambda_\gamma)/n$ and $(1+1/n)^{1/n}\leq 1+1/n^2$; and thus 
\begin{equation}
    1-\lambda_\gamma^{1/n}\left(1+\frac1n\right)^{1/n}\leq1-\frac{1-\lambda_\gamma}{n}+\frac{1}{n^2}-\frac{1-\lambda_\gamma}{n^3}\leq 1-\frac{1-\lambda_\gamma}{n}+\frac{1}{n^2}.
\end{equation}
Then, \eqref{eq:conditionq} implies
\begin{equation}\label{eq:q> in proof}
    2q\geq n-2\log n-\abs{c}\log\frac{1}{\lambda_\gamma}-\log\left(\frac{8m\log({n}/{\lambda_\gamma)}}{\abs{c}(1-\lambda_\gamma-1/n)}\right),
\end{equation}
were we used $\log\log(1+1/n)\geq\log(1/n)$. 

We have seen that, with probability at least $1-2^{-\abs{c}\frac{\beta_0}{2}\log(\frac{\lambda_\gamma+\Delta}{\lambda_\gamma})2^{2n}}\geq1-2^{-\abs{c}2^{2n}({\lambda_\gamma^{\abs{c}}}/{(2n^2)})}$, a uniformly random $f_C\in\mathcal{F}_\varepsilon$ with \eqref{eq:w_S>= in proof} implies \eqref{eq:q> in proof}. However, by hypothesis, we have strict inequality in the other direction in \eqref{eq:q> in proof}, and therefore, this implies \eqref{eq:soundess_theorem}. 
\end{proof}

\section{Details for Sequential Repetition}\label{sec:appendix_seq_rep}

We will reuse the two main ingredients proved in \cite{allerstorfer2023makingexistingquantumposition} used for obtaining \cref{lem:eps ctilde} to prove our version \cref{lemma:per_round_eps}. To this end, following the analysis in~\cite{allerstorfer2023makingexistingquantumposition}, 
denote the POVMs corresponding to the instruments $\{\mathcal{I}^{x_i}_{c_A^i}\}_{c_A^i}$ and $\{\mathcal{I}^{y_i}_{c_B^i}\}_{c_B^i}$ of Alice and Bob by $\left\{M^{x_i}_{A}, \mathbb{I}-M^{x_i}_{A}\right\}$ and $\left\{M^{y_i}_{B},\mathbb{I}-M^{y_i}_{B} \right\}$ respectively. Here the POVM elements $M^{x_i}_{A}$ and $M^{y_i}_{B}$ correspond to the measurement outcome `commit' ($c_A^i=1$ and $c_B^i=1$, respectively). We denote the post measurement state corresponding to Alice and Bob committing to a particular input $x_i,y_i$ by:
    \begin{equation}\label{eq rhoxy}            
    \rho^{x_iy_i}:=\frac{\left( \sqrt{M^{x_i}_A} \otimes \sqrt{M^{y_i}_B} \right) \, \rho \, \left( \sqrt{M^{x_i}_A} \otimes \sqrt{M^{y_i}_B} \right)}{\tr{\left( M^{x_i}_A \otimes M^{y_i}_B \right) \rho}}.
    \end{equation}
This will be relevant because using \cref{lemma: instrument and channel}, the post-selected state used by the attackers can be rewritten as
\begin{align}
    \Tilde{\mathcal{I}}_1^{xy}(\rho) = \frac{\mathcal{I}_1^{xy}(\rho)}{\tr{\mathcal{I}_1^{xy}(\rho)}} = \frac{\mathcal{E}_1^{xy} \left( \left(\sqrt{M^x_A} \otimes \sqrt{M^y_B} \right) \rho \left(\sqrt{M^y_B} \otimes \sqrt{M^x_A}\right) \right)}{\tr{
    \left(M^x_A \otimes M^y_B\right) \rho}} =\mathcal{E}_1^{xy}(\rho^{xy}),
\end{align}
where $\mathcal{E}^{xy}$ has a tensor product form $\mathcal{E}_A^x\otimes\mathcal E^y_B$, for local channels by Alice and Bob, respectively.  Using the Gentle Measurement Lemma~\cite{winter1999coding}, in \cite{allerstorfer2023makingexistingquantumposition} the following lemma was proven.

\begin{lemma}\label{lemma: paths between strings}
Let $t\geq 0$. If
\begin{equation}
    \Pr[c_B^i=0\mid c_A^i=1,x_i,y_i]\leq t \text{ and } \Pr[c_A^i=0\mid c_B^i=1,x_i,y_i]\leq t,
\end{equation}
and similarly for $x_i,y_i'$ and $x_i',y_i'$ in $\{0,1\}^{2n}$ then,
\begin{align}
    \| \rho^{x_iy_i}-\rho^{x'_iy'_i}\|_1 \leq 8 \sqrt{t}.
\end{align}
\end{lemma}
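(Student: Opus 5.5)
The plan is to pass from the two one-sided conditional bounds to a bound on the distance between the normalized post-measurement states $\rho^{x_iy_i}$ and $\rho^{x'_iy'_i}$ of \eqref{eq rhoxy}, using the Gentle Measurement Lemma together with a chain through intermediate states. The chain changes one input at a time: first $x_i$ (with $y_i$ fixed), then $y_i$ (with $x'_i$ fixed).

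Consider the commitment POVMs $\{M_A^{x}, \mathbb I - M_A^x\}$ and $\{M_B^y, \mathbb I - M_B^y\}$. Define the one-sided post-selected state
\begin{equation*}
\sigma^{y}:=\frac{\sqrt{M_B^{y}}\,\rho\,\sqrt{M_B^{y}}}{\tr{M_B^{y}\rho}},
\end{equation*}
and similarly $\tau^{x}$ for Alice. Note that $\sigma^y$ does not depend on $x$. The hypothesis $\Pr[c_A^i=0\mid c_B^i=1,x_i,y_i]\le t$ says exactly that
\begin{equation*}
\tr{(M_A^{x_i}\otimes \mathbb I)\,\sigma^{y_i}}\ge 1-t .
\end{equation*}
Here we use that the two measurements act on different tensor factors and commute, so the conditional probability is read off from $\sigma^{y_i}$.

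The key step is to apply the Gentle Measurement Lemma to $\sigma^{y_i}$ with operator $M_A^{x_i}\otimes\mathbb I$. This gives
\begin{equation*}
\Big\|\sigma^{y_i}-\frac{(\sqrt{M_A^{x_i}}\otimes\mathbb I)\sigma^{y_i}(\sqrt{M_A^{x_i}}\otimes\mathbb I)}{\tr{(M_A^{x_i}\otimes\mathbb I)\sigma^{y_i}}}\Big\|_1\le 2\sqrt{t}.
\end{equation*}
The normalized state appearing on the left is precisely $\rho^{x_iy_i}$, because the local square roots commute. Hence $\|\rho^{x_iy_i}-\sigma^{y_i}\|_1\le 2\sqrt t$. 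The same argument applied to the pair $(x'_i,y_i)$ gives $\|\rho^{x'_iy_i}-\sigma^{y_i}\|_1\le 2\sqrt t$, so by the triangle inequality
\begin{equation*}
\|\rho^{x_iy_i}-\rho^{x'_iy_i}\|_1\le 4\sqrt t .
\end{equation*}
Symmetrically, conditioning on Alice's outcome and using the hypothesis $\Pr[c_B^i=0\mid c_A^i=1,\cdot]\le t$ for the pairs $(x'_i,y_i)$ and $(x'_i,y'_i)$, Bob's gentle measurement on $\tau^{x'_i}$ gives
\begin{equation*}
\|\rho^{x'_iy_i}-\rho^{x'_iy'_i}\|_1\le 4\sqrt t .
\end{equation*}
Summing the two bounds yields the claimed $8\sqrt t$.

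The main obstacle is the choice of intermediate pair. The statement lists the hypotheses for $(x_i,y_i)$, $(x_i,y_i')$ and $(x_i',y_i')$, whereas the chain above uses $(x_i',y_i)$. So I would route the chain through $(x_i,y'_i)$ instead: vary $y$ first with Alice's reference state $\tau^{x_i}$, then vary $x$ with Bob's reference state $\sigma^{y'_i}$. This uses only the listed pairs, with one of the two conditional hypotheses needed for each comparison. Finally, I would check that the gentle-measurement constant (here $2\sqrt t$ for a normalized state) matches the version cited from~\cite{winter1999coding}, so that the final constant is $8$.
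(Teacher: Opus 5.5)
Your proposal is correct and is essentially the argument the paper imports from \cite{allerstorfer2023makingexistingquantumposition}: you compare each $\rho^{x_iy_i}$ to a one-sided post-selected reference state using the normalized Gentle Measurement Lemma, then walk the path $(x_i,y_i)\to(x_i,y'_i)\to(x'_i,y'_i)$, which is exactly why the hypotheses are stated for those three pairs, picking up $4\sqrt{t}$ per step. On the constant you flagged: the normalized bound $2\sqrt{t}$ does hold (purify, use $\sqrt{\Lambda}\geq\Lambda$ to get fidelity at least $1-t$, then contract under the partial trace), whereas Winter's original $\sqrt{8t}$ would only give $8\sqrt{2t}$.
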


Then, they showed the following lemma:
\begin{lemma} \label{lemma:existence_reference_state_close_to_other_states} If $\abs{\Sigma^c_{\varepsilon_i}}\leq \Tilde{c}_i 2^{2n}$, then there is a pair $(x^*_i,y^*_i)$ such that there exist at least  $ (1 - 2\Tilde{c}_i) 2^{2n}$ pairs  $(x'_i,y'_i)\in\Sigma_{t}$ fulfilling
    \begin{align}
    \| \rho^{x^*_iy^*_i}-\rho^{x'_iy'_i}\|_1 \leq 8 \sqrt{t}.
\end{align}
\end{lemma}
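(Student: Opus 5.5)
The plan is a pigeonhole argument over a fixed reference pair, using the triangle inequality together with \cref{lemma: paths between strings}. Set $t=\varepsilon_i$. By hypothesis the good set $\Sigma_{\varepsilon_i}$ has size at least $(1-\Tilde{c}_i)2^{2n}$.

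The first goal is to find a reference pair $(x^*_i,y^*_i)\in\Sigma_{\varepsilon_i}$ with many ``companions''. For $(x,y),(x',y')\in\Sigma_{\varepsilon_i}$, \cref{lemma: paths between strings} applies whenever the mixed pair $(x,y')$ also lies in $\Sigma_{\varepsilon_i}$, since then all three pairs $(x,y)$, $(x,y')$, $(x',y')$ satisfy the two conditional-commitment bounds. It then gives $\|\rho^{xy}-\rho^{x'y'}\|_1\le 8\sqrt{\varepsilon_i}$. So it suffices to exhibit $(x^*,y^*)\in\Sigma_{\varepsilon_i}$ such that at least $(1-2\Tilde{c}_i)2^{2n}$ pairs $(x',y')\in\Sigma_{\varepsilon_i}$ also have $(x^*,y')\in\Sigma_{\varepsilon_i}$.

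To get this, average over $x^*$. For each $x$, let $d_x$ be the number of $y$ with $(x,y)\notin\Sigma_{\varepsilon_i}$, so that $\sum_x d_x=|\Sigma^c_{\varepsilon_i}|\le\Tilde{c}_i2^{2n}$. Fix $x^*$. A pair $(x',y')\in\Sigma_{\varepsilon_i}$ fails the requirement only if $y'$ is one of the $d_{x^*}$ bad columns of row $x^*$. Such pairs number at most $d_{x^*}2^n$, and at most $|\Sigma^c_{\varepsilon_i}|$ pairs lie outside $\Sigma_{\varepsilon_i}$ altogether. Hence the number of good companions is at least $2^{2n}-\Tilde{c}_i2^{2n}-d_{x^*}2^n$.

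Now choose $x^*$ minimizing $d_x$. Since the average of $d_x$ over $x$ is at most $\Tilde{c}_i2^n$, this gives $d_{x^*}\le \Tilde{c}_i2^n$, and the count becomes at least $(1-2\Tilde{c}_i)2^{2n}$. Then pick any $y^*$ with $(x^*,y^*)\in\Sigma_{\varepsilon_i}$; this exists provided $d_{x^*}<2^n$, i.e.\ $\Tilde{c}_i<1$. If $\Tilde{c}_i\ge 1/2$ the claim is vacuous anyway. One detail to handle here: if \cref{lemma: paths between strings} as stated needs the reverse mixed pair $(x',y^*)$ instead, the same argument works with the roles of rows and columns swapped, minimizing over $y^*$.

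The main obstacle is the matching between the hypotheses of \cref{lemma: paths between strings} (which pairs must lie in $\Sigma_t$) and the counting. One must make sure the ``path'' $(x^*,y^*)\to(x^*,y')\to(x',y')$ uses only good pairs. That is exactly what the row-minimization guarantees, and it is why the loss is $2\Tilde{c}_i$ rather than $\Tilde{c}_i$.
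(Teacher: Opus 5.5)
Your proof is correct. Take the row $x^*$ with the fewest pairs outside $\Sigma_{\varepsilon_i}$ (at most $\Tilde{c}_i2^n$ of them, by averaging), fix any good $y^*$ in that row, and route each target through $(x^*,y^*)\to(x^*,y')\to(x',y')$. This path matches the hypotheses of \cref{lemma: paths between strings} exactly, so the row/column swap you mention is not needed, and the union bound over bad pairs and bad columns of row $x^*$ is precisely where the factor $2\Tilde{c}_i$ comes from.

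The paper does not prove this lemma itself; it imports it from \cite{allerstorfer2023makingexistingquantumposition}, and your averaging-plus-two-step-path argument is the natural one that gives this constant. Your argument never uses $t=\varepsilon_i$ specifically. It therefore also covers the paper's later application with a general threshold $t$ and $\Sigma_t$ in place of $\Sigma_{\varepsilon_i}$.
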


The first observation is that we can make the bound in \eqref{eq:eps c bound seq} depending on a single parameter, which will facilitate to control the overhead. 
\begin{lemma}\label{lemma:per x y t bound} If in a round $i$ the probability that either of the parties does not commit given that the other does is upper bounded by $\varepsilon_i^2$:
\begin{equation}
    \Pr[c_A^i=0\mid c_B^i=1]\leq \varepsilon_i^2 \text{ and } \Pr[c_B^i=0\mid c_A^i=1]\leq \varepsilon_i^2,
\end{equation}
then, for every $t>0$, there is a set $\Sigma_t$ of cardinality at least $(1-\frac{\varepsilon_i}{t})\cdot2^{2n}$ such that for every $(x_i,y_i)\in\Sigma_t$, 
\begin{equation}
    \Pr[c_A^i=0\mid c_B^i=1,x,y]\leq t \text{ and } \Pr[c_B^i=0\mid c_A^i=1,x,y]\leq t.
\end{equation}
\end{lemma}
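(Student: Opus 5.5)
The plan is to derive the per-input guarantee from the global one by a Markov-type averaging argument. No quantum structure is needed: the conditional probabilities are averages over $(x_i,y_i)$ of the input-dependent conditional probabilities, reweighted by the commitment probabilities.

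\textbf{Step 1: write the conditionals as weighted averages.} Fix the round $i$ and drop the index. Let $p_A(x,y):=\Pr[c_A=1\mid x,y]$ and $p_{AB}(x,y):=\Pr[c_A=c_B=1\mid x,y]$. Then
\begin{equation*}
\Pr[c_B=0\mid c_A=1]=\frac{\sum_{x,y}\bigl(p_A(x,y)-p_{AB}(x,y)\bigr)}{\sum_{x,y}p_A(x,y)}=\sum_{x,y}\mu_A(x,y)\,\Pr[c_B=0\mid c_A=1,x,y],
\end{equation*}
with weights $\mu_A(x,y)=p_A(x,y)/\sum_{x',y'}p_A(x',y')$, which form a probability distribution on $\{0,1\}^{2n}$. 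The same holds with the roles of Alice and Bob swapped, giving weights $\mu_B$.

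\textbf{Step 2: apply Markov's inequality.} By hypothesis each weighted average is at most $\varepsilon_i^2$. Markov's inequality then gives
\begin{equation*}
\mu_A\bigl(\{(x,y):\Pr[c_B=0\mid c_A=1,x,y]>t\}\bigr)\leq \varepsilon_i^2/t.
\end{equation*}
The analogous bound holds for $\mu_B$.

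\textbf{Step 3: pass from the weighted measure to the uniform count.} This is the main obstacle, since the lemma asks for a bound on the uniform cardinality of $\Sigma_t$, not on a $\mu_A$-measure. No-signalling is the key tool. Since $p_A(x,y)=p_A(x)$ depends only on $x$, we get $\mu_A(x,y)=2^{-n}\nu_A(x)$, so $\mu_A$ is uniform in $y$. Symmetrically, $\mu_B$ is uniform in $x$. Even so, $\nu_A$ could concentrate on few $x$'s, so the conversion is not automatic.

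I would first try to absorb the non-uniformity through the slack between $\varepsilon_i^2$ and $\varepsilon_i$. That is, bound the uniform measure of the bad set by $\varepsilon_i^2/t$ times a density ratio, and split off the inputs $x$ with $\nu_A(x)<\varepsilon_i 2^{-n}$, whose total uniform mass is handled separately. In other words, use a square-root split: one factor $\varepsilon_i$ is absorbed by the Markov step and the other by the low-commitment inputs. Inputs on which a party almost never commits contribute negligibly to $\Pr_{\mathrm{com}}$ anyway. If needed, I would assign those $(x,y)$ to $\Sigma_t$ by adopting the convention that the conditional probability is taken to be $0$ when conditioning on a probability-zero event. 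This is the step where the statement's bound $(1-\varepsilon_i/t)$ rather than $(1-\varepsilon_i^2/t)$ should emerge.

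\textbf{Step 4: intersect the two good sets.} Let $\Sigma_t$ be the intersection of the two good sets from Steps 2 and 3 and apply a union bound. This costs at most a factor of two, which I expect to be absorbed by the same $\varepsilon_i$ versus $\varepsilon_i^2$ slack, yielding $|\Sigma_t|\geq(1-\varepsilon_i/t)2^{2n}$.
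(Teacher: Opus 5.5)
\textbf{Verdict: Step 3 has a genuine gap, and Step 4 does not close as planned.}

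The paper gives no proof of this lemma. Its later use, $|G_t^c|\le 2\varepsilon_i^2 2^{2n}/t$, shows the intended content is $|\Sigma_t^c|\le \varepsilon_i^2 2^{2n}/t$, so the $\varepsilon_i/t$ in the statement is likely a typo. That intended bound would presumably come from a direct Markov bound that treats $\Pr[c_B^i=0\mid c_A^i=1]$ as a uniform average over inputs.

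Your Steps 1--2 correctly spot that this conditional is really a $\mu_A$-weighted average, and that no-signalling makes $\mu_A$ uniform in $y$. Step 3, however, is the only nontrivial step, and nothing you invoke there bounds the uniform mass of the low-commitment inputs $E_A=\{x:p_A(x)<\theta\,\bar p_A\}$, where $\bar p_A:=2^{-n}\sum_x p_A(x)$.

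Alice's hypothesis alone cannot bound this mass. Suppose Alice commits with certainty on a small set of $x$ and with tiny probability elsewhere, while Bob always commits. Then $\Pr[c_B=0\mid c_A=1,x,y]=0$ for all inputs, yet almost every $x$ lies in $E_A$. Such pairs do not belong to $\Sigma_t$: the lemma requires \emph{both} inequalities, and $\Pr[c_A=0\mid c_B=1,x,y]=1-p_{AB}(x,y)/p_B(y)\ge 1-p_A(x)/p_B(y)$ is close to $1$ there. Bob's global hypothesis fails in this example, which is exactly the point.

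Your proposed patches do not help:
\begin{itemize}
\item The probability-zero convention covers only the case $p_A(x)=0$ exactly. Even there, the Bob-side condition fails whenever $p_B(y)>0$.
\item ``Negligible contribution to $\Pr_{\mathrm{com}}$'' is irrelevant to a cardinality statement. It is also not true for the paper's $\Pr_{\mathrm{com}}$, which averages the \emph{normalized} post-selected states uniformly over $(x_i,y_i)$.
\end{itemize}

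The missing idea is to bound $\Pr[E_A]$ through \emph{Bob's} hypothesis. Under uniform inputs, $a=p_A(x)$ and $b=p_B(y)$ are independent. Since $p_{AB}\le\min(a,b)$, we have $b-p_{AB}\ge(b-a)_+\ge(b-a)\mathbf 1_{E_A}$. Using independence and $\mathbb E[a\mathbf 1_{E_A}]\le\theta\bar p_A\Pr[E_A]$,
\[
\varepsilon_i^2\,\bar p_B\;\ge\;\mathbb E\bigl[b-p_{AB}\bigr]\;\ge\;\Pr[E_A]\,\bigl(\bar p_B-\theta\,\bar p_A\bigr).
\]
The two hypotheses together give $(1-\varepsilon_i^2)\bar p_A\le\mathbb E[p_{AB}]\le\bar p_B$. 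Hence $\Pr[E_A]\le\varepsilon_i^2/\bigl(1-\theta/(1-\varepsilon_i^2)\bigr)$, and symmetrically for $E_B$. Your weighted Markov step handles the complement of $E_A$, contributing at most $\varepsilon_i^2/(\theta t)$ per party.

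Your Step 4 accounting also needs adjusting. With $\theta=\varepsilon_i$, the weighted Markov step already costs $\varepsilon_i/t$ per party. The union bound then only yields $2\varepsilon_i/t+O(\varepsilon_i^2)$, and no slack remains to absorb the factor $2$.

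Use a constant threshold instead, say $\theta=1/2$, with $t\le 1$ and $\varepsilon_i^2\le 1/4$. This gives $|\Sigma_t^c|\le(6\varepsilon_i^2+4\varepsilon_i^2/t)\,2^{2n}\le 10\,\varepsilon_i^2 2^{2n}/t$. That is the $\varepsilon_i^2/t$ form the paper uses downstream, up to a constant that then propagates into the per-round bound. It also implies the statement as written whenever $\varepsilon_i\le 1/10$.
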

For $t>0$, define $ G_t$ as the subset of $\Sigma_t$ in the above lemma:
\begin{equation}
    G_t:=\{(x_i,y_i)\in \Sigma_t\mid  \norm{\rho^{x^*_iy^*_i}-\rho^{x'_iy'_i}}_1 \leq 8 \sqrt{t}. \}
\end{equation}
The set $\Sigma_{\varepsilon_i}$ defined in \eqref{eq:Sigma_eps} is recovered by setting $t=\varepsilon_i$ in $\Sigma_t$. Then, by \cref{lemma:existence_reference_state_close_to_other_states} and \cref{lemma:per x y t bound}, we have that 
\begin{equation}\label{eq:bound size G_t^c}
    \abs{G_t^c}\leq 2 \varepsilon_i^22^{2n}/t.
\end{equation}

\begin{lemma}\label{lemma:per_round_eps and t} If in a round $i$ the probability that either of the parties does not commit given that the other does is upper bounded by $\varepsilon_i^2$ (cf-\eqref{eq:conditional at most eps}), then, for every $t>0$,
\begin{equation}
    \Pr\!_{\mathrm{com}}[w_i=v_i]\leq p^*+8\sqrt{t}+2\varepsilon_i^2/t.
\end{equation}
\end{lemma}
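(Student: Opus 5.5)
The plan is to split the input pairs $(x_i,y_i)$ into the ``good'' set $G_t$ and its complement, and to bound each part separately. On $G_t$ the post-commitment states $\rho^{x_iy_i}$ are all $8\sqrt t$-close in trace distance to a single reference state $\rho^{x_i^*y_i^*}$. On $G_t^c$ we bound the success probability trivially by $1$. The size of $G_t^c$ is controlled by \eqref{eq:bound size G_t^c}, namely $\abs{G_t^c}\leq 2\varepsilon_i^2 2^{2n}/t$. This already accounts for the additive term $2\varepsilon_i^2/t$.

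Concretely, I will start from \eqref{eq:c=1 correct} and use \cref{lemma: instrument and channel} to rewrite $\widetilde{\mathcal I}^{x_iy_i}_1(\rho_i)=(\mathcal E_A^{x_i}\otimes\mathcal E_B^{y_i})(\rho^{x_iy_i})$, where the channels are local. This gives
\[
\Pr\!_{\mathrm{com}}[w_i=v_i]=\frac{1}{2^{2n}}\sum_{x_i,y_i}\tr{\Pi^{x_iy_i}_{VAB}(\mathcal E_A^{x_i}\otimes\mathcal E_B^{y_i})(\rho^{x_iy_i})}.
\]
For $(x_i,y_i)\in G_t$, I replace $\rho^{x_iy_i}$ by $\rho^{x^*_iy^*_i}$. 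Since $0\leq\Pi\leq\mathbb I$ and channels contract the trace norm, each such term changes by at most $\tfrac12\norm{\rho^{x_iy_i}-\rho^{x^*_iy^*_i}}_1\leq 4\sqrt t\leq 8\sqrt t$. For $(x_i,y_i)\notin G_t$, I instead add the nonnegative quantity $\tr{\Pi^{x_iy_i}(\mathcal E_A^{x_i}\otimes\mathcal E_B^{y_i})(\rho^{x^*_iy^*_i})}$ after bounding the original term by $1$. This yields
\[
\Pr\!_{\mathrm{com}}[w_i=v_i]\leq \frac{1}{2^{2n}}\sum_{x_i,y_i}\tr{\Pi^{x_iy_i}_{VAB}(\mathcal E_A^{x_i}\otimes\mathcal E_B^{y_i})(\rho^{x^*_iy^*_i})}+8\sqrt t+\frac{\abs{G_t^c}}{2^{2n}}.
\]

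The first term is exactly of the form \eqref{eq:original-success}. The fixed state $\rho^{x^*_iy^*_i}$ does not depend on the inputs and is prepared before $x_i,y_i$ are known, and it is followed by local channels $\mathcal E_A^{x_i}\otimes\mathcal E_B^{y_i}$ and POVMs. It is therefore a valid attack on \QPVBBf{}, and it is bounded by $p^*$ via \eqref{eq:original-success-bound}.

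The main obstacle is checking that this reduction stays within the security model. The reference state must respect the bounded-entanglement constraint. Also, the $V$ register of $\rho^{x^*_iy^*_i}$ must still be maximally entangled with the attackers' systems in the way the original game requires, even though the commitment POVMs act only on the attackers' side, so the reduced state on $V$ is unchanged. I also need the reference pair $(x^*_i,y^*_i)$ to exist, which \cref{lemma:existence_reference_state_close_to_other_states} provides with parameter $t$. Here I would follow \cite{allerstorfer2023makingexistingquantumposition}, noting that the BE$(q)$ assumption is imposed on the post-commitment states, so $\rho^{x^*_iy^*_i}$ qualifies. Combining the three contributions gives $p^*+8\sqrt t+2\varepsilon_i^2/t$.
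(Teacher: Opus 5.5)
Your argument is the paper's proof. You split the inputs into $G_t$ and $G_t^c$ and pay $\abs{G_t^c}/2^{2n}\le 2\varepsilon_i^2/t$ on the complement. On $G_t$ you swap $\rho^{x_iy_i}$ for the reference state $\rho^{x_i^*y_i^*}$ at a trace-distance cost. You then extend the reference-state sum to all inputs and bound it by $p^*$ via \eqref{eq:original-success}. Your use of $\tfrac12\norm{\cdot}_1$ with $0\le\Pi\le\mathbb I$ even gives $4\sqrt t$, where the paper uses $\norm{\Pi}_\infty\norm{\cdot}_1\le 8\sqrt t$.

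One justification in your last paragraph is wrong, though. Post-selection on the commitment outcome does \emph{not} preserve the reduced state on $V$, even though $\sqrt{M_A^{x_i}}\otimes\sqrt{M_B^{y_i}}$ acts only on the attackers' registers: conditioning on a measurement outcome steers $V$. Here is an example. Before the inputs arrive, Alice measures $Q$ in the computational basis and forwards the outcome to Bob, and both commit only on outcome $0$. The commitments are perfectly consistent, yet $\rho^{x_iy_i}_V=\ketbra{0}{0}$. So $\rho^{x_i^*y_i^*}$ followed by $\mathcal E_A^{x_i}\otimes\mathcal E_B^{y_i}$ is not literally an attack on \QPVBBf{}, whose $V$ marginal is maximally mixed.

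What the reduction actually needs is that $p^*$ bounds the \QPVBBf{} game for \emph{arbitrary} states on $V$ and the attackers' registers of the given local dimensions. That is how those bounds are proved (a monogamy-of-entanglement bound valid for all tripartite states, plus nets over all states). It is also what the paper tacitly uses, cf.\ its relaxation to arbitrary $\ket{\psi^c}$ in \cref{sec:appendix}.

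The bounded-entanglement condition is then satisfied for a more concrete reason than the one you give. The output dimensions of the local channels $\mathcal E_A^{x_i}\otimes\mathcal E_B^{y_i}$ do not depend on which input state they are applied to. Hence applying them to $\rho^{x_i^*y_i^*}$ still leaves each attacker with at most $q$ qubits.
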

\noindent For the purpose of the asymptotic analysis, we fix
\(
t = \left(\frac{\varepsilon_i^2}{2}\right)^{2/3},
\)
which yields the bound
\begin{equation}
    \Pr_{\mathrm{com}}[w_i=v_i]
    \leq
    p^* + 5\varepsilon_i^{2/3}.
\end{equation}

\begin{proof}
    From \eqref{eq:c=1 correct},
    \begin{equation}
    \begin{split}
         &\Pr\!_{\mathrm{com}}[w_i=v_i]
=
\frac{1}{2^{2n}}
\sum_{x_i,y_i}
\tr{
\Pi^{x_i y_i}
\,\widetilde{\mathcal{I}}^{x_i y_i}_1(\rho_i)
}=\frac{1}{2^{2n}}
\sum_{x_i,y_i}
\tr{
\Pi^{x_i y_i}
\,\mathcal E^{x_i y_i}_1(\rho_i^{x_iy_i})
}\\&
=\frac{1}{2^{2n}}
\sum_{(x_i,y_i)\in G_t}
\tr{
\Pi^{x_i y_i}
\,\mathcal E^{x_i y_i}_1(\rho_i^{x_iy_i})
}+\frac{1}{2^{2n}}
\sum_{(x_i,y_i)\in G_t^c}
\tr{
\Pi^{x_i y_i}
\,\mathcal E^{x_i y_i}_1(\rho_i^{x_iy_i})
}
\\&
\leq\frac{1}{2^{2n}}
\sum_{(x_i,y_i)\in G_t}
\tr{
\Pi^{x_i y_i}
\left(\mathcal E^{x_i y_i}_1(\rho_i^{x_iy_i})-\mathcal E^{x_i y_i}_1(\rho_i^{x_i^*y_i^*})+\mathcal E^{x_i y_i}_1(\rho_i^{x_i^*y_i^*})\right)
}+\frac{\abs{G_t^c}}{2^{2n}}.
    \end{split}
    \end{equation}
From \eqref{eq:bound size G_t^c}, we upper bound $\frac{\abs{G_t^c}}{2^{2n}}\leq 2\varepsilon_i^2/t$. On the other hand, for the first summand,
    \begin{equation}
        \begin{split}
            &\frac{1}{2^{2n}}
\sum_{(x_i,y_i)\in G_t}
\tr{
\Pi^{x_iy_i}
\left(\mathcal E^{x_i y_i}_1(\rho_i^{x_iy_i})-\mathcal E^{x_i y_i}_1(\rho_i^{x_i^*y_i^*})\right)
}+\frac{1}{2^{2n}}
\sum_{(x_i,y_i)\in G_t}
\tr{
\Pi^{x_i y_i}
\,\mathcal E^{x_i y_i}_1(\rho_i^{x_i^*y_i^*})
}
\\&\leq\frac{1}{2^{2n}}
\sum_{(x_i,y_i)\in G_t}
\tr{
\norm{\Pi^{x_i y_i}}_\infty
\norm{\mathcal E^{x_i y_i}_1(\rho_i^{x_iy_i})-\mathcal E^{x_i y_i}_1(\rho_i^{x_i^*y_i^*})}_1
}+\frac{1}{2^{2n}}
\sum_{x_i,y_i}
\tr{
\Pi^{x_i y_i}
\,\mathcal E^{x_i y_i}_1(\rho_i^{x_i^*y_i^*})
}
\\&\leq\frac{1}{2^{2n}}
\sum_{(x_i,y_i)\in G_t}
8\sqrt{t}+\frac{1}{2^{2n}}
\sum_{x_i,y_i}
\tr{
\Pi^{x_i y_i}
\,\mathcal E^{x_i y_i}_1(\rho_i^{x_i^*y_i^*})
}\leq 8\sqrt{t}+\Pr_{\mathrm{orig}}[w_i=v_i],
        \end{split}
    \end{equation}
    where we used that for $(x_i,y_i)\in G_t$, $\norm{\rho^{x^*_iy^*_i}-\rho^{x'_iy'_i}}_1 \leq 8 \sqrt{t}$ and $\abs{G_t}\leq 2^{2n}$ and \eqref{eq:original-success}.
\end{proof}

\end{appendix}
\end{document}